\newcolumntype{.}{D{.}{.}{-1}}
\newcolumntype{d}[1]{D{.}{.}{#1}}
\newcommand{\indep}{\mbox{$\perp\!\!\!\perp$}}
\begin{document}

\title{\LARGE Optimal Covariate Balancing Conditions in Propensity
  Score Estimation\thanks{Supported by NSF grants DMS-1854637,
    DMS-1712591, DMS-2053832, CAREER Award DMS-1941945, NIH grant
    R01-GM072611, and Sloan Grant 2020-13946. An earlier version of
    this paper was entitled, ``Improving Covariate Balancing
    Propensity Score: A Doubly Robust and Efficient Approach.''
   }} \author{\normalsize Jianqing Fan\thanks{
    Department of Operations Research and Financial Engineering,
    Princeton University}
  \and Kosuke Imai\thanks{Department of Government
    and Department of Statistics,
    Harvard University}
 \and Inbeom Lee\thanks{Department of Statistics and Data Science,
    Cornell University}    
  \and  Han Liu\thanks{
    Department of Electrical Engineering and Computer Science,  Northwestern University.}
  \and Yang Ning\thanks{Department of Statistics and Data Science,
    Cornell University}
  \and Xiaolin Yang\thanks{Amazon}
  }

  \maketitle


\begin{abstract}

  Inverse probability of treatment weighting (IPTW) is a popular
  method for estimating the average treatment effect (ATE).  However,
  empirical studies show that the IPTW estimators can be sensitive to
  the misspecification of the propensity score model.  To address this
  problem, researchers have proposed to estimate propensity score by
  directly optimizing the balance of pre-treatment covariates.  While
  these methods appear to empirically perform well, little is known
  about how the choice of balancing conditions affects their
  theoretical properties.  To fill this gap, we first characterize the
  asymptotic bias and efficiency of the IPTW estimator based on the
  Covariate Balancing Propensity Score (CBPS) methodology under local
  model misspecification.  Based on this analysis, we show how to
  optimally choose the covariate balancing functions and propose an
  optimal CBPS-based IPTW estimator. This estimator is doubly robust;
  it is consistent for the ATE if either the propensity score model or
  the outcome model is correct.  In addition, the proposed estimator
  is locally semiparametric efficient when both models are correctly
  specified.  To further relax the parametric assumptions, we extend
  our method by using a sieve estimation approach. We show that the
  resulting estimator is globally efficient under a set of much weaker
  assumptions and has a smaller asymptotic bias than the existing
  estimators. Finally, we evaluate the finite sample performance of
  the proposed estimators via simulation and empirical studies.  An
  open-source software package is available for implementing the
  proposed methods.

  \bigskip

  \noindent {\bf Key words:} Average treatment effect, causal
  inference, double robustness, model misspecification, semiparametric efficiency, sieve estimation

\end{abstract}


\section{Introduction}

Suppose that we have a random sample of $n$ units from a population of
interest.  For each unit $i$, we observe $(T_i,Y_i,\bX_i)$, where
$\bX_i\in \RR^d$ is a $d$-dimensional vector of pre-treatment
covariates, $T_i$ is a binary treatment variable, and $Y_i$ is an
outcome variable.  In particular, $T_i$ takes $1$ if unit $i$ receives
the treatment and is equal to $0$ if unit $i$ belongs to the control
group.  The observed outcome can be written as
$Y_i = Y_i(1)T_i+Y_i(0)(1-T_i)$, where $Y_i(1)$ and $Y_i(0)$ are the
potential outcomes under the treatment and control conditions,
respectively.  This notation implicitly requires the stable unit
treatment value assumption \citep{rubi:90}.  In addition, throughout
this paper, we assume the strong ignorability of the treatment
assignment \citep{rose:rubi:83},
\begin{equation}
  \{Y_i(1), Y_i(0)\} \ \indep \ T_i \mid \bX_i \quad {\rm and} \quad 0 \ < \ \PP(T_i = 1 \mid \bX_i) \ < \ 1.
\end{equation}

Next, we assume that the conditional mean functions of potential
outcomes exist and denote them by,
\begin{eqnarray}
  \EE(Y_i(0)\mid \bX_i) &=& K(\bX_i) \quad {\rm and} \quad
  \EE(Y_i(1)\mid \bX_i) \ = \ K(\bX_i)+L(\bX_i), \label{eq:CEFy}
\end{eqnarray}
for some functions $K(\cdot)$ and $L(\cdot)$, which represent the
conditional mean of the potential outcome under the control condition and
the conditional average treatment effect, respectively.  Under this
setting, we are interested in estimating the average treatment effect
(ATE),
\begin{equation}
  \mu \ = \ \EE(Y_i(1) - Y_i(0)) \ = \ \EE(L(\bX_i)). \label{eq:ATE}
\end{equation}

The propensity score is defined as the conditional probability of
treatment assignment \citep{rose:rubi:83},
\begin{eqnarray}
  \pi(\bX_i) & = & \PP(T_i = 1 \mid \bX_i).
\end{eqnarray}
In practice, since $\bX_i$ can be high dimensional, the propensity
score is usually parameterized by a model
$\pi_{\bbeta}(\bX_i)$ where $\bbeta$ is a $q$-dimensional vector of
parameters.  A popular choice  is the logistic regression model, i.e.,
$\pi_{\bbeta}(\bX_i) = \exp(\bX_i^\top \bbeta)/\{1 + \exp(\bX_i^\top
\bbeta)\}$.
Once the parameter $\bbeta$ is estimated (e.g., by the maximum
likelihood estimator $\hat{\bbeta}$), the Horvitz-Thompson estimator \citep{horv:thom:52},
which is based on the inverse probability of treatment weighting
(IPTW), can be used to obtain an estimate of the ATE,
\begin{eqnarray}
\hat{\mu}_{\hat\bbeta} & = & \frac{1}{n} \sum_{i=1}^{n}\left( \frac{T_iY_i}{\pi_{\hat\bbeta}(\bX_i)}-\frac{(1-T_i)Y_i}{1-\pi_{\hat\bbeta}(\bX_i)}\right).
\label{eq:HT}
\end{eqnarray}

However, it has been shown that the IPTW estimator with the known
propensity score does not attain the semiparametric efficiency bound
\citep{hahn1998role}.  A variety of efficient ATE estimators have been
proposed \citep[see e.g.,][among many others]{robins1994estimation,
  bang2005doubly, tan2006distributional,qin2007empirical,robins2007comment,
  cao2009improving,tan:10,van2010targeted,rotnitzky2012improved,han2013estimation,vermeulen2015bias}.
Despite the popularity of these methods, researchers have found that
in practice the estimators can be sensitive to the misspecification of
the propensity score model and the outcome model
\citep[e.g.,][]{MR2420462}. To overcome this problem, several
researchers have recently considered the estimation of the propensity
score by optimizing covariate balance rather than maximizing the
accuracy of predicting treatment assignment
\citep[e.g.,][]{hain:12,grah:pint:egel:12,imai:ratk:14,chan:yam:zhan:16,zubizarreta2015stable,zhao2017entropy,zhao2016covariate}.
In this paper, we focus on the Covariate Balancing Propensity Score
(CBPS) methodology \citep{imai:ratk:14}.  In spite of its simplicity,
several scholars independently found that the CBPS performs well in
practice \citep[e.g.,][]{wyss:etal:14,frol:hube:wies:15}.  The method
can also be extended for the analysis of longitudinal data
\citep{imai:ratk:15}, general treatment regimes
\citep{fong:hazl:imai:18} and high-dimensional propensity score
\citep{ning2018robust}.  In this paper, we conduct a theoretical
investigation of the CBPS.  Given the similarity between the CBPS and
some other methods, our theoretical analysis may also provide new
insights for understanding other covariate balancing methods.  

The CBPS method estimates the parameters of the propensity score
model, $\bbeta$, by solving the following $m$-dimensional estimating equation,
\begin{equation}
  \bar \bg_{\bbeta}(\bT,\bX) \ = \ \frac{1}{n}\sum_{i=1}^n \bg_{\bbeta}(T_i,\bX_i) \ = \ 0
\quad {\rm where} \quad \bg_{\bbeta}(T_i,\bX_i) \ = \ \left(\frac{T_i}{\pi_{\bbeta}(\bX_i)}-\frac{1-T_i}{1-\pi_{\bbeta}(\bX_i)} \right)\fb(\bX_i),
\label{eq:cvb}
\end{equation}
for some covariate balancing function
$\fb(\cdot):\RR^d\rightarrow \RR^m$ when the number of equations $m$
is equal to the number of parameters $q$. \cite{imai:ratk:14} point
out that the common practice of fitting a logistic model is equivalent
to balancing the score function with
$\fb(\bX_i) = \pi_{\bbeta}^{'}(\bX_i) = \partial
\pi_{\bbeta}(\bX_i)/\partial \bbeta$.  They find that choosing
$\fb(\bX_i) = \bX_i$, which balances the first moment between the
treatment and control groups, significantly reduces the bias of the
estimated ATE.  Some researchers also include higher moments and/or
interactions, e.g., $\fb(\bX_i) = (\bX_i\ \bX_i^2)$, in their
applications. This guarantees that the treatment and control groups
have an identical sample mean of $\fb(\bX_i)$ after weighting by the
estimated propensity score.

When $m > q$, then $\hat{\bbeta}$ can be estimated by optimizing the
covariate balance by the generalized method of moments (GMM) method
\citep{hans:82}:
\begin{eqnarray}
\hat \bbeta & = & \argmin_{\bbeta\in\Theta} \ \bar \bg_{\bbeta}(\bT,\bX)^\top \ \widehat\Wb \ \bar \bg_{\bbeta}(\bT,\bX),
\label{eq:gmm}
\end{eqnarray}
where $\Theta$ is the parameter space for $\bbeta$ in $\RR^q$ and
$\widehat\Wb$ is an $(m\times m)$ positive definite weighting matrix,
which we assume in this paper does not depend on $\bbeta$.
Alternatively, the empirical likelihood method can be used
\citep{owen:01}.  Once the estimate of $\bbeta$ is obtained, we can
estimate the ATE using the IPTW estimator in equation~\eqref{eq:HT}.

The main idea of the CBPS and other related methods is to directly
optimize the balance of covariates between the treatment and control
groups so that even when the propensity score model is misspecified we
still obtain a reasonable balance of the covariates between the
treatment and control groups.  However, one open question remains in
this literature: How shall we choose the covariate balancing function
$\fb(\bX_i)$? In particular, if the propensity score model is
misspecified, this problem becomes even more important.  

This paper makes two main contributions. First, we conduct a thorough
theoretical study of the CBPS-based IPTW estimator with an arbitrary
covariate balancing function $\fb(\cdot)$. We characterize the
asymptotic bias and efficiency of this estimator under locally
misspecified propensity score models.  Based on these findings, we
show how to optimally choose the covariate balancing function
$\fb(\bX_i)$ for the CBPS methodology
(Section~\ref{sec:model.misspecification}).

However, the optimal choice of $\fb(\bX_i)$ requires some initial
estimators for the unknown propensity score model and the outcome models. This limits the application of
the CBPS method with the optimal $\fb(\bX_i)$ in practice.  Our second contribution is to 
overcome this problem by developing an optimal CBPS method that does
not require an initial estimator. We show that the IPTW estimator
based on the optimal CBPS (oCBPS) method retains the double
robustness property. The proposed estimator is
semiparametrically efficient when both the propensity score and
outcome models are correctly specified. More importantly, we show that the rate of convergence of the 
proposed oCBPS estimator is faster than the augmented inverse probability weighted (AIPW) estimator \citep{robins1994estimation} under locally misspecified models.  (Section \ref{sec:proposed.methodology}).


To relax the parametric assumptions on the propensity score model and
the outcome model, we further extend the proposed oCBPS method to the
nonparametric settings, by using a sieve estimation
approach \citep{newey1997convergence,chen2007large}. In
Section~\ref{secnon}, we establish the semiparametric efficiency
result for the IPTW estimator under the nonparametric setting.  Compared to the existing nonparametric
propensity score methods
\citep[e.g.,][]{hira:imbe:ridd:03,chan:yam:zhan:16}, our theoretical
results require weaker smoothness assumptions. For instance, the theories in \cite{hira:imbe:ridd:03},
\cite{imbens2005mean} and \cite{chan:yam:zhan:16} require $s/d>7$,
$s/d>9$ and $s/d>13$, respectively, where $s$ is the smoothness
parameter of the corresponding function class and $d=\dim(\bX_i)$.  In
comparison, we only require $s/d>3/4$, which is significantly weaker
than the existing conditions. To prove this result, we exploit the
matrix Bernstein's concentration inequalities
\citep{tropp2015introduction} and a Bernstein-type concentration
inequality for U-statistics \citep{arcones1995bernstein}.
Moreover, we show that our estimator has smaller asymptotic bias than
the usual nonparametric method
\citep[e.g.,][]{hira:imbe:ridd:03}. Therefore, the asymptotic
normality result is expected to be more accurate in practice (Section \ref{secnon}). The proof of the theoretical results are deferred to the supplementary material. 

An open-source {\sf R} software package {\sf CBPS} is available for
implementing the proposed estimators \citep{fong:etal:18}.  In Section \ref{sec_empirical}, we conduct
simulation studies to evaluate the performance of the proposed
methodology and show that the oCBPS methodology indeed performs
better than the standard CBPS methodology in a variety of
settings. Finally, we conduct an empirical study using a canonical
application in labor economics.  We show that the oCBPS method is
able to yield estimates closer to the experimental benchmark when
compared to the standard CBPS method. 

\section{CBPS under Locally Misspecified Propensity Score Models}
\label{sec:model.misspecification}

Our theoretical investigation starts by examining the consequences of
model misspecification for the CBPS-based IPTW estimator.  While
researchers can avoid gross model misspecification through careful
model fitting, in practice it is often difficult to nail down the
exact specification.  The prominent simulation study of
\cite{MR2420462}, for example, is designed to illustrate this
phenomenon.  We therefore consider the consequences of local
misspecification of propensity score model in the general framework of
\cite{copa:eguc:05}.  In particular, we assume that the true
propensity score $\pi(\bX_i)$ is related to the working model
$\pi_{\bbeta}(\bX_i)$ through the exponential tilt for some
$\bbeta^*$,
\begin{eqnarray}
  \pi(\bX_i)&=& \pi_{\bbeta^*}(\bX_i)\exp(\xi\ u(\bX_i;\bbeta^*)),
\label{eq:misp}
\end{eqnarray}
where $u(\bX_i;\bbeta^*)$ is a function determining the direction of
misspecification and $\xi\in\RR$ represents the magnitude of
misspecification. We assume $\xi=o(1)$ as $n\rightarrow\infty$ so that
the true propensity score $\pi(\bX_i)$ is in a local neighborhood of
the working model $\pi_{\bbeta^*}(\bX_i)$.  Intuitively, since
$\pi(\bX_i)\approx \pi_{\bbeta^*}(\bX_i)$ holds, we can interpret
$\bbeta^*$ as the approximate true value of $\bbeta$. The main advantage of this exponential tilt approach is that $\pi(\bX)$ is always nonnegative, while it does not guarantee $\pi(\bX)\leq 1$. However, with $\xi=o(1)$ and Assumption \ref{reg_mis} (i.e., $|u(\bX; \bbeta^*)|\leq C$ almost surely for some constant $C>0$), we can show that $\pi(\bX)\leq 1$ holds with probability tending to 1. Finally, we note that under suitable regularity conditions model (\ref{eq:misp}) can be approximated by $\pi(\bX)=\pi_{\bbeta^*}(\bX)+\xi \bar u(\bX;\bbeta^*)+O_p(\xi^2)$, for some $\bar u(\bX;\bbeta^*)$. This provides an asymptotically equivalent specification of the locally missepecified model. To keep our presentation focused, in this section we assume model (\ref{eq:misp}) holds.

In the following, we will establish the asymptotic normality of the
CBPS-based IPTW estimator in \eqref{eq:HT} under this local model
misspecification framework.

To derive the asymptotic bias and variance, let us define some necessary quantities,
\begin{eqnarray}
B&=& \left\{\EE\left[\frac{u(\bX_i;\bbeta^*)\{K(\bX_i)+L(\bX_i)(1-\pi_{\bbeta^*}(\bX_i))\}}{1-\pi_{\bbeta^*}(\bX_i)}\right]\right. \nonumber \\
&& \hspace{.5in} \left. +\bH_y^*(\bH_{\fb}^{*\top}\Wb^*\bH_{\fb}^*)^{-1}\bH_{\fb}^{*\top}\Wb^*\EE\left(\frac{u(\bX_i;\bbeta^*)\fb(\bX_i)}{1-\pi_{\bbeta^*}(\bX_i)}\right)\right\},
\label{eq:biasmis}
\end{eqnarray}
where $K(\bX_i)$ and $L(\bX_i)$ are defined in (\ref{eq:CEFy}), $\Wb^*$ is the limiting value of $\widehat\Wb$ in \eqref{eq:gmm}, and
\begin{eqnarray*}
\bH_y^*  & = &  -\EE\left(\frac{K(\bX_i)+(1-\pi_{\bbeta^*}(\bX_i))L(\bX_i)}{\pi_{\bbeta^*}(\bX_i) (1-\pi_{\bbeta^*}(\bX_i))}\cdot \frac{\partial \pi_{\bbeta^*}(\bX_i)}{\partial\bbeta}\right), \\
\bH_{\fb}^* & = & -\EE \left(\frac{\fb(\bX_i)}{\pi_{\bbeta^*}(\bX_i)(1-\pi_{\bbeta^*}(\bX_i))}\left(\frac{\partial \pi_{\bbeta^*}(\bX_i)}{\partial \bbeta}\right)^{\top}\right).
\end{eqnarray*}
Furthermore, denote $\mu_{\bbeta^*}(T_i, Y_i, \bX_i)= \frac{T_iY_i}{\pi_{\bbeta^*}(\bX_i)}-\frac{(1-T_i)Y_i}{1-\pi_{\bbeta^*}(\bX_i)}$, 
\begin{equation}\label{eqHSigma}
\bar \bH^* = (1, \bH_y^{*\top})~~\textrm{and}~~\bSigma = \left(\begin{array}{clcr} \Sigma_{\mu} & \bSigma_{\mu\bbeta}^\top \\ \bSigma_{\mu \bbeta} & \bSigma_{\bbeta}\end{array}\right),
\end{equation}
where
\begin{align*}
& \Sigma_{\mu}  =   \Var\bigl(\mu_{\bbeta^*}(T_i, Y_i, \bX_i) \bigr) \ = \ \mathbb{E}\biggl(\frac{Y_i(1)^2}{\pi_{\bbeta^*}(\bX_i)} + \frac{Y_i(0)^2}{1-\pi_{\bbeta}^*(\bX_i)}-(\mathbb{E}(Y_i(1))-\mathbb{E}(Y_i(0)))^2\biggr), \\
& \bSigma_{\bbeta}  =  (\bH_{\fb}^{*\top}\Wb^*\bH_{\fb}^*)^{-1}\bH_{\fb}^{*\top}\Wb^*
\Var(\bg_{\bbeta^*}(T_i, \bX_i)) \Wb^* \bH_{\fb}^{*} (\bH_{\fb}^{*\top}\Wb^*\bH_{\fb}^*)^{-1}, \\
& \bSigma_{\mu \bbeta}   =
-(\bH_{\fb}^{*\top}\Wb^*\bH_{\fb}^*)^{-1}\bH_{\fb}^{*\top}\Wb^*
\Cov(\mu_{\bbeta^*}(T_i, Y_i,\bX_i),\bg_{\bbeta^*}(T_i, \bX_i)),
\end{align*}
in which $\bg_{\bbeta^*}(T_i, \bX_i)$ is defined in \eqref{eq:cvb}.
Under the model in equation \eqref{eq:CEFy}, we have
\begin{align*}
 & \Var(\bg_{\bbeta^*}(T_i, \bX_i))  =  \mathbb{E}\left(\frac{\fb(\bX_i)\fb(\bX_i)^\top}{\pi_{\bbeta^*}(\bX_i)(1-\pi_{\bbeta^*}(\bX_i))}\right), \\
&  \Cov(\mu_{\bbeta^*}(T_i, Y_i, \bX_i),\bg_{\bbeta^*}(T_i, \bX_i))  =  \mathbb{E}\left[\frac{\{K(\bX_i) + (1-\pi_{\bbeta^*}(\bX_i)) L(\bX_i)\}\fb(\bX_i)}{\pi_{\bbeta^*}(\bX_i) (1-\pi_{\bbeta^*}(\bX_i))}\right].
\end{align*}
The following theorem establishes the asymptotic normality of the
CBPS-based IPTW estimator under the local misspecification of the
propensity score model.
\begin{theorem}[\em Asymptotic Distribution under Local
  Misspecification of the Propensity Score Model]
	\label{th:asymnormal_mis} 
	If the propensity score model is locally misspecified as
in \eqref{eq:misp} with $\xi=n^{-1/2}$ and Assumption \ref{reg_mis}
	in Appendix \ref{appendix::A1} holds, the estimator $\hat\mu_{\hat\bbeta}$ in \eqref{eq:HT},
	where $\hat\bbeta$ is obtained by GMM \eqref{eq:gmm}, has the following
	asymptotic distribution 
	\begin{eqnarray}
		\sqrt{n}(\hat \mu_{\hat\bbeta} - \mu) & \stackrel{d}{\longrightarrow} &{N}(B,  \  \bar \bH^{*\top} \bSigma \bar \bH^*),
	\end{eqnarray}
	where $B$ is the asymptotic bias given in equation \eqref{eq:biasmis} and the asymptotic variance $\bar \bH^{*\top} \bSigma \bar \bH^*$ is obtained from (\ref{eqHSigma}). 
\end{theorem}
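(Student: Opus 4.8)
The plan is a two-step $Z$-/GMM-estimation expansion, carried out along the drifting sequence of sampling distributions indexed by $\xi=n^{-1/2}$. First I would analyze $\hat\bbeta$. Since $\xi=o(1)$, a first-order expansion of $\exp(\xi\,u(\bX_i;\bbeta^*))$ in \eqref{eq:misp} gives $\EE[\bg_{\bbeta^*}(T_i,\bX_i)]=\xi\,\EE\!\left[\frac{u(\bX_i;\bbeta^*)\fb(\bX_i)}{1-\pi_{\bbeta^*}(\bX_i)}\right]+O(\xi^2)=o(1)$, so the limiting GMM criterion is, by the identification condition in Assumption~\ref{reg_mis}, uniquely minimized at $\bbeta^*$, and uniform convergence of $\bar\bg_\bbeta(\bT,\bX)^\top\widehat\Wb\,\bar\bg_\bbeta(\bT,\bX)$ yields $\hat\bbeta\xrightarrow{p}\bbeta^*$. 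Differentiating the first-order condition $(\partial\bar\bg_{\hat\bbeta}/\partial\bbeta^\top)^\top\widehat\Wb\,\bar\bg_{\hat\bbeta}(\bT,\bX)=0$ --- here the assumption that $\widehat\Wb$ does not depend on $\bbeta$ removes one term --- Taylor-expanding $\bar\bg_{\hat\bbeta}$ about $\bbeta^*$, and using a uniform law of large numbers so that $\partial\bar\bg_\bbeta/\partial\bbeta^\top\xrightarrow{p}\bH_{\fb}^*$ in a neighborhood of $\bbeta^*$ and $\widehat\Wb\xrightarrow{p}\Wb^*$, I obtain the linear representation $\sqrt{n}(\hat\bbeta-\bbeta^*)=-(\bH_{\fb}^{*\top}\Wb^*\bH_{\fb}^*)^{-1}\bH_{\fb}^{*\top}\Wb^*\,\sqrt{n}\,\bar\bg_{\bbeta^*}(\bT,\bX)+o_p(1)$, the $O_p(\|\hat\bbeta-\bbeta^*\|^2)$ remainder being negligible after multiplying by $\sqrt{n}$.

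Next I would expand $\hat\mu_{\hat\bbeta}$ in \eqref{eq:HT} about $\bbeta^*$. Using that the sample gradient of $\hat\mu_\bbeta$ at $\bbeta^*$ converges in probability to $\bH_y^*$ --- a ULLN together with $\EE[T_iY_i\mid\bX_i]=\pi(\bX_i)\{K(\bX_i)+L(\bX_i)\}$ and $\EE[(1-T_i)Y_i\mid\bX_i]=(1-\pi(\bX_i))K(\bX_i)$, both by ignorability, and $\pi(\bX_i)\to\pi_{\bbeta^*}(\bX_i)$ --- together with the representation above, I get
\[
\sqrt{n}(\hat\mu_{\hat\bbeta}-\mu)\;=\;\bar\bH^{*\top}\begin{pmatrix}\sqrt{n}(\hat\mu_{\bbeta^*}-\mu)\\ \sqrt{n}(\hat\bbeta-\bbeta^*)\end{pmatrix}+o_p(1),
\]
where $\hat\mu_{\bbeta^*}=n^{-1}\sum_{i=1}^n\mu_{\bbeta^*}(T_i,Y_i,\bX_i)$; substituting the first-step representation for $\sqrt{n}(\hat\bbeta-\bbeta^*)$ expresses the right-hand side as a single normalized i.i.d.\ sum up to a deterministic bias and an $o_p(1)$ remainder. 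It therefore remains to find the joint limiting law of $\bigl(\sqrt{n}(\hat\mu_{\bbeta^*}-\mu),\ \sqrt{n}\,\bar\bg_{\bbeta^*}(\bT,\bX)\bigr)$ and apply the continuous mapping theorem with the contraction vector $\bar\bH^*$.

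Because both coordinates are normalized sums of i.i.d.\ terms whose common distribution varies with $n$, I would invoke a Lindeberg (or Lyapunov) CLT for triangular arrays. The limiting covariance is obtained by evaluating the second moments at the center model $\pi_{\bbeta^*}$, which is legitimate since $\xi\to0$; this reproduces $\bSigma$ in \eqref{eqHSigma} --- in particular $\Var(\mu_{\bbeta^*}(T_i,Y_i,\bX_i))\to\Sigma_\mu$ (using $T_i(1-T_i)=0$ to drop the cross term), $\Var(\bg_{\bbeta^*}(T_i,\bX_i))\to\EE\bigl[\fb(\bX_i)\fb(\bX_i)^\top/\{\pi_{\bbeta^*}(\bX_i)(1-\pi_{\bbeta^*}(\bX_i))\}\bigr]$, and the cross block inherits the factor $-(\bH_{\fb}^{*\top}\Wb^*\bH_{\fb}^*)^{-1}\bH_{\fb}^{*\top}\Wb^*$ from the first step to become $\bSigma_{\mu\bbeta}$, so the limiting variance equals $\bar\bH^{*\top}\bSigma\bar\bH^*$. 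For the mean, the expansions $\EE[\hat\mu_{\bbeta^*}]-\mu=\xi\,\EE\!\left[\frac{u(\bX_i;\bbeta^*)\{K(\bX_i)+L(\bX_i)(1-\pi_{\bbeta^*}(\bX_i))\}}{1-\pi_{\bbeta^*}(\bX_i)}\right]+O(\xi^2)$ and $\EE[\bg_{\bbeta^*}(T_i,\bX_i)]=\xi\,\EE[u(\bX_i;\bbeta^*)\fb(\bX_i)/(1-\pi_{\bbeta^*}(\bX_i))]+O(\xi^2)$, combined with $\sqrt{n}\,\xi=1$, show that $\bar\bH^{*\top}$ applied to the limiting mean vector equals $B$ in \eqref{eq:biasmis}. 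Together these deliver $\sqrt{n}(\hat\mu_{\hat\bbeta}-\mu)\xrightarrow{d}N(B,\ \bar\bH^{*\top}\bSigma\bar\bH^*)$.

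The substantive difficulty is not the linear algebra but running every limit uniformly in $n$ along the drifting model: one must verify the Lindeberg condition and the uniform laws of large numbers with envelopes that do not depend on $n$ --- which is exactly what Assumption~\ref{reg_mis} (boundedness $|u(\bX_i;\bbeta^*)|\le C$ almost surely, overlap, and moment conditions) provides --- and one must track each $O(\xi)$ term so that precisely the $\sqrt{n}\,\xi=1$ contributions survive in the bias while the $O(\xi^2)=O(n^{-1})$ remainders are swept into $o_p(1)$. A secondary point is that $\hat\bbeta$ is correlated with the direct IPTW term $\hat\mu_{\bbeta^*}$, so a genuinely joint --- not merely marginal --- CLT is required; this is why the variance in the statement is the quadratic form $\bar\bH^{*\top}\bSigma\bar\bH^*$ with a nonvanishing cross term $\bSigma_{\mu\bbeta}$ rather than a sum of two separate variances.
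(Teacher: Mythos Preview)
Your proposal is correct and follows essentially the same route as the paper: a standard two-step GMM linearization of $\hat\bbeta$, a delta-method expansion of $\hat\mu_{\hat\bbeta}$, and a joint CLT for $(\mu_{\bbeta^*}(T_i,Y_i,\bX_i),\bg_{\bbeta^*}(T_i,\bX_i))$ along the drifting sequence, with the $O(\xi)$ means surviving as the bias $B$ and the covariance at the center model giving $\bar\bH^{*\top}\bSigma\bar\bH^*$. The only cosmetic difference is that the paper first centers at the population minimizer $\bbeta^o$ (showing $\bbeta^o-\bbeta^*=-\xi(\bH_\fb^{*\top}\Wb^*\bH_\fb^*)^{-1}\bH_\fb^{*\top}\Wb^*\EE[u_i^*\fb(\bX_i)/(1-\pi_i^*)]+O(\xi^2)$ and then $\EE(D_i)=Bn^{-1/2}+o(n^{-1/2})$ with $D_i$ evaluated at $\pi_i^o$), whereas you expand directly around $\bbeta^*$ and absorb the same $O(\xi)$ shift into the nonzero mean of $\sqrt{n}\,\bar\bg_{\bbeta^*}$; the two bookkeepings are equivalent and yield identical $B$.
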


The theorem shows that the first order asymptotic bias of
$\hat{\mu}_{\hat\bbeta}$ is given by $B$ under local model
misspecification.  In particular, this bias term implicitly depends on
the covariate balancing function $\fb(\cdot)$.  Thus, we consider how
to choose $\fb(\cdot)$ such that the first order bias $|B|$ is
minimized.  While at the first glance the expression of $B$ appears to
be mathematically intractable, the next corollary shows that any
$\fb(\bX)$ satisfying (\ref{eqoptimal}) can eliminate the first order
bias, $B=0$.

\begin{corollary}
\label{th:localbias}
Suppose that the covariate balancing function $\fb(\bX)$ satisfies the following condition: there exits some  $\balpha \in \RR^m$ such that
\begin{equation}\label{eqoptimal}
\balpha^\top \fb(\bX_i) = \pi_{\bbeta^*}(\bX_i) \EE(Y_i(0) \mid
\bX_i) + (1-\pi_{\bbeta^*}(\bX_i)) \EE(Y_i(1) \mid \bX_i).
\end{equation}
In addition, assume that the dimension of $\fb(\bX_i)$ is equal to the
number of parameters, i.e., $m=q$.  Then, under the conditions in
Theorem~\ref{th:asymnormal_mis}, the asymptotic bias of the IPTW
estimator $\hat\mu_{\hat \bbeta}$ is 0, i.e., $B=0$.
\end{corollary}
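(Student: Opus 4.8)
The plan is to substitute the balancing condition \eqref{eqoptimal} directly into the closed-form bias $B$ from \eqref{eq:biasmis} and show that its two summands cancel. Beyond Theorem~\ref{th:asymnormal_mis} (which supplies the expression for $B$), the argument needs only (i) a rewriting of the right-hand side of \eqref{eqoptimal} and (ii) the standard fact that in the just-identified case $m=q$ the GMM weighting matrix $\Wb^*$ drops out of the relevant projection.

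First I would rewrite the right-hand side of \eqref{eqoptimal}. By \eqref{eq:CEFy}, $\EE(Y_i(0)\mid\bX_i)=K(\bX_i)$ and $\EE(Y_i(1)\mid\bX_i)=K(\bX_i)+L(\bX_i)$, so
\[
\pi_{\bbeta^*}(\bX_i)\,\EE(Y_i(0)\mid\bX_i)+(1-\pi_{\bbeta^*}(\bX_i))\,\EE(Y_i(1)\mid\bX_i)=K(\bX_i)+(1-\pi_{\bbeta^*}(\bX_i))L(\bX_i)=:m(\bX_i).
\]
The key observation is that $m(\bX_i)$ is at once the factor multiplying $u(\bX_i;\bbeta^*)/\{1-\pi_{\bbeta^*}(\bX_i)\}$ inside the first expectation of $B$ and the factor multiplying $\{\pi_{\bbeta^*}(\bX_i)(1-\pi_{\bbeta^*}(\bX_i))\}^{-1}\,\partial\pi_{\bbeta^*}(\bX_i)/\partial\bbeta$ in the definition of $\bH_y^*$. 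Thus condition \eqref{eqoptimal} is equivalent to $m(\bX_i)=\balpha^\top\fb(\bX_i)$ almost surely for the given $\balpha\in\RR^m$.

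Next I would propagate $m(\bX_i)=\balpha^\top\fb(\bX_i)$ through $B$. Pulling $\balpha^\top$ out of the expectations gives $\EE\{u(\bX_i;\bbeta^*)m(\bX_i)/(1-\pi_{\bbeta^*}(\bX_i))\}=\balpha^\top\EE\{u(\bX_i;\bbeta^*)\fb(\bX_i)/(1-\pi_{\bbeta^*}(\bX_i))\}$, and comparing the definitions of $\bH_y^*$ and $\bH_{\fb}^*$ after the substitution gives $\bH_y^{*\top}=\balpha^\top\bH_{\fb}^*$. Then I would invoke $m=q$: under the regularity conditions of Theorem~\ref{th:asymnormal_mis}, $\bH_{\fb}^*$ is a nonsingular $q\times q$ matrix and $\Wb^*$ is positive definite, so $(\bH_{\fb}^{*\top}\Wb^*\bH_{\fb}^*)^{-1}=\bH_{\fb}^{*-1}\Wb^{*-1}(\bH_{\fb}^{*\top})^{-1}$ and hence
\[
\bH_{\fb}^*\,(\bH_{\fb}^{*\top}\Wb^*\bH_{\fb}^*)^{-1}\,\bH_{\fb}^{*\top}\,\Wb^*=I_q .
\]
Substituting these into \eqref{eq:biasmis}, the second summand of $B$ collapses to $\balpha^\top\EE\{u(\bX_i;\bbeta^*)\fb(\bX_i)/(1-\pi_{\bbeta^*}(\bX_i))\}$, which exactly cancels the first summand; therefore $B=0$.

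The only point requiring real care is the bookkeeping: one must verify that a single vector $\balpha$ simultaneously represents the numerator in the first term of $B$ and the weight function inside $\bH_y^*$ — both being $m(\bX_i)$ — and keep consistent track of transpose/row-versus-column conventions for $\bH_y^*$ and of the signs with which the two summands enter $B$, so that the cancellation is exact. There is no analytic obstacle: once $m=q$ is used, the computation is elementary linear algebra and the weighting matrix $\Wb^*$ plays no role.
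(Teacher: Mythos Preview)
Your proposal is correct and follows essentially the same route as the paper's own proof: rewrite the right-hand side of \eqref{eqoptimal} as $K(\bX_i)+(1-\pi_{\bbeta^*}(\bX_i))L(\bX_i)$, use $m=q$ to collapse the GMM projection $(\bH_{\fb}^{*\top}\Wb^*\bH_{\fb}^*)^{-1}\bH_{\fb}^{*\top}\Wb^*$ to $\bH_{\fb}^{*-1}$, and observe that the balancing condition yields $\bH_y^*=\balpha^\top\bH_{\fb}^*$ so that the two pieces of $B$ in \eqref{eq:biasmis} reduce to the same scalar and cancel. Your closing caveat about sign and transpose bookkeeping is exactly the right thing to watch, and is the only place the paper's own argument requires care as well.
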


Intuitively, the above result can be viewed as a ``local" version of robustness of IPTW with
respect to the misspecification of the propensity score model. The form of $\fb(\bX_i)$ in (\ref{eqoptimal}) implies that when
balancing covariates, for any given unit we should give a greater
weight to the determinants of the mean potential outcome that is less
likely to be realized.  For example, if a unit is less likely to be
treated, then it is more important to balance the covariates that
influence the mean potential outcome under the treatment condition.
In the following, we focus on the asymptotic variance of
$\hat \mu_{\hat\bbeta}$ in
Theorem~\ref{th:asymnormal_mis}. Interestingly, we can show that the
same choice of $\fb(\bX_i)$ in (\ref{eqoptimal}) minimizes the
asymptotic variance.

\begin{corollary}
\label{th:efficiency}
Under the same conditions in Corollary~\ref{th:localbias}, the
asymptotic variance of $\hat\mu_{\hat \bbeta}$ is minimized by any
covariate balancing function $\fb(\bX_i)$ which satisfies
(\ref{eqoptimal}). In this case, the CBPS-based IPTW estimator
$\hat\mu_{\hat \bbeta}$ attains the semiparametric asymptotic variance
bound in Theorem 1 of \cite{hahn1998role}, i.e.,
\begin{equation}\label{eqinforbound1}
V_{\textrm{opt}}=\EE\left[\frac{\Var(Y_i(1)\mid\bX_i)}{\pi(\bX_i)}+\frac{\Var(Y_i(0)\mid\bX_i)}{1-\pi(\bX_i)}+\{L(\bX_i)-\mu\}^2\right].
\end{equation}
\end{corollary}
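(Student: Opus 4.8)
The plan is to work directly from the asymptotic variance $\bar\bH^{*\top}\bSigma\bar\bH^*$ delivered by Theorem~\ref{th:asymnormal_mis} and show that, over all $\fb$ satisfying \eqref{eqoptimal}, this quantity equals $V_{\mathrm{opt}}$, and that $V_{\mathrm{opt}}$ is a lower bound for it over \emph{all} admissible $\fb$. Since \cite{hahn1998role} already establishes that $V_{\mathrm{opt}}$ is the semiparametric efficiency bound for \emph{any} regular estimator of $\mu$ (and $\xi=n^{-1/2}$ means the limiting experiment is that of the correctly specified model $\pi_{\bbeta^*}$), the lower bound is automatic: the CBPS-based IPTW estimator is regular and asymptotically linear under the conditions of Theorem~\ref{th:asymnormal_mis}, so its asymptotic variance cannot fall below $V_{\mathrm{opt}}$. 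Hence the entire content of the corollary is the \emph{achievability} direction: show that $\balpha^\top\fb(\bX_i)=\pi_{\bbeta^*}(\bX_i)K(\bX_i)+(1-\pi_{\bbeta^*}(\bX_i))(K(\bX_i)+L(\bX_i))$ forces $\bar\bH^{*\top}\bSigma\bar\bH^* = V_{\mathrm{opt}}$.

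The key algebraic step is the following. Write $R(\bX_i):=\pi_{\bbeta^*}(\bX_i)K(\bX_i)+(1-\pi_{\bbeta^*}(\bX_i))(K(\bX_i)+L(\bX_i)) = K(\bX_i)+(1-\pi_{\bbeta^*}(\bX_i))L(\bX_i)$, so that condition \eqref{eqoptimal} reads $\balpha^\top\fb(\bX_i)=R(\bX_i)$. Inspecting the formulas just above the theorem, I observe that $R(\bX_i)$ is exactly the function appearing in $\bH_y^*$, in the numerator of $\Cov(\mu_{\bbeta^*},\bg_{\bbeta^*})$, and in the bias term $B$. Therefore, under \eqref{eqoptimal} we get the identities $\bH_y^* = \bH_{\fb}^{*\top}\balpha$ (replace $R$ by $\balpha^\top\fb$ inside the expectation defining $\bH_y^*$ and compare with $\bH_{\fb}^*$), and $\Cov(\mu_{\bbeta^*},\bg_{\bbeta^*}) = \mathbb{E}\bigl[\fb(\bX_i)\fb(\bX_i)^\top/\{\pi_{\bbeta^*}(1-\pi_{\bbeta^*})\}\bigr]\,\balpha = \Var(\bg_{\bbeta^*})\,\balpha$. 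Plugging these into $\bSigma_{\bbeta}$, $\bSigma_{\mu\bbeta}$ and then into $\bar\bH^{*\top}\bSigma\bar\bH^* = \Sigma_\mu + 2\bH_y^{*\top}\bSigma_{\mu\bbeta} + \bH_y^{*\top}\bSigma_{\bbeta}\bH_y^*$, the weighting matrix $\Wb^*$ and all the sandwich factors $(\bH_{\fb}^{*\top}\Wb^*\bH_{\fb}^*)^{-1}$ cancel (this is the standard phenomenon that GMM with a just-identified or optimally-balanced moment is insensitive to $\Wb^*$), leaving $\bar\bH^{*\top}\bSigma\bar\bH^* = \Sigma_\mu - \balpha^\top\Var(\bg_{\bbeta^*})\,\balpha$, i.e.\ $\Sigma_\mu - \mathbb{E}\bigl[R(\bX_i)^2/\{\pi_{\bbeta^*}(1-\pi_{\bbeta^*})\}\bigr]$.

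The final step is a direct expansion of $\Sigma_\mu$ via $Y_i(1)^2 = \Var(Y_i(1)\mid\bX_i)+ (K+L)^2$ and $Y_i(0)^2 = \Var(Y_i(0)\mid\bX_i)+K^2$, combined with the variance decomposition for $\mathbb{E}(L(\bX_i))=\mu$, and the pointwise identity
\[
\frac{(K+L)^2}{\pi_{\bbeta^*}} + \frac{K^2}{1-\pi_{\bbeta^*}} - \frac{R^2}{\pi_{\bbeta^*}(1-\pi_{\bbeta^*})} = L^2,
\]
which one checks by clearing denominators: the left side times $\pi_{\bbeta^*}(1-\pi_{\bbeta^*})$ is $(1-\pi_{\bbeta^*})(K+L)^2 + \pi_{\bbeta^*}K^2 - (K+(1-\pi_{\bbeta^*})L)^2 = \pi_{\bbeta^*}(1-\pi_{\bbeta^*})L^2$. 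Taking expectations then yields $\bar\bH^{*\top}\bSigma\bar\bH^* = \mathbb{E}[\Var(Y_i(1)\mid\bX_i)/\pi(\bX_i) + \Var(Y_i(0)\mid\bX_i)/(1-\pi(\bX_i))] + \Var(L(\bX_i)) = V_{\mathrm{opt}}$, where we also use $\pi(\bX_i)=\pi_{\bbeta^*}(\bX_i)(1+o(1))$ so the conditional-variance terms may be written with $\pi$ rather than $\pi_{\bbeta^*}$ up to $o(1)$. I expect the only real obstacle to be bookkeeping: carefully verifying the two key identities $\bH_y^*=\bH_{\fb}^{*\top}\balpha$ and $\Cov(\mu_{\bbeta^*},\bg_{\bbeta^*})=\Var(\bg_{\bbeta^*})\balpha$ under \eqref{eqoptimal}, and then confirming that the $\Wb^*$-dependence genuinely cancels in the quadratic form; once those are in hand the remaining computation is the routine pointwise identity above. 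The minimality (lower bound) claim needs no separate argument beyond invoking Theorem~1 of \cite{hahn1998role}.
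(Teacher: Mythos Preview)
Your proposal is correct and follows essentially the same route as the paper's proof: both hinge on the identities $\bH_y^* = \bH_{\fb}^{*\top}\balpha$ and $\Cov(\mu_{\bbeta^*},\bg_{\bbeta^*}) = \Var(\bg_{\bbeta^*})\,\balpha$ under \eqref{eqoptimal}, which collapse the asymptotic variance to $\Sigma_\mu - \balpha^\top\bOmega\,\balpha$. The paper simply asserts this last expression equals $V_{\mathrm{opt}}$, whereas you supply the explicit pointwise identity $(1-\pi)(K+L)^2 + \pi K^2 - R^2 = \pi(1-\pi)L^2$ to close the gap; and where you speak of $\Wb^*$ ``cancelling,'' the paper just uses the $m=q$ assumption from Corollary~\ref{th:localbias} so that $\bH_{\fb}^*$ is invertible and no sandwich appears in the first place.
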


Based on Theorem~\ref{th:asymnormal_mis}, we can define the asymptotic
mean squared error (AMSE) of $\hat\mu_{\hat \bbeta}$ as
$AMSE=B^2+ \bar \bH^{*\top} \bSigma \bar
\bH^*$. Corollaries~\ref{th:localbias}~and~\ref{th:efficiency}
together imply that $\hat\mu_{\hat \bbeta}$ with $\fb(\bX)$ satisfying
(\ref{eqoptimal}) attains the minimum AMSE over all possible covariate
balancing estimators. Thus, we refer to (\ref{eqoptimal}) as the
optimality condition for the covariate balancing function. We note
that there may exist many choices of $\fb(\bX)$ which satisfy
(\ref{eqoptimal}). For instance, we can choose
$f_1(\bX)=\pi_{\bbeta^*}(\bX_i) \EE(Y_i(0) \mid \bX_i) +
(1-\pi_{\bbeta^*}(\bX_i)) \EE(Y_i(1) \mid \bX_i)$ and $f_2, ..., f_m$
in an arbitrary way, as long as the estimating equation
$\bar \bg_{\bbeta}(\bT,\bX) \ = \ 0$ is not degenerate. In this case,
to implement $f_1(\bX)$, we need to further estimate $\bbeta^*$ by
some initial estimator, e.g., the maximum likelihood estimator, and
estimate the conditional mean $\EE(Y_i(0) \mid \bX_i)$ and
$\EE(Y_i(1) \mid \bX_i)$ by some parametric/nonparametric
models. While Corollaries~\ref{th:localbias}~and~\ref{th:efficiency}
hold with this choice of $\fb(\bX)$, the empirical performance of the
resulting estimator $\hat\mu_{\hat \bbeta}$ is often unstable due to
the estimation error of the initial estimators. To overcome this
problem, we will next construct the optimal CBPS estimator that does
not require any initial estimator.



\section{The Optimal CBPS Methodology}
\label{sec:proposed.methodology}

Recall that the optimal covariate balancing function $\fb(\bX)$ is
given by (\ref{eqoptimal}). Plugging $\fb(\bX)$ into the estimating
function $\bg_{\bbeta}(T_i,\bX_i)$ in (\ref{eq:cvb}), we obtain that
\begin{align}
\balpha^\top \bg_{\bbeta^*}(T_i,\bX_i)&= \left(\frac{T_i}{\pi_{\bbeta^*}(\bX_i)}-\frac{1-T_i}{1-\pi_{\bbeta^*}(\bX_i)} \right)\Big[\pi_{\bbeta^*}(\bX_i) K(\bX_i) + (1-\pi_{\bbeta^*}(\bX_i)) (K(\bX_i)+L(\bX_i))\Big]\nonumber\\
&=\left(\frac{T_i}{\pi_{\bbeta^*}(\bX_i)}-\frac{1-T_i}{1-\pi_{\bbeta^*}(\bX_i)} \right)K(\bX_i) +\left(\frac{T_i}{\pi_{\bbeta^*}(\bX_i)}-1\right)L(\bX_i).\label{eqbias}
\end{align}
In other words, the optimality condition (\ref{eqoptimal}) holds if
and only if some linear combination of estimating function
$ \bg_{\bbeta}(T_i,\bX_i)$ satisfies (\ref{eqbias}). Motivated by this
observation, we construct the following set of estimating functions,
\begin{eqnarray}
\bar\bg_{\bbeta}(\bT, \bX)=\left(\begin{array}{c}
                            \bar\bg_{1\bbeta}(\bT, \bX) \\
                            \bar\bg_{2\bbeta}(\bT, \bX)\end{array}\right), \label{eq:cvb2}
\end{eqnarray}
where $\bar\bg_{1\bbeta}(\bT, \bX)={n}^{-1}\sum_{i=1}^n\bg_{1\bbeta}(T_i, \bX_i)$ and $\bar\bg_{2\bbeta}(\bT, \bX)={n}^{-1}\sum_{i=1}^n\bg_{2\bbeta}(T_i, \bX_i)$ with
\begin{eqnarray}\label{eq_hh}
\bg_{1\bbeta}(T_i, \bX_i) & = & \left(\frac{T_i}{\pi_{\bbeta}(\bX_i)}-\frac{1-T_i}{1-\pi_{\bbeta}(\bX_i)}\right)\bh_1(\bX_i),
\bg_{2\bbeta}(T_i, \bX_i) \ = \ \left(\frac{T_i}{\pi_{\bbeta}(\bX_i)}-1\right)\bh_2(\bX_i),~~
\end{eqnarray}
for some pre-specified functions
$\bh_1(\cdot): \RR^{d}\rightarrow \RR^{m_1}$ and
$\bh_2(\cdot): \RR^{d}\rightarrow \RR^{m_2}$ with $m_1+m_2=m$.  It is
easy to see that if the functions $K(\cdot)$ and $L(\cdot)$ lie in the
linear space spanned by the functions $\bh_1(\cdot)$ and
$\bh_2(\cdot)$ respectively, then there exists a vector
$\balpha\in\RR^m$ such that (\ref{eqbias}) holds for
$(\bg_{1\bbeta}(T_i, \bX_i),\bg_{2\bbeta}(T_i, \bX_i))$, further
implying that the optimality condition (\ref{eqoptimal}) is met. 

As
discussed in Section \ref{sec:model.misspecification}, the choice of
the optimal covariate balancing function is not unique. Unlike the one mentioned after Corollary \ref{th:efficiency}, the
estimating function in (\ref{eq:cvb2}) does not require any initial
estimators for $\bbeta$ or the conditional mean models, and is more
convenient for implementation.  Given the estimating functions in
\eqref{eq:cvb2}, we can estimate $\bbeta$ by the GMM estimator
$\hat{\bbeta}$ in \eqref{eq:gmm}. We call this method as the optimal
CBPS method ({oCBPS}). Similarly, the ATE is estimated by the IPTW
estimator $\hat{\mu}_{\hat\bbeta}$ in \eqref{eq:HT}.  The
implementation of the proposed {oCBPS} method (e.g., the choice of
$\bh_1(\cdot)$ and $\bh_2(\cdot)$) will be discussed in later
sections.
  
It is worthwhile to note that $\bar\bg_{\bbeta}(\bT, \bX)$ has the
following interpretation.  The first set of functions
$\bar\bg_{1\bbeta}(\bT, \bX)$ is the same as the existing covariate
balancing moment function in \eqref{eq:cvb}, which balances the
covariates $\bh_1(\bX_i)$ between the treatment and control
groups. However, unlike the original CBPS method, we introduce another
set of functions $\bar\bg_{2\bbeta}(\bT, \bX)$ which matches the
weighted covariates $\bh_2(\bX_i)$ in the treatment group to the
unweighted covariates $\bh_2(\bX_i)$ in the control group, because
$\bar\bg_{2\bbeta}(\bT, \bX)=0$ can be rewritten as
\begin{equation*}
\sum_{T_i=1}\frac{1-\pi_{\bbeta}(\bX_i)}{\pi_{\bbeta}(\bX_i)}\bh_2(\bX_i)\
= \ \sum_{T_i=0}\bh_2(\bX_i).
\end{equation*}
As seen in the derivation of \eqref{eqbias}, the auxiliary estimating
function $\bar\bg_{2\bbeta}(\bT, \bX)$ is required in order to satisfy
the optimality condition.

\subsection{Theoretical Properties}

We now derive the theoretical properties of the IPTW estimator \eqref{eq:HT} based on the proposed {oCBPS} method.  In
particular, we will show that the proposed estimator is doubly robust
and locally efficient.  The following set of assumptions
are imposed for the establishment of double robustness.

\begin{assumption}\label{ass1} The following regularity
  conditions are assumed.
  \begin{enumerate}
  \item There exists a positive definite matrix $\Wb^*$ such that
    $\hat{\Wb}\stackrel{p}{\longrightarrow} \Wb^*$.
  \item
  For any $\bh_1(\cdot)$ and $\bh_2(\cdot)$ in (\ref{eq_hh}), the minimizer
    $\bbeta^o=\argmin_{\bbeta\in\Theta} \EE(\bar\bg_{\bbeta}(\bT,\bX))^\top
    \Wb^*\EE(\bar\bg_{\bbeta}(\bT,\bX))$ is unique. 
  \item  $\bbeta^o\in\textrm{int}(\Theta)$, where $\Theta$ is a
    compact set.
  \item $\pi_{\bbeta}(\bX)$ is continuous in $\bbeta$.
  \item There exists a constant $0<c_0<1/2$ such that with
    probability tending to one, $c_0\leq \pi_{\bbeta}(\bX)\leq 1-c_0$,
    for any $\bbeta\in \textrm{int}(\Theta)$.
  \item     $\EE|Y(1)|^2<\infty$ and $\EE|Y(0)|^2<\infty$.
  \item For any $\bh_1(\cdot)$ and $\bh_2(\cdot)$ in (\ref{eq_hh}) and $\Wb^*$ in part 1, $\Gb^*:=\EE(\partial \bg({\bbeta^o})/\partial\bbeta)$
    exists where $\bg(\bbeta) = (\bg_{1\bbeta}(\bT,\bX)^\top, \bg_{2\bbeta}(\bT, \bX)^\top)^\top$ and there is a $q$-dimensional function $C(\bX)$ and a
    small constant $r>0$ such that
    $\sup_{\bbeta\in\mathbb{B}_r(\bbeta^o)}|\partial\pi_{\bbeta}(\bX)/\partial\beta_k|\leq
    C_k(\bX)$ for $1\leq k\leq q$,
    and $\EE(|h_{1j}(\bX)|C_k(\bX))<\infty$ for $1\leq j\leq m_1$, $1\leq k\leq q$ and
    $\EE(|h_{2j}(\bX)|C_k(\bX))<\infty$ for $1\leq j\leq m_2$, $1\leq k\leq q$, where
    $\mathbb{B}_r(\bbeta^o)$ is a ball in $\RR^q$ with radius $r$ and
    center $\bbeta^o$.
  \end{enumerate}
\end{assumption}
Conditions~1--4 of Assumption~\ref{ass1} are the standard conditions
for consistency of the GMM estimator \citep{newey1994large}. Note that
we allow the propensity score model to be misspecified, so that we use
the notation $\bbeta^o$ in Condition 2 to distinguish it from
$\bbeta^*$ used in the previous section.  Condition~5 is the
positivity assumption commonly used in the causal inference literature
\citep{robins1994estimation,robins1995analysis}. Conditions~6~and~7
are technical conditions that enable us to apply the dominated
convergence theorem. Note that,
$\sup_{\bbeta\in\mathbb{B}_r(\bbeta^o)}|\partial\pi_{\bbeta}(\bX)/\partial\beta_k|\leq
C_k(\bX)$ in Condition~7 is a local condition in the sense that it
only requires the existence of an envelop function $C_k(\bX)$ around a
small neighborhood of $\bbeta^o$.

We now establish the double robustness of the proposed estimator under
Assumption~\ref{ass1}. 
\begin{theorem}[\em Double Robustness]\label{thmdr}
  Under Assumption~\ref{ass1}, the proposed oCBPS-based IPTW estimator
  $\hat{\mu}_{\hat\beta}$ is doubly robust.  That is,
  $\hat{\mu}_{\hat\beta}\stackrel{p}{\longrightarrow} \mu$ if at least
  one of the following two conditions holds:
\begin{enumerate}
\item The propensity score model is correctly specified, i.e.,
  $\PP(T_i=1\mid \bX_i)=\pi_{\bbeta^o}(\bX_i)$;
\item The functions $\bh_1(\cdot)$ and $\bh_2(\cdot)$ in (\ref{eq_hh}) and $\Wb^*$ in Assumption \ref{ass1} satisfy the following condition. There exist some vectors
$\balpha_1, \balpha_2\in\RR^{q}$ such that $K(\bX_i)=\balpha_1^\top\Mb_1\bh_1(\bX_i)$ and
$L(\bX_i)=\balpha_2^\top\Mb_2\bh_2(\bX_i)$, where $\Mb_1\in\RR^{q\times m_1}$  and $\Mb_2\in\RR^{q\times m_2}$ are the partitions of  $\Gb^{*\top}\Wb^*=(\Mb_1,\Mb_2)$.
\end{enumerate}	
\end{theorem}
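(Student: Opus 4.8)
The plan is to show that $\hat\mu_{\hat\bbeta}$ in \eqref{eq:HT} converges in probability to $\EE(\mu_{\bbeta^o}(T_i,Y_i,\bX_i))$, where $\mu_{\bbeta}(T_i,Y_i,\bX_i)=T_iY_i/\pi_{\bbeta}(\bX_i)-(1-T_i)Y_i/(1-\pi_{\bbeta}(\bX_i))$, and then to verify that this limit equals $\mu$ under each of the two conditions. First I would establish $\hat\bbeta\stackrel{p}{\longrightarrow}\bbeta^o$: parts~1--4 of Assumption~\ref{ass1} supply a positive-definite limiting weight matrix, a unique well-separated population minimizer on a compact set with interior optimum, and continuity of $\pi_{\bbeta}$, while parts~5--6 give a uniform law of large numbers for $\bar\bg_{\bbeta}(\bT,\bX)$ over $\Theta$; the standard GMM consistency theorem then applies. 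With the same bounds and the continuity of $\pi_{\bbeta}$, a uniform law of large numbers over $\Theta$ also holds for $\{\mu_{\bbeta}(T_i,Y_i,\bX_i):\bbeta\in\Theta\}$, so the continuous mapping theorem gives $\hat\mu_{\hat\bbeta}\stackrel{p}{\longrightarrow}\EE(\mu_{\bbeta^o}(T_i,Y_i,\bX_i))$, and it remains to identify this expectation with $\mu$.

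For Condition~1 ($\pi(\bX_i)=\pi_{\bbeta^o}(\bX_i)$) the limit is computed directly: iterated expectations and strong ignorability give $\EE(T_iY_i/\pi_{\bbeta^o}(\bX_i))=\EE(\EE(Y_i(1)\mid\bX_i))=\EE(Y_i(1))$ and $\EE((1-T_i)Y_i/(1-\pi_{\bbeta^o}(\bX_i)))=\EE(Y_i(0))$, hence $\EE(\mu_{\bbeta^o})=\EE(Y_i(1)-Y_i(0))=\mu$; no covariate-balancing structure is needed here.

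For Condition~2 I would use the algebraic identity behind \eqref{eqbias}. Writing $Y_i(0)=K(\bX_i)+\varepsilon_i(0)$ and $Y_i(1)=K(\bX_i)+L(\bX_i)+\varepsilon_i(1)$ with $\EE(\varepsilon_i(t)\mid\bX_i)=0$, and peeling off $L(\bX_i)$ from $T_iL(\bX_i)/\pi_{\bbeta}(\bX_i)$ exactly as in the derivation of \eqref{eqbias}, we get, for every $\bbeta$,
\begin{align*}
\mu_{\bbeta}(T_i,Y_i,\bX_i)={}& L(\bX_i)+\Bigl(\tfrac{T_i}{\pi_{\bbeta}(\bX_i)}-\tfrac{1-T_i}{1-\pi_{\bbeta}(\bX_i)}\Bigr)K(\bX_i)+\Bigl(\tfrac{T_i}{\pi_{\bbeta}(\bX_i)}-1\Bigr)L(\bX_i)\\
&{}+\tfrac{T_i\varepsilon_i(1)}{\pi_{\bbeta}(\bX_i)}-\tfrac{(1-T_i)\varepsilon_i(0)}{1-\pi_{\bbeta}(\bX_i)}.
\end{align*}
Averaging over $i$ at $\bbeta=\hat\bbeta$: the term $n^{-1}\sum_iL(\bX_i)\stackrel{p}{\longrightarrow}\mu$; the residual term has conditional mean zero given $\bX_i$ by strong ignorability ($T_i\indep\varepsilon_i(t)\mid\bX_i$), so a uniform law of large numbers over a neighborhood of $\bbeta^o$, with integrable envelope $c_0^{-1}(|\varepsilon_i(1)|+|\varepsilon_i(0)|)$, sends it to $0$; and substituting $K(\bX_i)=\balpha_1^\top\Mb_1\bh_1(\bX_i)$ and $L(\bX_i)=\balpha_2^\top\Mb_2\bh_2(\bX_i)$ rewrites the two middle terms, after averaging, as a linear combination of $\bar\bg_{1\hat\bbeta}(\bT,\bX)$ and $\bar\bg_{2\hat\bbeta}(\bT,\bX)$ with coefficients read off from $\Mb_1$ and $\Mb_2$. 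To kill this last combination I would invoke the GMM first-order condition: since $\bbeta^o$ is interior, with probability tending to one $\hat\bbeta$ satisfies $\widehat\Gb^\top\widehat\Wb\bar\bg_{\hat\bbeta}(\bT,\bX)=0$ with $\widehat\Gb=\partial\bar\bg_{\hat\bbeta}(\bT,\bX)/\partial\bbeta$, and by part~7 of Assumption~\ref{ass1}, $\hat\bbeta\stackrel{p}{\longrightarrow}\bbeta^o$, and part~1, $\widehat\Gb^\top\widehat\Wb\stackrel{p}{\longrightarrow}\Gb^{*\top}\Wb^*=(\Mb_1,\Mb_2)$, so $\Mb_1\bar\bg_{1\hat\bbeta}(\bT,\bX)+\Mb_2\bar\bg_{2\hat\bbeta}(\bT,\bX)=o_p(1)$; since $K$ and $L$ are spanned by the rows of $\Mb_1\bh_1(\cdot)$ and $\Mb_2\bh_2(\cdot)$ --- which is exactly what Condition~2 of the theorem assumes --- the outcome-model combination reduces to this vanishing vector, and collecting the three pieces yields $\EE(\mu_{\bbeta^o})=\mu$.

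The step I expect to be the main obstacle is this last reduction in Condition~2. The first-order condition gives only $q$ linear constraints on the $m$-dimensional moment vector $(\bar\bg_{1\hat\bbeta}(\bT,\bX)^\top,\bar\bg_{2\hat\bbeta}(\bT,\bX)^\top)^\top$, so one must check that the outcome-model term $\balpha_1^\top\Mb_1\bar\bg_{1\hat\bbeta}(\bT,\bX)+\balpha_2^\top\Mb_2\bar\bg_{2\hat\bbeta}(\bT,\bX)$ really lies in the span annihilated by those constraints; the precise roles of $\Mb_1$, $\Mb_2$, $\balpha_1$ and $\balpha_2$ are delicate, and matching them up is exactly why the auxiliary estimating functions $\bg_{2\bbeta}$ and the partition $\Gb^{*\top}\Wb^*=(\Mb_1,\Mb_2)$ are built into \eqref{eq:cvb2} and the statement. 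A secondary, more routine difficulty is supplying the several uniform laws of large numbers and dominated-convergence arguments above from the \emph{local} envelope conditions of part~7 of Assumption~\ref{ass1} rather than from global dominating functions.
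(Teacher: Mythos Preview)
Your scheme is essentially the paper's: consistency of $\hat\bbeta$ via the GMM consistency theorem (packaged in the paper as Lemma~\ref{lemconsistency}, checking the envelope condition just as you suggest), then a uniform law of large numbers for $\{\mu_{\bbeta}(T_i,Y_i,\bX_i):\bbeta\in\Theta\}$ with the global envelope $2|Y|/c_0$ to reduce to identifying $\EE(\mu_{\bbeta^o})$ with $\mu$; Condition~1 is then the same iterated-expectation computation. The one place you diverge is Condition~2: the paper stays at the \emph{population} level throughout, computing $\EE(\mu_{\bbeta^o})-\mu$ deterministically via the same algebraic split you write down and then invoking the \emph{population} first-order condition $\Gb^{*\top}\Wb^*\EE(\bg_{\bbeta^o})=0$ (equation~\eqref{eqmh}, obtained by differentiating $Q(\bbeta)=\EE(\bar\bg_\bbeta)^\top\Wb^*\EE(\bar\bg_\bbeta)$ at its interior minimum and swapping derivative and integral via part~7 of Assumption~\ref{ass1}). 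Your return to the sample level --- empirical first-order condition at $\hat\bbeta$, convergence of $\widehat\Gb^\top\widehat\Wb$, a separate ULLN for the residual piece --- is valid but redundant, since the single ULLN for $\mu_\bbeta$ you already established subsumes it. On your secondary worry: the ULLNs here run on \emph{global} envelopes supplied by part~5 (e.g.\ $2|Y|/c_0$, $2|h_{1j}(\bX)|/c_0$), not the local envelope of part~7, which the paper uses only to justify interchanging derivative and expectation in the population first-order condition. The main obstacle you isolate --- matching $\balpha_1,\balpha_2$ against the $q$ constraints from the first-order condition --- appears identically in the population argument and is handled there by direct substitution into~\eqref{eqmh}.
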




Next, we establish the asymptotic normality of the proposed estimator
if either the propensity score model (Condition 1 in Theorem~\ref{thmdr}) or the outcome model is correctly
specified (Condition 2 in Theorem~\ref{thmdr}) .  For this result, we require an additional set of
regularity conditions.
\begin{assumption}\label{ass2} The following regularity conditions are
  assumed.
  \begin{enumerate}
  \item For any $\bh_1(\cdot)$ and $\bh_2(\cdot)$ in (\ref{eq_hh}) and $\Wb^*$ in Assumption \ref{ass1}, $\Gb^{*\top}\Wb^*\Gb^*$ and
    $\bOmega=\EE(\bg_{\bbeta^o}(T_i,\bX_i)\bg_{\bbeta^o}(T_i,\bX_i)^\top)$
    are nonsingular.
\item The function $C_k(\bX)$ defined in Condition~7 of
  Assumption~\ref{ass1} satisfies $\EE(|Y(0)|C_k(\bX))<\infty$ and $\EE(|Y(1)|C_k(\bX))<\infty$ for $1\leq k\leq q$.
\end{enumerate}	
\end{assumption}
Condition~1 of Assumption~\ref{ass2} ensures the non-singularity of
the asymptotic variance matrix and Condition~2 is a mild technical
condition required for the dominated convergence
theorem.

\begin{theorem}[\em Asymptotic Normality]\label{thmAN}
  Suppose that Assumptions~\ref{ass1}~and~\ref{ass2} hold.
\begin{enumerate}
\item If Condition~1 of Theorem~\ref{thmdr} holds, then the proposed
   oCBPS-based IPTW estimator $\hat{\mu}_{\hat\bbeta}$  has the following
  asymptotic distribution:
\begin{equation}\label{eqdeltaAN1}
  \sqrt{n}(\hat{\mu}_{\hat\bbeta}-\mu) \ \stackrel{d}{\longrightarrow}  N\left(0,\ \bar\Hb^{*\top}\bSigma\bar\Hb^*\right),
\end{equation}
where $\bar\Hb^*=(\mathbf{1},\Hb^{*\top})^\top$,
$\bSigma_{\bbeta}=(\Gb^{*\top}\Wb^*\Gb^*)^{-1}\Gb^{*\top}\Wb^*\bOmega\Wb^*\Gb^*(\Gb^{*\top}\Wb^*\Gb^*)^{-1}$
and
\begin{eqnarray}
\Hb^* & = & -\EE\left(\frac{K(\bX_i)+(1-\pi_{\bbeta^o}(\bX_i))L(\bX_i)}{\pi_{\bbeta^o}(\bX_i) (1-\pi_{\bbeta^o}(\bX_i))}\cdot \frac{\partial \pi_{\bbeta^o}(\bX_i)}{\partial\bbeta}\right),\notag \\
\bSigma & = & \left( \begin{array}{cc}
\Sigma_{\mu} & \bSigma^\top_{\mu\bbeta} \label{eqsigmadelta}  \\
\bSigma_{\mu\bbeta} & \bSigma_{\bbeta} \end{array} \right), ~~\textrm{with}~~\Sigma_{\mu}=\EE\left(\frac{Y_i^2(1)}{\pi_{\bbeta^o}(\bX_i)}+\frac{Y_i^2(0)}{1-\pi_{\bbeta^o}(\bX_i)}\right)-\mu^{2}.
\end{eqnarray}
In addition, $\bSigma_{\mu\bbeta}$ is given by
\begin{align*}
\bSigma_{\mu\bbeta} & =  -(\Gb^{*\top}\Wb^*\Gb^*)^{-1}\Gb^{*\top}\Wb^*\Big\{\EE\Big(\frac{K(\bX_i)+(1-\pi_i^o)L(\bX_i)}{(1-\pi_i^o)\pi_i^o}\bh^\top_{1}(\bX_i)\Big), \\
&~~~~~~~~~~~~~~\EE\Big(\frac{K(\bX_i)+(1-\pi_i^o)L(\bX_i)}{\pi_i^o}\bh^\top_{2}(\bX_i)\Big)\Big\}^{\top}.
\end{align*}
\item If Condition~2 of Theorem~\ref{thmdr} holds, then the proposed
oCBPS-based IPTW estimator $\hat{\mu}_{\hat\bbeta}$  has the following
asymptotic distribution:
\begin{equation}\label{eqdeltaAN2}
  \sqrt{n}(\hat{\mu}_{\hat\bbeta}-\mu) \ \stackrel{d}{\longrightarrow} N\left(0,\ \tilde\Hb^{*\top}\tilde\bSigma\tilde\Hb^*\right),
\end{equation}
where  $\tilde\Hb^*=(1,\check\Hb^{*\top})^\top$, 
\begin{align*}
  & \check\Hb^*  =  -\EE\left[\left\{\frac{\pi(\bX_i) (K(\bX_i)+L(\bX_i))}{\pi_{\bbeta^o}(\bX_i)^{2}}+\frac{(1-\pi(\bX_i))K(\bX_i)}{(1-\pi_{\bbeta^o}(\bX_i))^{2}}\right\}\frac{\partial {\pi}_{\bbeta^o}(\bX_i)}{\partial\bbeta^o}\right], \\
& \tilde\bSigma  =  \left( \begin{array}{cc}
                     \tilde\Sigma_{\mu} & \tilde\bSigma^\top_{\mu\bbeta}  \\
                     \tilde\bSigma_{\mu\bbeta} &
                                           \bSigma_{\bbeta} \end{array}
                                           \right)
                                           ~~\textrm{with}~~\tilde\Sigma_{\mu} \ = \ \EE\left(\frac{\pi(\bX_i)Y_i^2(1)}{\pi_{\bbeta^o}(\bX_i)^{2}}+\frac{(1-\pi(\bX_i))Y_i^2(0)}{(1-\pi_{\bbeta^o}(\bX_i))^2}\right)-\mu^{2}.
\end{align*}
In addition, $\tilde\bSigma_{\mu\bbeta}$ is given by
\begin{equation*}
\tilde\bSigma_{\mu\bbeta} \ = \ -(\Gb^{*\top}\Wb^*\Gb^*)^{-1}\Gb^{*\top}\Wb^* \bS,
\end{equation*}
where $\bS = (\bS_1^\top,\bS_2^\top)^\top$ and
\begin{eqnarray*}
  \bS_1 & = & \EE\left[\left\{\frac{\pi(\bX_i) (K(\bX_i)+L(\bX_i)-\pi_{\bbeta^o}(\bX_i)\mu)}{\pi_{\bbeta^o}(\bX_i)^{2}}\right.\right. \\
        & & \hspace{1.75in} \left.\left. +\frac{(1-\pi(\bX_i))(K(\bX_i)+(1-\pi_{\bbeta^o}(\bX_i))\mu)}{(1-\pi_{\bbeta^o}(\bX_i))^2}\right\}\bh_{1}(\bX_i)\right],\\
  \bS_2 & = & \EE\left[\left\{\frac{\pi(\bX_i)[(K(\bX_i)+L(\bX_i))(1-\pi_{\bbeta^o}(\bX_i))-\pi_{\bbeta^o}(\bX_i)\mu]}{\pi_{\bbeta^o}(\bX_i)^{2}}\right.\right. \\
        & & \hspace{1.75in} \left.\left. +\frac{(1-\pi(\bX_i))K(\bX_i)+(1-\pi_{\bbeta^o}(\bX_i))\mu}{1-\pi_{\bbeta^o}(\bX_i)}\right\}\bh_{2}(\bX_i)\right].
\end{eqnarray*}
\item If both Conditions~1~and~2 of Theorem~\ref{thmdr} hold and
  $\Wb^*=\bOmega^{-1}$, then the proposed
  oCBPS-based IPTW estimator $\hat{\mu}_{\hat\bbeta}$  has the following
  asymptotic distribution:
$$\sqrt{n}(\hat{\mu}_{\hat\bbeta}-\mu) \ \stackrel{d}{\longrightarrow} N(0,V),$$
where
\begin{equation}\label{eqdeltavar}
	V=\Sigma_{\mu}-(\balpha_1^\top\Mb_1,\balpha_2^\top\Mb_2)\Gb^*(\Gb^{*\top}\bOmega^{-1}\Gb^{*})^{-1}\Gb^{*\top} \left( \begin{array}{c}
		\Mb_1^\top\balpha_1  \\
		\Mb_2^\top\balpha_2  \end{array} \right)
\end{equation}
and $\Sigma_{\mu}$ is defined in \eqref{eqsigmadelta}.
\end{enumerate}
\end{theorem}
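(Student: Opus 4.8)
\textbf{Proof proposal for Theorem~\ref{thmAN}, part~3.}
Since part~3 presumes Condition~1 of Theorem~\ref{thmdr}, we are already in the setting of part~1 of Theorem~\ref{thmAN}, so the claim could be obtained simply by feeding the additional structure of Condition~2 and the efficient weighting $\Wb^*=\bOmega^{-1}$ into the variance $\bar\Hb^{*\top}\bSigma\bar\Hb^*$ there. I would instead argue directly from the asymptotic linear representation. Combining the usual GMM expansion $\sqrt{n}(\hat\bbeta-\bbeta^o)=-(\Gb^{*\top}\Wb^*\Gb^*)^{-1}\Gb^{*\top}\Wb^*\sqrt{n}\,\bar\bg_{\bbeta^o}(\bT,\bX)+o_p(1)$ (valid under Assumptions~\ref{ass1}--\ref{ass2}) with a first-order Taylor expansion of $n^{-1}\sum_i\mu_{\bbeta}(T_i,Y_i,\bX_i)$ in $\bbeta$ about $\bbeta^o$, whose derivative converges in probability to $\Hb^*$, gives
\begin{equation*}
\sqrt{n}(\hat\mu_{\hat\bbeta}-\mu)=\frac{1}{\sqrt{n}}\sum_{i=1}^{n}\psi_i+o_p(1),\qquad \psi_i=\bigl(\mu_{\bbeta^o}(T_i,Y_i,\bX_i)-\mu\bigr)-\bA\,\bg_{\bbeta^o}(T_i,\bX_i),
\end{equation*}
with $\bA=\Hb^{*\top}(\Gb^{*\top}\bOmega^{-1}\Gb^*)^{-1}\Gb^{*\top}\bOmega^{-1}$ and $\bg_{\bbeta^o}(T_i,\bX_i)=(\bg_{1\bbeta^o}(T_i,\bX_i)^\top,\bg_{2\bbeta^o}(T_i,\bX_i)^\top)^\top$ from \eqref{eq_hh}. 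It then remains to show $\EE(\psi_i)=0$ and $\Var(\psi_i)=V$.

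The computation rests on two identities, where $\balpha:=(\Mb_1^\top\balpha_1,\Mb_2^\top\balpha_2)^\top\in\RR^m$ with $\balpha_1,\balpha_2,\Mb_1,\Mb_2$ as in Condition~2 of Theorem~\ref{thmdr}. \emph{First, $\Hb^*=\Gb^{*\top}\balpha$.} Differentiating $\bg_{1\bbeta},\bg_{2\bbeta}$ in \eqref{eq_hh} in $\bbeta$ and taking expectations at $\bbeta^o$ under the correct propensity score yields $\Gb^*=(\Gb_1^{*\top},\Gb_2^{*\top})^\top$ with $\Gb_1^*=-\EE[\bh_1(\bX_i)(\partial\pi_{\bbeta^o}(\bX_i)/\partial\bbeta)^\top/\{\pi_{\bbeta^o}(\bX_i)(1-\pi_{\bbeta^o}(\bX_i))\}]$ and $\Gb_2^*=-\EE[\bh_2(\bX_i)(\partial\pi_{\bbeta^o}(\bX_i)/\partial\bbeta)^\top/\pi_{\bbeta^o}(\bX_i)]$; then $\Gb^{*\top}\balpha=\Gb_1^{*\top}(\Mb_1^\top\balpha_1)+\Gb_2^{*\top}(\Mb_2^\top\balpha_2)$, and substituting $K(\bX_i)=\balpha_1^\top\Mb_1\bh_1(\bX_i)$, $L(\bX_i)=\balpha_2^\top\Mb_2\bh_2(\bX_i)$ collapses this to $-\EE[\{K(\bX_i)+(1-\pi_{\bbeta^o}(\bX_i))L(\bX_i)\}\,\partial\pi_{\bbeta^o}(\bX_i)/\partial\bbeta\,/\{\pi_{\bbeta^o}(\bX_i)(1-\pi_{\bbeta^o}(\bX_i))\}]=\Hb^*$. \emph{Second, $\Cov(\bg_{\bbeta^o}(T_i,\bX_i),\mu_{\bbeta^o}(T_i,Y_i,\bX_i))=\bOmega\,\balpha$.} By the computation leading to \eqref{eqbias}, Condition~2 gives $\balpha^\top\bg_{\bbeta^o}(T_i,\bX_i)=\bigl(\tfrac{T_i}{\pi_{\bbeta^o}(\bX_i)}-\tfrac{1-T_i}{1-\pi_{\bbeta^o}(\bX_i)}\bigr)K(\bX_i)+\bigl(\tfrac{T_i}{\pi_{\bbeta^o}(\bX_i)}-1\bigr)L(\bX_i)$, so $\epsilon_i:=\mu_{\bbeta^o}(T_i,Y_i,\bX_i)-\mu-\balpha^\top\bg_{\bbeta^o}(T_i,\bX_i)$ equals $\tfrac{T_i}{\pi_{\bbeta^o}(\bX_i)}\{Y_i(1)-\EE(Y_i(1)\mid\bX_i)\}-\tfrac{1-T_i}{1-\pi_{\bbeta^o}(\bX_i)}\{Y_i(0)-\EE(Y_i(0)\mid\bX_i)\}+\{L(\bX_i)-\mu\}$. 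Conditioning on $(T_i,\bX_i)$ and invoking strong ignorability annihilates the first two residuals, so $\EE(\epsilon_i\mid T_i,\bX_i)=L(\bX_i)-\mu$, whence $\Cov(\bg_{\bbeta^o},\epsilon_i)=\EE[\bg_{\bbeta^o}(T_i,\bX_i)\{L(\bX_i)-\mu\}]=0$, the last equality because under the correct propensity score $\EE\bigl(\tfrac{T_i}{\pi_{\bbeta^o}(\bX_i)}-\tfrac{1-T_i}{1-\pi_{\bbeta^o}(\bX_i)}\,\big|\,\bX_i\bigr)=0$ and $\EE\bigl(\tfrac{T_i}{\pi_{\bbeta^o}(\bX_i)}-1\,\big|\,\bX_i\bigr)=0$. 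Since $\EE(\bg_{\bbeta^o})=0$ under the correct propensity score, $\Cov(\bg_{\bbeta^o},\mu_{\bbeta^o})=\Cov(\bg_{\bbeta^o},\balpha^\top\bg_{\bbeta^o})=\bOmega\,\balpha$.

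Given these, the rest is bookkeeping. Under the correct propensity score $\EE(\mu_{\bbeta^o}(T_i,Y_i,\bX_i))=\mu$ by ignorability and \eqref{eq:CEFy}, and $\EE(\bg_{\bbeta^o})=0$, so $\EE(\psi_i)=0$. Next, $\Var(\psi_i)=\Sigma_\mu-2\bA\,\Cov(\bg_{\bbeta^o},\mu_{\bbeta^o})+\bA\bOmega\bA^\top=\Sigma_\mu-2\bA\bOmega\balpha+\bA\bOmega\bA^\top$ by the second identity, and plugging in $\bA=\balpha^\top\Gb^*(\Gb^{*\top}\bOmega^{-1}\Gb^*)^{-1}\Gb^{*\top}\bOmega^{-1}$ (the first identity, $\Hb^{*\top}=\balpha^\top\Gb^*$) makes both $\bA\bOmega\balpha$ and $\bA\bOmega\bA^\top$ equal $\balpha^\top\Gb^*(\Gb^{*\top}\bOmega^{-1}\Gb^*)^{-1}\Gb^{*\top}\balpha$, so $\Var(\psi_i)=\Sigma_\mu-\balpha^\top\Gb^*(\Gb^{*\top}\bOmega^{-1}\Gb^*)^{-1}\Gb^{*\top}\balpha$, which is exactly $V$ in \eqref{eqdeltavar} once one writes $\balpha^\top=(\balpha_1^\top\Mb_1,\balpha_2^\top\Mb_2)$. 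As the $\psi_i$ are i.i.d.\ with finite variance under Assumptions~\ref{ass1}--\ref{ass2}, the Lindeberg--L\'evy CLT yields $\sqrt{n}(\hat\mu_{\hat\bbeta}-\mu)\stackrel{d}{\longrightarrow}N(0,V)$.

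The main obstacle is executing the two identities cleanly: one must track the block partition $\Gb^{*\top}\Wb^*=(\Mb_1,\Mb_2)$ and keep straight which hypothesis does what --- correctness of the propensity score is what forces $\EE(\bg_{\bbeta^o})=0$ and produces the simplified block form of $\Gb^*$, while correctness of the outcome model is precisely what turns $\balpha^\top\bg_{\bbeta^o}$ into the right-hand side of \eqref{eqbias}, hence $\epsilon_i$ into a residual that is mean-zero given $\bX_i$ and orthogonal to $\bg_{\bbeta^o}$. The linearization and CLT steps themselves are routine, inheriting the regularity arguments already used for parts~1 and~2.
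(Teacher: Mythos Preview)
Your proposal is correct and follows essentially the same route as the paper: both establish the identity $\Hb^*=\Gb^{*\top}\balpha$ with $\balpha=(\Mb_1^\top\balpha_1,\Mb_2^\top\balpha_2)^\top$ by direct computation of $\Gb_1^*,\Gb_2^*$, and both reduce the cross term via $\Cov(\bg_{\bbeta^o},\mu_{\bbeta^o})=\bOmega\balpha$, after which the variance collapses to $V$ exactly as you wrote. The only noteworthy difference is how the second identity is obtained: the paper computes the block form of $\bOmega$ explicitly and substitutes $K=\balpha_1^\top\Mb_1\bh_1$, $L=\balpha_2^\top\Mb_2\bh_2$ into $\Tb$ to recognize $\Tb=\bOmega\balpha$, whereas your orthogonality argument---writing $\mu_{\bbeta^o}-\mu=\balpha^\top\bg_{\bbeta^o}+\epsilon_i$ with $\epsilon_i$ the efficient influence function and showing $\EE(\bg_{\bbeta^o}\epsilon_i)=0$ via $\EE(\bg_{\bbeta^o}\mid\bX_i)=0$---is a cleaner, more conceptual way to the same conclusion that avoids the block-matrix bookkeeping.
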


The asymptotic variance $V$ in \eqref{eqdeltavar} contains two terms.
The first term $\Sigma_{\mu}$ represents the variance of each summand
in the estimator defined in equation \eqref{eq:HT} with $\hat{\bbeta}$
replaced by $\bbeta^o$.  The second term can be
interpreted as the effect of estimating $\bbeta$ via covariate
balance conditions. Since this second term is nonnegative, the
proposed estimator is more efficient than the standard IPTW estimator
with the true propensity score model, i.e., $V\leq \Sigma_{\mu}$.  In
particular, \cite{henmi2004paradox} offered a theoretical analysis of
such efficiency gain due to the estimation of nuisance parameters
under a general estimating equation framework.

Since the choice of $\bh_1(\cdot)$ and $\bh_2(\cdot)$ can be arbitrary, it
might be tempting to incorporate more covariate balancing
conditions into $\bh_1(\cdot)$ and $\bh_2(\cdot)$.  However, the following corollary shows that under Conditions~1 and 2 of
  Theorem~\ref{thmdr} one cannot improve
the efficiency of the proposed estimator by increasing the number of functions
$\bh_1(\cdot)$ and $\bh_2(\cdot)$ or equivalently, the dimensionality
of covariate balance conditions, i.e., $\bar\bg_{1\bbeta}(\bT, \bX)$
and $\bar\bg_{2\bbeta}(\bT,\bX)$.

\begin{corollary}\label{coraddmoment}
  Define $\bar\bh_1(\bX)=(\bh^\top_1(\bX),\ba^\top_1(\bX))^\top$ and
  $\bar\bh_2(\bX)=(\bh^\top_2(\bX),\ba^\top_2(\bX))^\top$, where
  $\ba_1(\cdot)$ and $\ba_2(\cdot)$ are some additional covariate
  balancing functions. Similarly, let $\bar\bg_1(\bX)$ and
  $\bar\bg_2(\bX)$ denote the corresponding estimating equations
  defined by $\bar\bh_1(\bX)$ and $\bar\bh_2(\bX)$.  The resulting
  oCBPS-based IPTW estimator is denoted by $\bar{\mu}_{\hat\bbeta}$
  where $\hat\bbeta$ is in \eqref{eq:gmm} and its
  asymptotic variance is denoted by $\bar V$.  Under Conditions~1 and 2 of
  Theorem~\ref{thmdr}, we have $V\leq \bar V$, where $V$ is defined in \eqref{eqdeltavar}.
\end{corollary}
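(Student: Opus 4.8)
The plan is to reduce the stated inequality to the classical fact that the efficient-GMM information matrix for $\bbeta$ only increases when moment conditions are appended. Write $\bg$ and $\bar\bg$ for the stacked estimating functions of \eqref{eq_hh} built from $(\bh_1,\bh_2)$ and from $(\bar\bh_1,\bar\bh_2)$, both evaluated at $\bbeta^o$, and set $\Gb^*=\EE(\partial\bg/\partial\bbeta)$, $\bOmega=\EE(\bg\bg^\top)$, $\bar\Gb^*=\EE(\partial\bar\bg/\partial\bbeta)$, $\bar\bOmega=\EE(\bar\bg\bar\bg^\top)$, and take the efficient weightings $\Wb^*=\bOmega^{-1}$ and $\bar\Wb^*=\bar\bOmega^{-1}$ consistent with \eqref{eqdeltavar} and Theorem~\ref{thmAN}(3) (Assumptions~\ref{ass1}--\ref{ass2} being imposed on the enlarged system as well). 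Under Condition~1 of Theorem~\ref{thmdr} every estimating function of the form \eqref{eq_hh} has conditional mean zero at the true propensity-score parameter, so that parameter is the common GMM limit $\bbeta^o$ of both systems; after reordering coordinates $\bar\bg=(\bg^\top,\ba^\top)^\top$, $\bar\Gb^*=(\Gb^{*\top},\Gb_a^{*\top})^\top$, and $\bar\bOmega$ has $\bOmega$ as its leading block, with remaining blocks $\bOmega_{ga},\bOmega_{ag},\bOmega_{aa}$.

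\emph{Step 1 (a common form for the two variances).} First I would establish that $\Hb^*$ in Theorem~\ref{thmAN} is the same functional of $(K,L,\pi)$ regardless of the balancing functions and that, under Condition~1, it obeys the identity $\Hb^*=\Gb^{*\top}\balpha$ with $\balpha=(\Mb_1^\top\balpha_1,\Mb_2^\top\balpha_2)^\top$; this is a direct differentiation from the definitions of $\Gb^*$ and $\Hb^*$ together with \eqref{eqbias}. Substituting it into \eqref{eqdeltavar} rewrites the variance compactly as $V=\Sigma_\mu-\Hb^{*\top}(\Gb^{*\top}\bOmega^{-1}\Gb^*)^{-1}\Hb^*$. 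Since $\bh_1,\bh_2$ are sub-vectors of $\bar\bh_1,\bar\bh_2$, the functions $K$ and $L$ remain in the spans of $\bar\bh_1$ and $\bar\bh_2$; rerunning the Theorem~\ref{thmAN} computation for $\bar\bg$ --- with $\balpha$ replaced by the zero-padded extension $(\balpha^\top,\mathbf 0^\top)^\top$ and with $\Sigma_\mu$ unchanged --- yields $\bar V=\Sigma_\mu-\Hb^{*\top}(\bar\Gb^{*\top}\bar\bOmega^{-1}\bar\Gb^*)^{-1}\Hb^*$ with the \emph{same} $\Hb^*$. Hence $V\le\bar V$ is equivalent to the matrix inequality $\Gb^{*\top}\bOmega^{-1}\Gb^*\preceq\bar\Gb^{*\top}\bar\bOmega^{-1}\bar\Gb^*$.

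\emph{Step 2 (information monotonicity).} Expanding $\bar\bOmega^{-1}$ by the block-inverse formula and simplifying gives
\[
\bar\Gb^{*\top}\bar\bOmega^{-1}\bar\Gb^*-\Gb^{*\top}\bOmega^{-1}\Gb^*=\bigl(\Gb_a^*-\bOmega_{ag}\bOmega^{-1}\Gb^*\bigr)^{\top}\bigl(\bOmega_{aa}-\bOmega_{ag}\bOmega^{-1}\bOmega_{ga}\bigr)^{-1}\bigl(\Gb_a^*-\bOmega_{ag}\bOmega^{-1}\Gb^*\bigr),
\]
and the Schur complement $\bOmega_{aa}-\bOmega_{ag}\bOmega^{-1}\bOmega_{ga}$ is positive definite by the nonsingularity of $\bar\bOmega$ (Assumption~\ref{ass2}), so the right-hand side is positive semidefinite. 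Therefore $\Gb^{*\top}\bOmega^{-1}\Gb^*\preceq\bar\Gb^{*\top}\bar\bOmega^{-1}\bar\Gb^*$; taking inverses reverses the order, and pre- and post-multiplying by $\Hb^{*\top}$ and $\Hb^*$ and negating gives $V\le\bar V$, as claimed.

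\emph{Main obstacle.} Step~2 is routine matrix algebra, so the substance lies in Step~1. The effect of adding moment conditions on the asymptotic variance of a GMM plug-in estimator of $\mu$ is in general indeterminate; it becomes monotone here precisely because, once \emph{both} the outcome and the propensity-score models are correct, $V$ and $\bar V$ collapse to $\Sigma_\mu$ minus the \emph{same} linear functional $\Hb^*$ of the GMM information matrix. The work is therefore in verifying the identity $\Hb^*=\Gb^{*\top}\balpha$, checking that $\Sigma_\mu$ and $\bbeta^o$ are unaffected by enlarging the balancing functions, and confirming that the zero-padded $\balpha$ plays the same role for $\bar\bg$. This is exactly the Henmi--Eguchi-type mechanism mentioned after Theorem~\ref{thmAN}: enlarging the set of covariate balance conditions sharpens $\hat\bbeta$, and a sharper nuisance estimate \emph{shrinks} the efficiency gain that the balancing-based estimation confers on $\hat\mu$.
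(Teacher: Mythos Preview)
Your proof is correct and follows essentially the same route as the paper: both reduce $V\le\bar V$ to the information-monotonicity inequality $\Gb^{*\top}\bOmega^{-1}\Gb^*\preceq\bar\Gb^{*\top}\bar\bOmega^{-1}\bar\Gb^*$ after identifying a common vector (your $\Hb^*$, the paper's zero-padded coefficient vector) in the quadratic forms. The only cosmetic difference is that you prove the matrix inequality via the explicit Schur-complement identity, whereas the paper uses the equivalent lower bound $\bar\bOmega^{-1}\succeq\mathrm{diag}(\bOmega^{-1},0)$; your choice to phrase Step~1 in terms of $\Hb^*$ rather than the zero-padded $(\balpha_1^\top\Mb_1,0,\balpha_2^\top\Mb_2,0)$ is in fact a bit cleaner, since it sidesteps any question about whether that particular representation is attainable in the enlarged $\bar\Mb$-parametrization.
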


The above corollary shows a potential trade-off between robustness and efficiency
when choosing $\bh_1(\cdot)$ and $\bh_2(\cdot)$. Recall that Condition 2 of
  Theorem~\ref{thmdr} implies $K(\bX_i)=\balpha_1^\top\Mb_1\bh_1(\bX_i)$ and
$L(\bX_i)=\balpha_2^\top\Mb_2\bh_2(\bX_i)$. Therefore, we can make the
proposed estimator more robust by incorporating more
basis functions into $\bh_1(\cdot)$ and $\bh_2(\cdot)$, such that this  condition is more likely to hold. However,
Corollary~\ref{coraddmoment} shows that doing so may inflate the variance  of the proposed estimator.

In the following, we focus on the efficiency of the estimator.
Using the notations in this section, we can rewrite the semiparametric asymptotic variance bound $V_{\textrm{opt}}$ in (\ref{eqinforbound1}) as
\begin{equation}\label{eqinforbound}
V_{\textrm{opt}} \ = \ \Sigma_{\mu}-(\balpha_1^\top\Mb_1,\balpha_2^\top\Mb_2)\bOmega\left( \begin{array}{c}
\Mb_1^\top\balpha_1  \\
\Mb_2^\top\balpha_2  \end{array} \right).
\end{equation}
Comparing this expression with \eqref{eqdeltavar}, we see
that the proposed estimator is semiparametrically efficient if $\Gb^*$
is a square matrix (i.e., $m = q$) and invertible.  
This important result is summarized as the following corollary.

\begin{corollary}\label{cormq}
  Assume $m=q$ and $\Gb^*$ is invertible. Under
  Assumption~\ref{ass1}, the proposed estimator
  $\hat{\mu}_{\hat\bbeta}$  in \eqref{eq:HT} is doubly
  robust in the sense that
  $\hat{\mu}_{\hat\bbeta}\stackrel{p}{\longrightarrow} \mu$ if
  either of the following conditions holds:
\begin{enumerate}
\item The propensity score model is correctly specified. That is
  $\PP(T_i=1\mid \bX_i)=\pi_{\bbeta^o}(\bX_i)$.
\item There exist some vectors
$\balpha_1, \balpha_2\in\RR^{q}$ such that $K(\bX_i)=\balpha_1^\top\bh_1(\bX_i)$ and
$L(\bX_i)=\balpha_2^\top\bh_2(\bX_i)$.
\end{enumerate}		
In addition, under Assumption~\ref{ass2}, if both conditions hold, then the proposed estimator has the asymptotic variance given in \eqref{eqinforbound}. Thus, our estimator is a locally semiparametric efficient estimator in the sense of \cite{robins1994estimation}. 
\end{corollary}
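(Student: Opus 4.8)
The plan is to read Corollary~\ref{cormq} off Theorems~\ref{thmdr} and~\ref{thmAN} in the just-identified case $m=q$, using invertibility of $\Gb^*$ both to simplify the outcome-model condition and to eliminate the weighting matrix $\Wb^*$ from the limiting law. For the double robustness claim it suffices to show that condition~2 of the corollary implies condition~2 of Theorem~\ref{thmdr} (condition~1 of the corollary is verbatim condition~1 of that theorem), after which Theorem~\ref{thmdr} gives $\hat\mu_{\hat\bbeta}\stackrel{p}{\to}\mu$. Since $m=q$, $\Gb^*$ is invertible, and $\Wb^*$ is positive definite (Assumption~\ref{ass1}(1)), the $q\times q$ matrix $\Gb^{*\top}\Wb^*=(\Mb_1,\Mb_2)$ is invertible; equivalently, the linear map $\mathbf{c}\mapsto(\Mb_1^\top\mathbf{c},\Mb_2^\top\mathbf{c})$ is a bijection from $\RR^q$ onto $\RR^{m_1}\times\RR^{m_2}$. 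Hence, given $\balpha_1,\balpha_2$ with $K(\bX_i)=\balpha_1^\top\bh_1(\bX_i)$ and $L(\bX_i)=\balpha_2^\top\bh_2(\bX_i)$, there is $\mathbf{c}\in\RR^q$ with $\Mb_1^\top\mathbf{c}=\balpha_1$ and $\Mb_2^\top\mathbf{c}=\balpha_2$, so $K(\bX_i)=\mathbf{c}^\top\Mb_1\bh_1(\bX_i)$ and $L(\bX_i)=\mathbf{c}^\top\Mb_2\bh_2(\bX_i)$ --- precisely condition~2 of Theorem~\ref{thmdr} with both of its vectors equal to $\mathbf{c}$.

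For the efficiency claim, the key remark is that when $m=q$ and $\Gb^*$ is invertible the GMM estimator, and hence its limiting distribution, does not depend on $\Wb^*$: by GMM consistency and Assumption~\ref{ass1}(3) the estimator $\hat\bbeta$ is interior with probability tending to one; by continuity and invertibility of $\Gb^*$ the Jacobian $\partial\bar\bg_{\bbeta}(\bT,\bX)/\partial\bbeta$ is invertible near $\bbeta^o$; and so the first-order condition $(\partial\bar\bg_{\hat\bbeta}/\partial\bbeta)^\top\hat\Wb\,\bar\bg_{\hat\bbeta}(\bT,\bX)=0$ together with invertibility of $\hat\Wb$ forces $\bar\bg_{\hat\bbeta}(\bT,\bX)=0$, a system with a locally unique root that involves no weighting matrix. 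We may therefore take $\Wb^*=\bOmega^{-1}$ and apply part~3 of Theorem~\ref{thmAN}, obtaining $\sqrt n(\hat\mu_{\hat\bbeta}-\mu)\stackrel{d}{\to}N(0,V)$ with $V$ as in~\eqref{eqdeltavar}. Since $\Gb^*$ is square and invertible, $\Gb^*(\Gb^{*\top}\bOmega^{-1}\Gb^*)^{-1}\Gb^{*\top}=\bOmega$, which collapses the second term of~\eqref{eqdeltavar} to $(\balpha_1^\top\Mb_1,\balpha_2^\top\Mb_2)\bOmega(\Mb_1^\top\balpha_1,\Mb_2^\top\balpha_2)^\top$; hence $V$ equals~\eqref{eqinforbound}, which in turn equals the Hahn bound $V_{\textrm{opt}}$ of~\eqref{eqinforbound1} as recorded immediately before the corollary. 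Thus $\hat\mu_{\hat\bbeta}$ attains the semiparametric efficiency bound when both models are correctly specified, and together with the double robustness just proved this is exactly the statement that $\hat\mu_{\hat\bbeta}$ is locally semiparametric efficient in the sense of~\cite{robins1994estimation}.

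\textbf{Main obstacle.}
I expect the one point needing care to be the assertion that in the just-identified case the GMM estimator solves $\bar\bg_{\hat\bbeta}(\bT,\bX)=0$ exactly and is therefore weighting-free: one must check that this system has a locally unique root near $\bbeta^o$ (so different positive definite $\hat\Wb$ give the same $\hat\bbeta$), and observe that invertibility of $\Gb^*$ together with the first-order condition for the population criterion forces $\EE(\bar\bg_{\bbeta^o}(\bT,\bX))=0$, so a root exists even when the propensity score model is misspecified. If one prefers to avoid this route, apply part~1 of Theorem~\ref{thmAN} instead (valid for arbitrary $\Wb^*$ once the propensity score model is correct, as under ``both conditions''): when $\Gb^*$ is square and invertible one has $(\Gb^{*\top}\Wb^*\Gb^*)^{-1}\Gb^{*\top}\Wb^*=\Gb^{*-1}$, so $\bSigma_{\bbeta}=\Gb^{*-1}\bOmega\Gb^{*-\top}$ and $\bSigma_{\mu\bbeta}$ lose their dependence on $\Wb^*$ and the stated variance again reduces to~\eqref{eqinforbound}. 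Everything else is routine matrix algebra.
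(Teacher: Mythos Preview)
Your proposal is correct and follows the paper's route: deduce double robustness from Theorem~\ref{thmdr} by showing that condition~2 of the corollary implies condition~2 of that theorem via the invertibility of $\Gb^{*\top}\Wb^*=(\Mb_1,\Mb_2)$, and deduce efficiency from Theorem~\ref{thmAN}. The paper phrases the first step as the span equality $\textrm{span}\{\bh_j(\cdot)\}=\textrm{span}\{\Mb_j\bh_j(\cdot)\}$ (arguing via an invertible $m_j\times m_j$ submatrix of $\Mb_j$, which is just your full-column-rank observation) and for the second step writes only ``follows from Theorem~\ref{thmAN}'' without the weighting-free justification or the matrix simplifications you supply, so your version is in fact more complete.
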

The corollary shows that the proposed oCBPS method has two advantages
over the original CBPS method \citep{imai:ratk:14} with balancing
first and second moments of $\bX_i$ and/or the score function of the
propensity score model. First, the proposed estimator
$\hat{\mu}_{\hat\bbeta}$ is robust to model misspecification, whereas
the original CBPS estimator does not have that property. Second, the
proposed oCBPS estimator can be more efficient than the original CBPS
estimator.

Corollary \ref{cormq} also implies that the asymptotic variance of
$\hat{\mu}_{\hat\bbeta}$ is identical to the semiparametric variance
bound $V_{\textrm{opt}}$, even if we incorporate additional covariate
balancing functions into $\bh_1(\cdot)$ and $\bh_2(\cdot)$. Namely,
under the conditions in Corollary \ref{cormq}, we have
$V=\bar V=V_{\textrm{opt}}$ in the context of Corollary
\ref{coraddmoment}. Thus, in this setting, we can improve the
robustness of the estimator without sacrificing the efficiency by
increasing the number of functions $\bh_1(\cdot)$ and
$\bh_2(\cdot)$. Meanwhile, this also makes the propensity score model
more flexible, since we need to increase the number of parameters
$\bbeta$ to match $m=q$ as required in Corollary \ref{cormq}. This
observation further motivates us to consider a sieve estimation
approach to improve the oCBPS method, as shown in Section
\ref{secnon}.


\begin{remark}[\bf Implementation of the oCBPS method]
  Based on Corollary \ref{cormq}, $\bh_1(\cdot)$ serves as the basis
  functions for the baseline conditional mean function $K(\cdot)$,
  while $\bh_2(\cdot)$ represents the basis functions for the
  conditional average treatment effect function $L(\cdot)$.  Thus, in
  practice, researchers can choose a set of basis functions for the
  baseline conditional mean function and the conditional average
  treatment effect function when determining the specification for
  $\bh_1(\cdot)$ and $\bh_2(\cdot)$.  Once these functions are
  selected, they can over-parameterize the propensity score model by
  including some higher order terms or interactions such that $m=q$
  holds.  The resulting oCBPS-based IPTW estimator may reduce bias
  under model misspecification and attain high
  efficiency. 
\end{remark}

\begin{remark}
  We also extend the oCBPS method to the estimation of the average
  treatment effect for the treated (ATT). Given the space limitation,
  we defer the details to the supplementary material. 
\end{remark}


\subsection{Comparison with Related Estimators}

Next, we compare the proposed estimator with some related estimators
from the literature.  We begin with the following standard AIPW
estimator of \cite{robins1994estimation}, 
$$
\hat\mu^{AIPW}_{\bbeta,\balpha,\bgamma} \ = \ \frac{1}{n}\sum_{i=1}^n\left\{\frac{T_iY_i}{\pi_{\bbeta}(\bX_i)}-\frac{(1-T_i)Y_i}{1-\pi_{\bbeta}(\bX_i)}-(T_i-\pi_{\bbeta}(\bX_i))\left(\frac{K(\bX_i,\balpha)+L(\bX_i,\bgamma)}{\pi_{\bbeta}(\bX_i)}+\frac{K(\bX_i,\balpha)}{1-\pi_{\bbeta}(\bX_i)}\right)\right\},
$$
where $K(\bX_i,\balpha)$ and $L(\bX_i,\bgamma)$ are the conditional
mean models indexed by finite dimensional parameters $\balpha$ and
$\bgamma$.
Assume the linear outcome models:
$K(\bX_i,\balpha)=\balpha^T\bh_1(\bX_i)$ and
$L(\bX_i,\bgamma)=\bgamma^T\bh_2(\bX_i)$.  It is interesting to note
that our IPTW estimator $\hat\mu_{\hat\bbeta}$ in
Corollary~\ref{cormq} can be rewritten as the AIPW estimator
$\hat\mu^{AIPW}_{\hat\bbeta,\balpha,\bgamma}$ (for any $\balpha$ and
$\bgamma$), since we have,
$$
\frac{1}{n}\sum_{i=1}^n(T_i-\pi_{\hat\bbeta}(\bX_i))\left(\frac{K(\bX_i,\balpha)+L(\bX_i,\bgamma)}{\pi_{\hat\bbeta}(\bX_i)}+\frac{K(\bX_i,\balpha)}{1-\pi_{\hat\bbeta}(\bX_i)}\right)
\ = \ 0,
$$	
by the definition of the covariate balancing estimating equations in
(\ref{eq:cvb2}).

It is well known that the AIPW estimator is consistent provided that
either the propensity score model or the outcome model is correctly
specified. Since both the AIPW estimator and our estimator are doubly
robust and locally efficient, in the following we conduct a
theoretical investigation of these two estimators under the scenario
that {\it both} propensity score and outcome models are
misspecified.  Indeed, this scenario corresponds to the simulation
settings used in the influential study of \citet{MR2420462}.

To make the comparison mathematically tractable, we focus on the case
that both of these two models are locally misspecified. Similar to
Section \ref{sec:model.misspecification}, we assume that the true
treatment assignment satisfies,
$\pi(\bX_i)= \pi_{\bbeta^*}(\bX_i)\exp(\xi\ u(\bX_i;\bbeta^*))$ in
\eqref{eq:misp}, while the true regression functions $K(\bX_i)$ and
$L(\bX_i)$ in (\ref{eq:CEFy}) satisfy
\begin{equation}\label{eqmisp2}
K(\bX_i)=\balpha^{*\top}\bh_1(\bX_i)+\delta r_1(\bX_i),~~~L(\bX_i)=\bgamma^{*\top}\bh_2(\bX_i)+\delta r_2(\bX_i),
\end{equation}
where $\balpha^*$ and $\bgamma^*$ can be viewed as the approximate
true values of $\balpha$ and $\bgamma$, the functions $r_1(\bX_i)$ and
$r_2(\bX_i)$ determine the direction of misspecification, and
$\delta\in\RR$ represents the magnitude of misspecification.

Assume further that the models are locally misspecified, i.e.,
$\xi, \delta=o(1)$. Under regularity conditions similar to
Section~\ref{sec:model.misspecification}, we can show that the
proposed estimator satisfies,
\begin{align}
\hat\mu_{\hat\bbeta}-\mu&=\frac{1}{n}\sum_{i=1}^n\Big[\frac{T_i}{\pi(\bX_i)}\{Y_i(1)-K(\bX_i)-L(\bX_i)\}-\frac{1-T_i}{1-\pi(\bX_i)}\{Y_i(0)-K(\bX_i)\}+L(\bX_i)-\mu\Big]\nonumber\\
&~~~~+O_p(\xi^2\delta+\delta n^{-1/2}+\xi n^{-1/2}),\label{eqmis_cbps}
\end{align}
whereas the AIPW estimator satisfies,
\begin{align}
\hat\mu^{AIPW}_{\tilde\bbeta,\tilde\balpha,\tilde\bgamma}-\mu&=\frac{1}{n}\sum_{i=1}^n\Big[\frac{T_i}{\pi(\bX_i)}\{Y_i(1)-K(\bX_i)-L(\bX_i)\}-\frac{1-T_i}{1-\pi(\bX_i)}\{Y_i(0)-K(\bX_i)\}+L(\bX_i)-\mu\Big]\nonumber\\
&~~~~+O_p(\xi\delta+\delta n^{-1/2}+\xi n^{-1/2}),\label{eqmis_aipw}
\end{align}
where $\tilde\bbeta,\tilde\balpha$ and $\tilde\bgamma$ are the
corresponding maximum likelihood and least square estimators. The
derivation of (\ref{eqmis_cbps}) and (\ref{eqmis_aipw}) is shown in
Appendix~\ref{sec_mis_cbps_aipw}.

The leading terms in the asymptotic expansions of
$\hat\mu_{\hat\bbeta}-\mu$ and
$\hat\mu^{AIPW}_{\tilde\bbeta,\tilde\balpha,\tilde\bgamma}-\mu$ are
identical and are known as the efficient influence function for
$\mu$. However, the remainder terms in (\ref{eqmis_cbps}) and
(\ref{eqmis_aipw}) may have different order. Consider the following
two scenarios.  First, if $\xi\delta\gg n^{-1/2}$, then we have
$\hat\mu_{\hat\bbeta}-\mu=O_p(\xi^2\delta+n^{-1/2})$ and
$\hat\mu^{AIPW}_{\tilde\bbeta,\tilde\balpha,\tilde\bgamma}-\mu=O_p(\xi\delta)$. Thus, the proposed
estimator $\hat\mu_{\hat\bbeta}$ converges in probability to the ATE at a faster
rate than $\hat\mu^{AIPW}_{\tilde\bbeta,\tilde\balpha,\tilde\bgamma}$. Second, if $\xi\delta= o(n^{-1/2})$, the two estimators have the
same limiting distribution, i.e.,
$\sqrt{n}(\hat{\mu}-\mu) \stackrel{d}{\longrightarrow} N(0,
V_{\textrm{opt}})$, where $\hat\mu$ can be either
$\hat{\mu}_{\hat\bbeta}$ or
$\hat\mu^{AIPW}_{\tilde\bbeta,\tilde\balpha,\tilde\bgamma}$. However,
the rates of convergence of the Gaussian approximation determined by
the remainder terms in (\ref{eqmis_cbps}) and (\ref{eqmis_aipw}) are
different. For instance, assume that $\xi=\delta=n^{-(1/4+\epsilon)}$
for some small positive $\epsilon<1/4$. We observe that the remainder
term in (\ref{eqmis_cbps}) is of order $O_p(n^{-(3/4+\epsilon)})$ and
is smaller in magnitude than the corresponding term in
(\ref{eqmis_aipw}), which is of order $O_p(n^{-(1/2+2\epsilon)})$. As
a result, the proposed estimator converges in distribution to
$N(0, V_{\textrm{opt}})$ at a faster rate than the AIPW estimator. The
above analysis justifies the theoretical advantage of the proposed
oCBPS estimator over the standard AIPW estimator.

Furthermore, the proposed estimator is related to the class of
bias-reduced doubly robust estimators \citep{vermeulen2015bias}, see also \cite{robins2007comment}. To see
this, we consider the derivative of
$\hat\mu^{AIPW}_{\bbeta,\balpha,\bgamma}$ with respect to the nuisance
parameters $\balpha,\bgamma$. In particular, under the linear
outcome models, it is easily shown that
$\partial \hat\mu^{AIPW}_{\bbeta,\balpha,\bgamma}/\partial
\balpha=\bar\bg_{1\bbeta}(\bT, \bX)$ and
$\partial \hat\mu^{AIPW}_{\bbeta,\balpha,\bgamma}/\partial
\bgamma=\bar\bg_{2\bbeta}(\bT, \bX)$, where
$\bar\bg_{1\bbeta}(\bT, \bX)$ and $\bar\bg_{2\bbeta}(\bT, \bX)$ are
our covariate balancing functions in (\ref{eq:cvb2}). This provides an
alternative justification for the proposed method: the oCBPS estimator
$\hat\bbeta$, which satisfies $\bar\bg_{2\bbeta}(\bT, \bX)=0$ and
$\bar\bg_{1\bbeta}(\bT, \bX)=0$, removes the local effect of the
estimated nuisance parameters, i.e.,
$\partial \hat\mu^{AIPW}_{\hat\bbeta,\balpha,\bgamma}/\partial
\balpha=0$ and
$\partial \hat\mu^{AIPW}_{\hat\bbeta,\balpha,\bgamma}/\partial
\bgamma=0$. This property would not hold if we replace $\hat\bbeta$ by
the maximum likelihood estimator or other convenient estimators of
$\bbeta$. \cite{vermeulen2015bias} defined the class of bias-reduced
doubly robust estimator as
$\hat\mu^{AIPW}_{\bar\bbeta,\bar\balpha,\bar\bgamma}$, where
$(\bar\bbeta,\bar\balpha,\bar\bgamma)$ are the estimators
corresponding to the estimating equations
$\partial \hat\mu^{AIPW}_{\bbeta,\balpha,\bgamma}/\partial \balpha=0,
\partial \hat\mu^{AIPW}_{\bbeta,\balpha,\bgamma}/\partial \bgamma=0,
\partial \hat\mu^{AIPW}_{\bbeta,\balpha,\bgamma}/\partial
\bbeta=0$. The first two sets of estimating equations are identical to
the covariate balancing estimating equations in (\ref{eq:cvb2}),
whereas the last set of estimating equations
$\partial \hat\mu^{AIPW}_{\bbeta,\balpha,\bgamma}/\partial \bbeta=0$
(leading to the estimators $\bar\balpha,\bar\bgamma$) is unnecessary
in our setting because
$\hat\mu_{\hat\bbeta}=\hat\mu^{AIPW}_{\hat\bbeta,\balpha,\bgamma}$
does not rely on how $\balpha$ and $\bgamma$ are estimated. As
expected, all the theoretical properties of the bias-reduced doubly
robust estimator in Section 3 of \cite{vermeulen2015bias} hold for our
estimator.

Recently, a variety of empirical likelihood based estimators are
proposed to match the moment of covariates in treatment and control
groups
\citep[e.g.,][]{tan2006distributional,tan:10,hain:12,grah:pint:egel:12,han2013estimation,chan:yam:zhan:16,zubizarreta2015stable,zhao2017entropy}. Usually,
these methods aim to estimate $\EE(Y_i(1))$ and $\EE(Y_i(0))$ (or
$\EE(Y_i(1)\mid T_i=0)$ and $\EE(Y_i(0)\mid T_i=1)$) separately and
combine then to estimate the ATE. Our approach directly estimates the
propensity score and ATE by jointly solving the potentially
over-identified estimating functions (\ref{eq:cvb2}). In addition, our
asymptotic results and the discussion rely on the GMM theory for
over-identified estimating functions which is different from these
methods. Another recent paper by \cite{zhao2016covariate} studied the
robustness of a general class of loss function based covariate
balancing methods. When the goal is to estimate the ATE, his score
function reduces to our first set of estimating functions
$\bar\bg_{1\bbeta}(\bT, \bX)$ in (\ref{eq:cvb2}). In this case, his
estimator is robust to the misspecification of the propensity score
model under the constant treatment effect model, i.e., $L(\bX)=\tau^*$
for some constant $\tau^*$. In comparison, our methodology and
theoretical results cover a broader case that allows for heterogeneous
treatment effects.

\section{Nonparametric oCBPS Methodology}\label{secnon}

In this section, we extend our theoretical results of the oCBPS methodology to
 nonparametric estimation.  As seen in Corollary
\ref{cormq}, the proposed estimator is efficient if both the
propensity score $\PP(T_i=1\mid \bX_i)$ and the conditional mean
functions $K(\cdot)$ and $L(\cdot)$ are correctly specified.  To avoid
model misspecification, we can choose a large number of basis
functions $\bh_1(\cdot)$ and $\bh_2(\cdot)$, such that the conditional
mean functions $K(\cdot)$ and $L(\cdot)$ satisfy the condition 2 in Corollary \ref{cormq}.

However, the parametric assumption for the propensity score model
$\PP(T_i=1\mid \bX_i)=\pi_{\bbeta^o}(\bX_i)$ imposed in
Corollary~\ref{cormq} may be too restrictive. Once the propensity
score model is misspecified, the proposed oCBPS-based IPTW estimator
$\hat\mu_{\hat\bbeta}$ is inefficient and could even become
inconsistent. To relax the strong parametric assumptions imposed in
the previous sections, we propose a flexible
nonparametric approach for modeling the propensity
score and the conditional mean functions. The main advantage of this
nonparametric approach is that, the resulting oCBPS-based IPTW
estimator is semiparametrically efficient under a much broader class
of propensity score models and the conditional mean models than those
of Corollary~\ref{cormq}.

Specifically, we assume $\PP(T_i=1\mid \bX_i)=J(\psi^*(\bX_i))$, where
$J(\cdot)$ is a known monotonic link function (e.g.,
$J(\cdot)=\exp(\cdot)/(1+\exp(\cdot))$), and $\psi^*(\cdot)$ is an
unknown smooth function. One practical way to estimate $\psi^*(\cdot)$ is
to approximate it by the linear combination of $\kappa$ basis functions,
where $\kappa$ is allowed to grow with $n$. This approach is known as the
sieve estimation
\citep{andrews1991asymptotic,newey1997convergence}. In detail, let
$\bB(\bx)=\{b_1(\bx),...,b_\kappa(\bx)\}$ denote a collection of $\kappa$ basis
functions, whose mathematical requirement is given in Assumption
\ref{assnon}. Intuitively, we would like to approximate $\psi^*(\bx)$ by
$\bbeta^{*\top}\bB(\bx)$, for some coefficient $\bbeta^*\in\RR^\kappa$.


To estimate $\bbeta^*$, similar to the parametric case, we define
$\bar\bg_{\bbeta}(\bT, \bX)=\sum_{i=1}^n\bg_{\bbeta}(\bT_i, \bX_i)/n$,
where
$\bg_{\bbeta}(\bT_i, \bX_i)=(\bg^{\top}_{1\bbeta}(\bT_i, \bX_i),
\bg^\top_{2\bbeta}(\bT_i, \bX_i))^\top$ with,
\begin{eqnarray*}
	\bg_{1\bbeta}(\bT_i, \bX_i) & = & \left(\frac{T_i}{J(\bbeta^\top\bB(\bX_i))}-\frac{1-T_i}{1-J(\bbeta^\top\bB(\bX_i))}\right)\bh_1(\bX_i),\\
	\bg_{2\bbeta}(\bT_i, \bX_i) & = & \left(\frac{T_i}{J(\bbeta^\top\bB(\bX_i))}-1\right)\bh_2(\bX_i).
\end{eqnarray*}
Recall that $\bh_1(\bX)\in\RR^{m_1}$ and $\bh_2(\bX)\in\RR^{m_2}$ are interpreted as the basis functions for $K(\bX)$ and $L(\bX)$.
Let $m_1+m_2=m$ and $\bh(\bX)=(\bh_1(\bX)^\top,
\bh_2(\bX)^\top)^\top$. Here, we assume $m=\kappa$, so that the number of
equations in $\bar\bg_{\bbeta}(\bT, \bX)$ is identical to the
dimension of the parameter $\bbeta$. Then define
$\tilde{\bbeta}=\arg\min_{\bbeta\in\Theta} \|\bar\bg_{\bbeta}(\bT,
\bX)\|_2^2$, where $\Theta$ is the  parameter space for $\bbeta$ and
$\|\bv\|_2$ represents the $L_2$ norm of the vector $\bv$. The
resulting IPTW estimator is, 
\begin{eqnarray*}
\tilde{\mu}_{\tilde\bbeta} & = & \frac{1}{n} \sum_{i=1}^{n}\left( \frac{T_iY_i}{J(\tilde\bbeta^\top\bB(\bX_i))}-\frac{(1-T_i)Y_i}{1-J(\tilde\bbeta^\top\bB(\bX_i))}\right).
\end{eqnarray*}
To establish the large sample properties of $\tilde{\mu}_{\tilde\bbeta}$, we require a few regularity conditions. Due to the space constraint, we defer the regularity conditions to the supplementary material. The following
theorem establishes the asymptotic normality and semiparametric
efficiency of the estimator $\tilde{\mu}_{\tilde\bbeta}$.

\begin{theorem}[\em Efficiency under nonparametric models] \label{thmnon}
Assume that Assumption~\ref{assnon} in the supplementary material holds, and there exist $r_b, r_h>1/2$, $\bbeta^*$ and $\balpha^*=(\balpha_1^{*\top},\balpha_2^{*\top})^\top\in\RR^{\kappa}$, such that the propensity score model satisfies
\begin{equation}\label{eqnonpropen}
\sup_{\bx\in\mathcal{X}}|\psi^*(\bx)-\bbeta^{*\top}\bB(\bx)|=O(\kappa^{-r_b}),
\end{equation}
and the outcome models $K(\cdot)$ and $L(\cdot)$ satisfy
\begin{equation}\label{eqnonoutcome}
\sup_{\bx\in\mathcal{X}}|K(\bx)-\balpha_1^{*\top}\bh_1(\bx)|=O(\kappa^{-r_h}),~~\sup_{\bx\in\mathcal{X}}|L(\bx)-\balpha_2^{*\top}\bh_2(\bx)|=O(\kappa^{-r_h}).
\end{equation}
Assume $\kappa=o(n^{1/3})$ and $n^{\frac{1}{2(r_b+r_h)}}=o(\kappa)$. Then
$$
n^{1/2}(\tilde{\mu}_{\tilde\bbeta}-\mu)\stackrel{d}{\longrightarrow} N(0,V_{\textrm{opt}}),
$$
where $V_{\textrm{opt}}$ is the asymptotic variance bound in  \eqref{eqinforbound1}. Thus, $\tilde{\mu}_{\tilde\bbeta}$ is semiparametrically efficient.
\end{theorem}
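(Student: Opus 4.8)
The plan is to mimic the parametric analysis from Section~\ref{sec:proposed.methodology}, but carrying along the sieve approximation errors, and to show that under the bandwidth conditions $\kappa = o(n^{1/3})$ and $n^{1/(2(r_b+r_h))} = o(\kappa)$ these errors are asymptotically negligible. First I would set up the asymptotic expansion of $\tilde\bbeta$ around $\bbeta^*$: since $\tilde\bbeta$ minimizes $\|\bar\bg_{\bbeta}(\bT,\bX)\|_2^2$ with $m=\kappa$ equations, under the regularity conditions (nonsingularity of the population Jacobian $\Gb^*$, positivity of $J(\bbeta^{*\top}\bB(\bx))$, moment bounds on $\bh$ and $\bB$) one obtains $\tilde\bbeta - \bbeta^* = -(\Gb^{*\top}\Gb^*)^{-1}\Gb^{*\top}\bar\bg_{\bbeta^*}(\bT,\bX) + (\text{higher-order terms})$ — effectively a just-identified GMM expansion, except that now $\bbeta^*$ is only an approximate solution: $\EE(\bar\bg_{\bbeta^*}(\bT,\bX)) \ne 0$ but is of order $O(\kappa^{-r_b})$ because $\psi^*(\bx) = \bbeta^{*\top}\bB(\bx) + O(\kappa^{-r_b})$. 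This is where the matrix Bernstein inequality \citep{tropp2015introduction} enters: I would use it to control $\|\bar\bg_{\bbeta^*}(\bT,\bX) - \EE\bar\bg_{\bbeta^*}(\bT,\bX)\|_2$ and the deviation of the empirical Jacobian from $\Gb^*$ uniformly, yielding rates that depend on $\kappa$ and $n$, and the Bernstein-type inequality for U-statistics \citep{arcones1995bernstein} for the quadratic remainder terms in the Taylor expansion of $\bar\bg_{\bbeta}$ in $\bbeta$.

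Next I would plug the expansion of $\tilde\bbeta$ into the IPTW estimator and perform a delta-method expansion of $\tilde\mu_{\tilde\bbeta}$ around $\bbeta^*$. The key algebraic identity — exactly as in the derivation of \eqref{eqbias} and the AIPW equivalence — is that the first-order effect of $\tilde\bbeta - \bbeta^*$ on $\tilde\mu_{\tilde\bbeta}$ is cancelled by the covariate-balancing structure of $\bar\bg_{1\bbeta}$ and $\bar\bg_{2\bbeta}$, provided $K(\cdot)$ and $L(\cdot)$ lie (approximately) in the spans of $\bh_1$ and $\bh_2$. Since here they lie in those spans only up to $O(\kappa^{-r_h})$ (by \eqref{eqnonoutcome}), the cancellation is imperfect and leaves a cross term of order $O_p(\kappa^{-r_b} \cdot \kappa^{-r_h}) = O_p(\kappa^{-(r_b+r_h)})$; the condition $n^{1/(2(r_b+r_h))} = o(\kappa)$ is precisely what makes $\sqrt{n}\,\kappa^{-(r_b+r_h)} = o(1)$, so this bias term vanishes after scaling by $\sqrt n$. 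The remaining stochastic terms should collapse to the empirical average of the efficient influence function $\frac{T_i}{\pi(\bX_i)}\{Y_i(1)-K(\bX_i)-L(\bX_i)\} - \frac{1-T_i}{1-\pi(\bX_i)}\{Y_i(0)-K(\bX_i)\} + L(\bX_i) - \mu$, whose variance is $V_{\textrm{opt}}$ from \eqref{eqinforbound1}; a Lindeberg CLT then gives the asymptotic normality. Along the way one must also verify that replacing $\pi_{\bbeta^*}$-type quantities by their $J(\psi^*(\cdot))$ counterparts introduces only $o(n^{-1/2})$ error, again using $\sqrt n\,\kappa^{-r_b} = o(1)$, which follows from $r_b > 1/2$ and $\kappa \to \infty$ (indeed $\kappa$ must grow, since $n^{1/(2(r_b+r_h))} = o(\kappa)$).

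The main obstacle, I expect, is the uniform control of the higher-order remainder terms in the joint expansion over the growing-dimensional parameter $\bbeta \in \Theta \subset \RR^\kappa$: one needs that the second-derivative (in $\bbeta$) terms of $\bar\bg_{\bbeta}$, which are quadratic in $\bB(\bX_i)$ and hence involve $\kappa \times \kappa$ random matrices, contribute only $o_p(n^{-1/2})$ after being contracted against $\tilde\bbeta - \bbeta^* = O_p(\sqrt{\kappa/n})$ twice. This forces the requirement $\kappa = o(n^{1/3})$: schematically the quadratic remainder is of order $\kappa \cdot (\kappa/n) = \kappa^2/n$, and $\sqrt n \cdot \kappa^2/n = \kappa^2/\sqrt n = o(1)$ iff $\kappa = o(n^{1/4})$ — so in fact the more delicate bookkeeping with the U-statistic Bernstein inequality is what relaxes this to $n^{1/3}$, by exploiting that the centered quadratic form concentrates faster than its crude operator-norm bound suggests. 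Establishing that the sieve-estimated propensity scores stay bounded away from $0$ and $1$ uniformly with probability tending to one — needed to make all the denominators well-behaved — is a secondary technical point handled via the sup-norm approximation \eqref{eqnonpropen} together with a uniform law of large numbers for $\bbeta^{\top}\bB(\bx)$ over a shrinking neighborhood of $\bbeta^*$. Once these concentration bounds are in hand, the rest is a careful but routine accounting of which terms survive the $\sqrt n$ scaling.
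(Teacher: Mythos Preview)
Your high-level plan captures part of the argument, but there is a concrete gap and the route you sketch diverges from the paper's in a structural way.

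The gap: you claim that ``replacing $\pi_{\bbeta^*}$-type quantities by their $J(\psi^*(\cdot))$ counterparts introduces only $o(n^{-1/2})$ error, again using $\sqrt n\,\kappa^{-r_b} = o(1)$, which follows from $r_b > 1/2$ and $\kappa \to \infty$.'' This is false: $\sqrt{n}\,\kappa^{-r_b} = o(1)$ requires $\kappa \gg n^{1/(2r_b)}$, and under the standing constraint $\kappa = o(n^{1/3})$ this forces $r_b > 3/2$, which is \emph{not} assumed. A naive swap of $J(\bbeta^{*\top}\bB(\cdot))$ for $\pi^*(\cdot)$ costs $O(\kappa^{-r_b})$, which is not $o(n^{-1/2})$ in general. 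Your delta-method expansion around $\bbeta^*$ therefore produces a leading term $\tilde\mu_{\bbeta^*}$ that is \emph{not} the sample average of the efficient influence function (it has $J(\bbeta^{*\top}\bB)$ in the denominators, not $\pi^*$), and you have no mechanism to close that gap.

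The paper's argument is structurally different and never performs such a replacement. It does \emph{not} expand $\tilde\mu_{\tilde\bbeta}$ in $\tilde\bbeta$. Instead it writes an exact algebraic decomposition exploiting that $\bar\bg_{\tilde\bbeta}(\bT,\bX)=0$ holds \emph{identically} (the system is just-identified, so $\tilde\bbeta$ is a genuine root): the $\balpha_1^{*\top}\bh_1$ and $\balpha_2^{*\top}\bh_2$ pieces of $K$ and $L$ cancel exactly, leaving $\tilde\mu_{\tilde\bbeta}-\mu=\frac{1}{n}\sum_i S_i + R_0+R_1+R_2+R_3$, where $S_i$ is the efficient influence function (built with the true $\pi^*$) and each $R_j$ is a product of two small factors. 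The terms $R_0,R_1$ pair $(\pi_i^*-\tilde J_i)$ with the conditional-mean-zero residuals $Y_i(t)-\EE(Y_i(t)\mid\bX_i)$; the terms $R_2,R_3$ pair the IPTW weights at $\tilde\bbeta$ with the sieve residuals $\Delta_K,\Delta_L=O(\kappa^{-r_h})$. Crucially, the $O(\kappa^{-r_b})$ error in the propensity score never appears alone: it is always multiplied by either a mean-zero term (handled stochastically) or by $O(\kappa^{-r_h})$ (yielding the cross term $n^{1/2}\kappa^{-(r_b+r_h)}=o(1)$ you correctly anticipated).

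The remainders $R_0$--$R_3$ are then controlled not by delta-method bookkeeping but by \emph{empirical process theory}: this is precisely where Condition~7 of Assumption~\ref{assnon} (the bracketing-number bounds $\log N_{[~]}(\epsilon,\mathcal{M},L_2(P))\lesssim \epsilon^{-1/k_1}$ and similarly for $\mathcal{H}$) enters. One bounds $\sup_{f\in\mathcal{F}}\GG_n(f)$ over shrinking function classes by a bracketing integral that vanishes because the envelope is $O(\delta)$ with $\delta\to 0$. Your proposal omits this ingredient entirely; without it there is no way to show $R_0,R_1=o_p(n^{-1/2})$. The matrix-Bernstein and U-statistic-Bernstein inequalities are used only in the supporting lemmas (uniform convergence of $Q_n$ for consistency of $\tilde\bbeta$; control of empirical Gram matrices to obtain the rate $\|\tilde\bbeta-\bbeta^*\|_2=O_p(\kappa^{1/2}/n^{1/2}+\kappa^{-r_b})$), not in the main decomposition itself, so your placement of those tools is also off.
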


This theorem can be viewed as a nonparametric version of
Corollary~\ref{cormq}. It shows that one can construct a globally
efficient estimator of the treatment effect without imposing strong
parametric assumptions on the propensity score model and the outcome
model. Since the estimator is asymptotically equivalent to the sample
average of the efficient influence function, it is also adaptive in
the sense of \cite{bickel1998efficient}.


In the following, we comment on the technical assumptions of
Theorem~\ref{thmnon}. We assume $\psi^*(\bx)$ and $K(\bx)$ (also
$L(\bx)$) can be uniformly approximated by the basis functions
$\bB(\bx)$ and $\bh_1(\bx)$ (also $\bh_2(\bx)$) in (\ref{eqnonpropen}) and (\ref{eqnonoutcome}), respectively. It is
well known that the uniform rate of convergence is related to the
smoothness of the functions $\psi^*(\bx)$ and $K(\bx)$ (also $L(\bx)$)
and the dimension of $\bX$. For instance, if the function class
$\mathcal{M}$ for $\psi^*(\bx)$ and $\mathcal{H}$ for $K(\bx)$ (also
$L(\bx)$) correspond to the H\"older class with smoothness parameter
$s$ on the domain $\mathcal{X}=[0,1]^d$, under the assumption that $m_1\asymp m_2\asymp \kappa$,  (\ref{eqnonpropen}) and (\ref{eqnonoutcome})
hold for the spline basis and wavelet basis with $r_b=r_h=s/d$; see
\cite{newey1997convergence,chen2007large} for details. In the same setting,
\cite{hira:imbe:ridd:03} considered a nonparametric IPTW estimator,
which is globally efficient under the condition $s/d>7$.
\cite{imbens2005mean} established the asymptotic equivalence between a
regression based estimator and \cite{hira:imbe:ridd:03}'s estimator
under $s/d>9$. Recently, \cite{chan:yam:zhan:16} proposed a sieve
based calibration estimator under the condition $s/d>13$. Compared to
these existing results, our theorem needs a much weaker condition,
i.e.,
$s/d>3/4$. We refer to the supplementary material for further technical discussion of our nonparametric estimator.






\section{Simulation and Empirical Studies}\label{sec_empirical}

\subsection{Simulation Studies}
\label{sec:simulations}

In this section, we conduct a set of simulation studies to examine the
performance of the proposed methodology.  We consider the following
linear model for the potential outcomes,
\begin{eqnarray*}
  Y_i(1) &=& 200 + 27.4 X_{i1} + 13.7X_{i2} + 13.7X_{i3} + 13.7X_{i4} +\varepsilon_i,\\
  Y_i(0) &=& 200 + 13.7X_{i2} + 13.7X_{i3} + 13.7X_{i4} + \varepsilon_i.
\end{eqnarray*}
where $\varepsilon_i \sim N(0, 1)$, independent of $\bX_i$,
and consider the following true propensity score model
\begin{eqnarray}
\PP(T_i=1\mid \bX_i=\bx_i) = \frac{\exp(-\beta_1 x_{i1} + 0.5x_{i2} - 0.25x_{i3} - 0.1x_{i4})}{1+\exp(-\beta_1 x_{i1} + 0.5x_{i2} - 0.25x_{i3} - 0.1x_{i4})}, \label{eq:pscore}
\end{eqnarray}
where $\beta_1$ varies from 0 to 1.  When implementing the proposed
methodology, we set $\bh_1(\bx_{i}) = (1, x_{i2}, x_{i3}, x_{i4})$
and $\bh_2(\bx_{i}) = x_{i1}$ so that the number of equations is
equal to the number of parameters to be estimated.  Covariate $X_{i1}$ is
generated independently from $N(3,2)$ and $X_{i2}$, $X_{i3}$ and $X_{i4}$ are generated from $N(0,1)$.  Each set of results
is based on 500 Monte Carlo simulations.

\begin{table}[htp]
  \begin{center}
    \caption{The bias, standard deviation, root mean squared error (RMSE), and the coverage probability of the constructed 95\% C.I. of the IPTW estimator with known propensity score ({\sf True}), the IPTW estimator when the propensity score is fitted using the maximum likelihood ({\sf GLM}), the IPTW estimator when the propensity score is fitted using the generalized additive model ({\sf GAM}), the targeted maximum likelihood estimator ({\sf DR}), the standard CBPS estimator balancing the first moment ({\sf CBPS}), and the
proposed optimal CBPS estimator ({\sf oCBPS}) under the scenario that both the outcome model and the propensity score model are correctly specified. We vary the value of $\beta_1$ in the data generating model (\ref{eq:pscore}). \label{tb:setting_6}}
\footnotesize
    \begin{tabular}{lc.........}
      \hline
      \hline
      & &\multicolumn{4}{c}{$n=300$}& & \multicolumn{4}{c}{$n=1000$}\\
      \cline{3-6} \cline{8-11}
      & $\beta_1$ & 0 & 0.33 & 0.67 & 1  & & 0 & 0.33 & 0.67 & 1\\
      \hline
      \multirow{ 6}{*}{Bias}
      & {\sf True}&  -0.43 &  -0.01 & 1.15 & -5.19 &&  0.00 &  0.09 & -2.43 &  9.99\\
      & {\sf GLM} &   -0.18 &   -0.86 &  0.15 & -4.32 & &  -0.04 &  0.02 &  0.32 &  11.15\\
      & {\sf GAM} &  -0.74 &  -4.60 &  -15.55 & -35.38 & &  -0.19 &   -1.16 &  -2.85 &  -6.86\\
      & {\sf DR} &   0.08 &  -1.04 &  -3.41 & -8.32 & &   0.18 &   -0.56 &   -2.14 &  -4.50\\
      & {\sf CBPS} &  -0.05 &  -0.09 & 0.54 & -0.27 & &  0.04 &  0.04 &  0.20 & 0.45\\
      & {\sf oCBPS} &  -0.04 &  0.03 & 0.07 & 0.06 & &  0.04 &  0.06 &  0.16 & 0.08\\
      \hline
      \multirow{ 6}{*}{}
      & {\sf True} &  29.52 &  39.46 &  72.56 & 138.33 &&   15.73 &  22.36 &  38.18 & 88.33\\
      & {\sf GLM} &   4.45 &  12.31 &  63.35 & 144.25 &  &  2.21 &   5.49 &  22.93 & 114.45\\
      Std& {\sf GAM} &   4.31 &  14.91 &  43.08 & 100.16 & &  2.06 &  5.22 & 21.27 &  51.96\\
      Dev& {\sf DR} &   2.39 &  2.57 &  4.25 & 8.06 & &  1.20 &  1.29 &  1.76 &  3.32\\
      & {\sf CBPS} & 2.39 & 2.35 & 2.66 & 15.94 && 1.24 & 1.26 & 1.27 & 1.45\\
      & {\sf oCBPS} & 2.26 & 2.16 & 2.27 & 2.39 & & 1.20 & 1.20 & 1.18 & 1.22\\
      \hline
      \multirow{ 6}{*}{RMSE}
      & {\sf True}&  29.52 &  39.46 &  72.57 & 138.43 & &  15.73 &  22.36 &  38.26 & 88.89\\
      & {\sf GLM} &   4.46 &  12.34 &  63.35 & 144.32 & &   2.21 &   5.49 &  22.93 & 114.99\\
      & {\sf GAM} &   4.37 &   15.60 &  45.81 & 106.23 & &   2.07 &   5.35 &   21.46 &  52.41\\
      & {\sf DR} &   2.39 &   2.77 &  5.45 & 11.58 & &   1.21 &   1.41 &   2.77 &  5.59\\
      & {\sf CBPS} & 2.39 & 2.35 & 2.72 & 15.94 & & 1.24 & 1.26 & 1.29 & 1.52\\
      & {\sf oCBPS} & 2.26 & 2.16 & 2.27 & 2.39 && 1.20 & 1.20 & 1.19 & 1.23\\
      \hline
      \multirow{ 6}{*}{}
      & {\sf True}&  0.936 &  0.938 & 0.922 & 0.948 & & 0.962 & 0.942 &  0.926 & 0.948\\
      Coverage& {\sf GLM} &   0.946 &  0.946 & 0.946 & 0.946 & & 0.944 & 0.954 &  0.954 & 0.958\\
      Probability& {\sf GAM} &   0.704 &  0.310 & 0.090 & 0.028 & & 0.754 & 0.382 &  0.108 & 0.048\\
      (of the & {\sf DR} &   0.928 &  0.876 & 0.576 & 0.278 & & 0.960 & 0.906 &  0.562 & 0.268\\
      95\% C.I.)& {\sf CBPS} & 0.944 &  0.944 & 0.944 & 0.944 & & 0.960 & 0.958 &  0.958 & 0.968\\
      & {\sf oCBPS} & 0.950 &  0.964 & 0.962 & 0.982 & & 0.956 & 0.954 &  0.962 & 0.966\\
      \hline
    \end{tabular}
  \end{center}
\end{table}

We examine the performance of the IPTW estimator when the propensity
score model is fitted using maximum likelihood ({\sf GLM}), the
standard CBPS with balancing the first moment ({\sf CBPS}), and the
proposed optimal CBPS ({\sf oCBPS}) as well as the case where the true
propensity score ({\sf True}), i.e., $\bbeta = \bbeta^*$, is used for
the IPTW estimator. In addition, we include the IPTW estimator when
the propensity score model is estimated by logistic series
\citep{hira:imbe:ridd:03}. Since a fully nonparametric logistic series
approach is impractical to implement due to the curse of
dimensionality, instead we consider a generalized additive model  ({\sf GAM})
and apply the logistic series approach to each of the covariate
separately. Finally, we also include the targeted maximum likelihood
estimator, a doubly robust estimator ({\sf DR},
\cite{benkeser2016doubly}), using the R package {\sf drtmle}.  

In the first set of simulations, we use the correctly specified propensity score and outcome models. 
Table~\ref{tb:setting_6} shows the
standard deviation, bias, root mean square error (RMSE), and the coverage probability of the constructed 95\% confidence intervals of these estimators when the sample size is $n=300$ and $n=1000$. The confidence intervals are constructed using estimates of the asymptotic variances of the estimators. The exact formulas can be found in the supplementary material. We find that
 {\sf CBPS} and {\sf oCBPS} substantially outperform {\sf
  True}, {\sf GLM}, and {\sf GAM} in terms of efficiency, and in most cases outperform {\sf DR} as well.  
  In addition, {\sf oCBPS} is more efficient than {\sf CBPS} in all the cases as well.  The efficiency improvement is consistent with Corollary~\ref{cormq}. The coverage probabilities of {\sf True}, {\sf GLM},  {\sf CBPS} and {\sf oCBPS} are close to the nominal level. However, {\sf GAM} yields much lower coverage probability, partly because the estimates of the propensity score from logistic series are unstable. The pattern becomes more evident as $\beta_1$ increases, corresponding to the setting that the propensity score can be close to 0 or 1. Similarly, the coverage probability of {\sf DR} also deteriorates  as $\beta_1$ increases. 
    

\begin{table}[htp]
\begin{center}
  \caption{Correct Outcome Model with a Misspecified Propensity Score
    Model. \label{tb:setting_7}}
 \footnotesize
 \begin{tabular}{lc.........}
      \hline
      \hline
      & &\multicolumn{4}{c}{$n=300$}& & \multicolumn{4}{c}{$n=1000$}\\
      \cline{3-6} \cline{8-11}
      & $\beta_1$ & 0 & 0.33 & 0.67 & 1  & & 0 & 0.33 & 0.67 & 1\\
      \hline
      \multirow{ 6}{*}{Bias}
      & {\sf True}&  0.00 &  2.13 & 0.08 & 4.79 &&  -1.28 &  -0.36 & 1.83 &  3.62\\
      & {\sf GLM} &   0.41 &   -6.67 &  -18.84 & -32.15 & &  0.19 &  -6.33 &  -19.21 & -32.96\\
      & {\sf GAM} &  15.61 &  3.11 &  -7.16 & -20.76 & &  4.07 &   0.28 &  -4.98 &  -14.11\\
      & {\sf DR} &   -0.29 &  -0.68 &  -1.89 & -3.60 & &   -0.21 &   -0.39 &   -1.23 &  -2.75\\
      & {\sf CBPS} &  0.84 &  -0.05 & -2.06 & -2.44 & &  0.06 &  -0.79 &  -2.74 & -3.28\\
      & {\sf oCBPS} &  -0.20 &  -0.02 & -0.13 & 0.07 & &  -0.04 &  0.03 &  0.01 & -0.05\\
      \hline
      \multirow{ 6}{*}{}
      & {\sf True} &  45.43 &  36.03 &  39.77 & 77.26 &&   26.32 &  19.36 &  39.15 & 88.45\\
      & {\sf GLM} &   11.23 &  12.66 &  15.73 & 26.82 &  &  2.17 &   5.32 &  8.61 & 10.92\\
      Std& {\sf GAM} &   19.91 &  9.40 & 8.81 & 16.18 & &  4.29 &  2.87 & 4.14 &  8.52\\
      Dev& {\sf DR} &   3.35 &  2.57 &  2.52 & 3.16 & &  1.42 &  1.27 &  1.28 &  1.57\\
      & {\sf CBPS} & 3.21 & 2.74 & 3.18 & 3.61 && 1.25 & 1.41 & 1.74 & 2.04\\
      & {\sf oCBPS} & 2.26 & 2.30 & 2.28 & 2.34 & & 1.24 & 1.26 & 1.24 & 1.29\\
      \hline
      \multirow{ 6}{*}{RMSE}
      & {\sf True}&  45.43 &  36.10 &  39.77 & 77.40 & &  26.36 &  19.36 &  39.20 & 88.52\\
      & {\sf GLM} &   11.24 &  14.31 &  24.55 & 41.86 & &   2.18 &   8.27 &  21.05 & 34.72\\
      & {\sf GAM} &   25.30 &   9.90 &  11.35 & 26.32 & &   5.91 &   2.89 &   6.47 &  16.48\\
      & {\sf DR} &   3.37 &   2.65 &  3.15 & 4.79 & &   1.44 &   1.33 &   1.78 &  3.16\\
      & {\sf CBPS} & 3.32 & 2.74 & 3.79 & 4.36 & & 1.26 & 1.62 & 3.24 & 3.86\\
      & {\sf oCBPS} & 2.27 & 2.30 & 2.29 & 2.34 && 1.24 & 1.26 & 1.24 & 1.29\\
      \hline
      \multirow{ 6}{*}{}
      & {\sf True}&  0.952 &  0.936 & 0.964 & 0.972 & & 0.946 & 0.950 &  0.960 & 0.988\\
      Coverage& {\sf GLM} &   0.964 &  0.898 & 0.740 & 0.834 & & 0.948 & 0.714 &  0.300 & 0.346\\
      Probability& {\sf GAM} &   0.236 &  0.434 & 0.286 & 0.066 & & 0.356 & 0.648 &  0.178 & 0.042\\
      (of the & {\sf DR} &   0.882 &  0.904 & 0.822 & 0.596 & & 0.908 & 0.938 &  0.788 & 0.392\\
      95\% C.I.)& {\sf CBPS} & 0.956 &  0.978 & 0.924 & 0.914 & & 0.944 & 0.928 &  0.742 & 0.654\\
      & {\sf oCBPS} & 0.946 &  0.944 & 0.952 & 0.944 & & 0.950 & 0.950 &  0.954 & 0.954\\
      \hline
    \end{tabular}
\end{center}
\end{table}

We further evaluate our method by considering different cases of
misspecification for the outcome and propensity score models.  We
begin with the case where the outcome model is linear like before but
the propensity score is misspecified.  While we use the model given in
equation \eqref{eq:pscore} when estimating the propensity score, the
actual treatment is generated according to the following different
model,
\begin{eqnarray*}
\PP(T_i=1\mid \bX=\bx_i) & = &  \frac{\exp(-\beta_1x_{i1}^\ast + 0.5x_{i2}^\ast - 0.25x_{i3}^\ast - 0.1x_{i4}^\ast)}{1+\exp(-\beta_1x_{i1}^\ast + 0.5x_{i2}^\ast - 0.25x_{i3}^\ast - 0.1x_{i4}^\ast)},
\end{eqnarray*}
with $x_{i1}^\ast = \exp(x_{i1}/3)$,
$x_{i2}^\ast = x_{i2}/\{1+\exp(x_{i1})\} + 10$,
$x_{i3}^\ast = x_{i1}x_{i3}/25+0.6$, and
$x_{i4}^\ast = x_{i1}+x_{i4}+20$ where $\beta_1$ again varies from 0
to 1.  In other words, the model misspecification is introduced using
nonlinear transformations.  Table~\ref{tb:setting_7} shows the results
for this case.  As expected from the double robustness property shown
in Theorem~\ref{thmdr}, we find that the bias for the {\sf oCBPS}
becomes significantly smaller than all the other estimators.  
The {\sf oCBPS} also dominates the other estimators in terms of efficiency and maintains the desired coverage probability.

\begin{table}[htp]
\begin{center}
  \caption{Correctly Specified Outcome with a Locally Misspecified Propensity Score
    Model. \label{tb:setting_10}}
    \footnotesize
   \begin{tabular}{lc.........}
      \hline
      \hline
      & &\multicolumn{4}{c}{$n=300$}& & \multicolumn{4}{c}{$n=1000$}\\
      \cline{3-6} \cline{8-11}
      & $\beta_1$ & 0 & 0.33 & 0.67 & 1  & & 0 & 0.33 & 0.67 & 1\\
      \hline
      \multirow{ 6}{*}{Bias}
      & {\sf True}&  -1.96 &  0.69 & 0.80 & 4.87 &&  0.04 &  0.87 & -0.42 &  3.07\\
      & {\sf GLM} &   -16.73 &   8.43 &  5.85 & 19.96 && 8.55 &  0.84 &  4.65 &  21.07 \\
      & {\sf GAM} &  -8.19 &  7.68 &  -4.35 & -10.79 & &  4.62 &   -0.25 &  -0.63 &  2.95\\
      & {\sf DR} &   0.43 &  0.34 &  -0.83 & -3.67 & &   0.38 &   0.08 &   -1.39 &  -3.50\\
      & {\sf CBPS} &  -0.76 &  -2.15 & 0.56 & 1.34 & &  -1.92 &  -0.34 &  0.22 & 0.37\\
      & {\sf oCBPS} &  -0.41 &  0.05 & 0.10 & 0.06 & &  -0.05 &  0.02 &  -0.01 & -0.02 \\
      \hline
      \multirow{ 6}{*}{}
      & {\sf True} &  41.03 &  33.16 &  41.86 & 82.09 &&   20.65 &  18.39 &  28.44 & 59.63\\
      & {\sf GLM} &   67.79 &  9.55 &  23.67 & 72.99 &&  9.43 &   3.23 &  13.86 & 81.20\\
      Std& {\sf GAM} &   46.08 &  8.92 & 21.56 & 52.34 & &  11.06 &2.91 & 11.78 &  52.31\\
      Dev& {\sf DR} &   3.10 &  2.51 &  2.87 & 5.74 & &  1.37 &  1.29 &  1.59 &  2.60\\
      & {\sf CBPS} & 3.26 & 2.56 & 2.44 & 2.77 && 1.58 & 1.28 & 1.33 & 1.43\\
      & {\sf oCBPS} & 2.47 & 2.24 & 2.25 & 2.26 & & 1.29 & 1.22 & 1.26 & 1.29\\
      \hline
      \multirow{ 6}{*}{RMSE}
      & {\sf True}&  41.07 &  33.17 &  41.87 & 82.24 & &  20.65 &  18.41 &  28.44 & 59.70\\
      & {\sf GLM} &   69.82 &  12.74 &  24.39 & 75.67 && 12.73 &   3.34 &   14.62 &  83.89 \\
      & {\sf GAM} &   46.80 &   11.77 &  21.99 & 53.44 & &   11.98 &  2.92 &   11.80 &  52.39\\
      & {\sf DR} &   3.13 &   2.53 &  2.99 & 6.81 & &   1.42 &   1.29 &   2.11 &  4.36\\
      & {\sf CBPS} & 3.35 & 3.34 & 2.51 & 3.07 & & 2.49 & 1.32 & 1.34 & 1.48\\
      & {\sf oCBPS} & 2.50 & 2.24 & 2.26 & 2.27 && 1.29 & 1.22 & 1.26 & 1.29\\
      \hline
      \multirow{ 6}{*}{}
      & {\sf True}&  0.962 &  0.948 & 0.962 & 0.938 & & 0.934 & 0.946 &  0.954 & 0.942\\
      Coverage& {\sf GLM} &   0.804 &  0.788 & 0.888 & 0.916 & & 0.652 & 0.936 &  0.918 & 0.910\\
      Probability& {\sf GAM} &   0.132 &  0.294 & 0.238 & 0.076 & & 0.154 & 0.612 &  0.144 & 0.052\\
      (of the& {\sf DR} &   0.856 &  0.922 & 0.866 & 0.556 & & 0.916 & 0.936 &  0.736 & 0.332\\
      95\% C.I.)& {\sf CBPS} & 0.912 &  0.914 & 0.926 & 0.958 & & 0.752 & 0.954 &  0.954 & 0.952\\
      & {\sf oCBPS} & 0.916 &  0.946 & 0.936 & 0.954 & & 0.950 & 0.948 &  0.958 & 0.954\\
      \hline
    \end{tabular}
\end{center}
\end{table}

We also consider the case when the propensity score is locally misspecified with the equation (\ref{eq:misp}). In the case, we use \eqref{eq:pscore} as the working model $\pi_{\bbeta}(\bX_i)$, set $\xi=n^{-1/2}$ as in Theorem \ref{th:asymnormal_mis} and choose the function $u(\bX_i;\bbeta)=X_{i1}^2$ as the direction of misspecification.  We compute the true propensity score from the model (\ref{eq:misp}) and use it to generate the treatment variables. We note that sometimes the true propensity score may exceed 1. In this case we simply replace its value with $0.95$.  The results are given in Table~\ref{tb:setting_10}. {\sf oCBPS} dominates all the other estimators in terms of bias, standard deviation and root mean square error, but {\sf CBPS} and {\sf DR} are also noticeably better than {\sf True}, {\sf GLM}, and {\sf GAM}. 

\begin{table}[htp]
\begin{center}
  \caption{Misspecified Outcome Model with Correct Propensity Score
    Model.}\label{tb:setting_8}
   \footnotesize
   \begin{tabular}{lc.........}
      \hline
      \hline
      & &\multicolumn{4}{c}{$n=300$}& & \multicolumn{4}{c}{$n=1000$}\\
      \cline{3-6} \cline{8-11}
      & $\beta_1$ & 0 & 0.13 & 0.27 & 0.4  & & 0 & 0.13 & 0.27 & 0.4\\
      \hline
      \multirow{ 6}{*}{Bias}
      & {\sf True}&  -4.37 &  -0.03 & -4.24 & 1.51 &&  0.80 &  -1.00 & 2.31 &  2.67\\
      & {\sf GLM} &   0.38 &   -0.64 &  -2.67 & -1.33 & &  0.11 &  -0.44 &  0.05 & 0.75\\
      & {\sf GAM} &  -2.03 &  -5.49 &  -10.43 & -13.66 & &  -0.65 &   -1.72 &  -1.95 &  -3.04\\
      & {\sf DR} &   -2.77 &  -5.06 &  -9.92 & -14.36 & &   -2.98 &   -4.98 &   -7.43 &  -10.11\\
      & {\sf CBPS} &  0.07 &  -0.69 & -2.59 & -3.94 & &  0.05 &  -0.55 &  -0.71 & -1.63\\
      & {\sf oCBPS} &  -0.56 &  -0.97 & -3.05 & -4.37 & &  -0.03 &  -0.68 &  -0.84 & -1.70\\
      \hline
      \multirow{ 6}{*}{}
      & {\sf True} &  49.87 &  58.75 &  74.32 & 100.35 &&   27.61 &  33.62 &  44.75 & 53.58\\
      & {\sf GLM} &   18.12 &  24.87 &  34.83 & 56.17 &  &  9.68 &   12.37 &  18.45 & 31.16\\
      Std& {\sf GAM} &   17.59 &  23.19 & 34.72 & 49.87 & &  9.07 & 11.36 & 16.85 &  26.50\\
      Dev& {\sf DR} &   14.02 &  14.65 &  15.58 & 16.65 & &  7.95 &  8.26 &  8.21 &  8.40\\
      & {\sf CBPS} & 15.51 & 17.60 & 18.83 & 20.66 && 8.74 & 9.47 & 10.64 & 12.05\\
      & {\sf oCBPS} & 14.74 & 16.15 & 17.13 & 18.55 & & 8.44 & 9.03 & 9.68 & 10.87\\
      \hline
      \multirow{ 6}{*}{RMSE}
      & {\sf True}&  50.06 &  58.75 &  74.45 & 100.36 & &  27.62 &  33.64 &  44.81 & 53.60\\
      & {\sf GLM} &   18.13 &  24.88 &  34.93 & 56.18 & &   9.68 &   12.37 &  18.45 & 31.17\\
      & {\sf GAM} &   17.71 &   23.83 &  36.25 & 51.71 & &   9.09 &  11.49 &   16.96 &  26.67\\
      & {\sf DR} &   14.29 &   15.50 &  18.47 & 21.99 & &   8.49 &   9.65 &   11.07 &  13.15\\
      & {\sf CBPS} & 15.51 & 17.62 & 19.01 & 21.03 & & 8.74 & 9.49 & 10.66 & 12.16\\
      & {\sf oCBPS} & 14.75 & 16.18 & 17.40 & 19.06 && 8.44 & 9.06 & 9.72 & 11.00\\
      \hline
      \multirow{ 6}{*}{}
      & {\sf True}&  0.948 &  0.954 & 0.946 & 0.920 & & 0.938 & 0.950 &  0.910 & 0.922\\
      Coverage& {\sf GLM} &   0.896 &  0.852 & 0.870 & 0.868 & & 0.908 & 0.862 &  0.816 & 0.802\\
      Probability& {\sf GAM} &   0.912 &  0.832 & 0.676 & 0.476 & & 0.932 & 0.846 &  0.690 & 0.516\\
      (of the & {\sf DR} &   0.930 &  0.910 & 0.838 & 0.716 & & 0.924 & 0.874 &  0.794 & 0.688\\
      95\% C.I.) & {\sf CBPS} & 0.920 &  0.870 & 0.790 & 0.676 & & 0.914 & 0.862 &  0.776 & 0.668\\
      & {\sf oCBPS} & 0.950 &  0.930 & 0.908 & 0.904 & & 0.954 & 0.920 &  0.902 & 0.862\\
      \hline
    \end{tabular}
    \end{center}
\end{table}

We next examine the cases where the outcome model is misspecified.  We
do this by generating potential outcomes from the following quadratic model
\begin{eqnarray*}
\EE(Y_i(1)\mid\bX_i=\bx_i) &=& 200 + 27.4 x_{i1}^2 + 13.7x_{i2}^2 + 13.7x_{i3}^2 + 13.7x_{i4}^2,\\
\EE(Y_i(0)\mid\bX_i=\bx_i) &=& 200 + 13.7x_{i2}^2 + 13.7x_{i3}^2 + 13.7x_{i4}^2,
\end{eqnarray*}
whereas the propensity score model is the same as the one in
\eqref{eq:pscore} with $\beta_1$ varying from 0 to 0.4.
Table~\ref{tb:setting_8} shows the results when the outcome model is
misspecified but the propensity score model is correct. We find that
the magnitude of bias is similar across all estimators with the exception of 
{\sf GAM} and {\sf DR}, which seem to have a significantly larger bias. 
The {\sf DR} dominates in terms of standard deviation, but {\sf oCBPS} closely follows. 
In terms of the root mean square error, {\sf oCBPS} is on par with  {\sf DR}. 

\begin{table}[h!]
\begin{center}
  \caption{Misspecified Outcome with Misspecified Propensity Score
    Models. \label{tb:setting_9}}
    \footnotesize
   \begin{tabular}{lc.........}
      \hline
      \hline
      & &\multicolumn{4}{c}{$n=300$}& & \multicolumn{4}{c}{$n=1000$}\\
      \cline{3-6} \cline{8-11}
      & $\beta_1$ & 0 & 0.13 & 0.27 & 0.4  & & 0 & 0.13 & 0.27 & 0.4\\
      \hline
      \multirow{ 6}{*}{Bias}
      & {\sf True}&  0.54 &  -1.74 & 1.71 & -3.56 &&  -2.66 &  -2.52 & -2.06 &  -0.36\\
      & {\sf GLM} &   2.94 &   -1.70 &  -8.47 & -20.25 && -0.18 &  -2.07 &  -8.89 &  -18.79 \\
      & {\sf GAM} &  20.74 &  12.05 &  3.42 & -8.06 & &  4.95 &   2.35 &  -1.03 &  -5.01\\
      & {\sf DR} &   9.16 &  6.66 &  4.52 & 0.46 & &   6.55 &   4.91 &   2.80 &  0.36\\
      & {\sf CBPS} &  9.57 &  4.10 & 0.37 & -7.62 & &  0.46 &  -0.81 &  -4.94 & -11.18\\
      & {\sf oCBPS} &  2.51 &  -0.24 & -1.62 & -4.82 & &  0.04 &  -0.61 &  -2.29 & -4.54\\
      \hline
      \multirow{ 6}{*}{}
      & {\sf True} &  59.12 &  55.64 &  54.16 & 58.35 &&   34.79 &  31.31 &  28.41 & 31.62\\
      & {\sf GLM} &   25.00 &  19.44 &  22.49 & 26.01 &  &  9.67 &   9.79 &  11.17 & 12.44\\
      Std& {\sf GAM} &   30.85 &  23.01 & 19.46 & 21.72 & &  10.23 & 9.53 & 9.19 &  9.25\\
      Dev& {\sf DR} &   15.18 &  15.14 &  13.71 & 13.60 & &  7.86 &  7.85 &  7.69 &  7.70\\
      & {\sf CBPS} & 26.74 & 18.65 & 19.74 & 18.92 && 9.16 & 9.11 & 9.36 & 9.63\\
      & {\sf oCBPS} & 16.28 & 15.38 & 15.08 & 14.42 & & 8.93 & 8.60 & 8.32 & 8.27\\
      \hline
      \multirow{ 6}{*}{RMSE}
      & {\sf True}&  59.12 &  55.66 &  54.19 & 58.45 & &  34.89 &  31.42 &  28.48 & 31.62\\
      & {\sf GLM} &   25.18 &  19.51 &  24.03 & 32.96 && 9.67 &   10.00 &   14.28 &  22.53 \\
      & {\sf GAM} &   37.17 &   25.97 &  19.76 & 23.17 & &   11.37 &  9.81 &   9.25 &  10.52\\
      & {\sf DR} &   17.73 &   16.54 &  14.43 & 13.60 & &   10.24 &   9.26 &   8.19 &  7.71\\
      & {\sf CBPS} & 28.40 & 19.10 & 19.75 & 20.40 & & 9.17 & 9.15 & 10.59 & 14.76\\
      & {\sf oCBPS} & 16.47 & 15.38 & 15.17 & 15.20 && 8.93 & 8.62 & 8.63 & 9.43\\
      \hline
      \multirow{ 6}{*}{}
      & {\sf True}&  0.952 &  0.940 & 0.936 & 0.952 & & 0.936 & 0.940 &  0.952 & 0.916\\
      Coverage& {\sf GLM} &   0.854 &  0.902 & 0.866 & 0.788 & & 0.890 & 0.878 &  0.772 & 0.540\\
      Probability& {\sf GAM} &   0.714 &  0.810 & 0.860 & 0.832 & & 0.868 & 0.902 &  0.916 & 0.834\\
      (of the& {\sf DR} &   0.878 &  0.920 & 0.934 & 0.946 & & 0.876 & 0.906 &  0.936 & 0.946\\
      95\% C.I.)& {\sf CBPS} & 0.866 &  0.892 & 0.890 & 0.866 & & 0.894 & 0.888 &  0.852 & 0.670\\
      & {\sf oCBPS} & 0.940 &  0.964 & 0.926 & 0.934 & & 0.944 & 0.942 &  0.926 & 0.894\\
      \hline
    \end{tabular}
\end{center}
\end{table}

Finally, when both the outcome and propensity score models
are misspecified, we observe that {\sf DR} and {\sf oCBPS} dominate all other estimators with respect to all three criteria. In particular, {\sf oCBPS} performs much better than {\sf CBPS} in all scenarios. The results are organized in Table~\ref{tb:setting_9}.

In summary, the proposed {\sf oCBPS} method outperforms the {\sf CBPS}
method with respect to root mean square error (RMSE) under all five scenarios we
examined. In addition, the {\sf oCBPS} method often yields better or at least comparable results relative to all the other estimators.

\subsection{An Empirical Application}

We next apply the oCBPS methodology to a well-known study where the
experimental benchmark estimate is available.  Specifically,
\cite{lalonde1986evaluating} conducted a study, in which after the
randomized evaluation study was implemented, the experimental control
group is replaced with a set of untreated individuals taken from the
Panel Study of Income Dynamics.  This created an artificial
observational study with 297 treated observations and 2,490
control observations.  Ever since the original study, this data set
has been used for evaluating whether a new statistical methodology can
recover the experimental benchmark estimate \citep[see
e.g.,][]{dehejia1999causal,smith2005does}.  In the original CBPS
article, \cite{imai:ratk:14} use this data set to show that the
propensity score matching estimator based on the CBPS method
outperforms the matching estimator based on the standard logistic
regression. In the following, we evaluate whether the proposed oCBPS
method can further improve the CBPS methodology.

We begin by replicating the original results of \cite{imai:ratk:14}
and then compare those results with those of the proposed oCBPS
methodology.  To do this, we focus on the estimation of the average
treatment effect for the treated (ATT). The response of interest is earnings in 1978 and the treatment variable is whether or not the individual participates the job training program. 
The original randomized
experiment yields the ATT estimate \$886, which is used as a benchmark
for the later comparison.  \cite{imai:ratk:14} consider the propensity
score estimation based on the standard logistic regression (GLM), the
just-identified CBPS with moment balance condition only (CBPS1) and
the over-identified CBPS with score equation and moment balance
condition (CBPS2).  Based on each set of these estimated propensity
scores, we estimate the ATT using the 1-to-1 nearest neighbor matching
with replacement. The estimates of standard errors are based on the
results in \cite{abadie2006large}.  We then add the estimated
propensity score based on the proposed oCBPS methodology.  Since the
quantity of interest is the ATT, we use a slightly modified oCBPS
estimator described in Appendix~\ref{app_att}.

We follow the propensity score model specifications examined in
\cite{imai:ratk:14}.  The covariates we adjust include age, education,
race (white, black or Hispanic), marriage status, high school degree,
earnings in 1974 and earnings in 1975 as pretreatment variables. We
consider three different specification of balance conditions: the
first moment of covariates (Linear), the first and second moment of
covariates (Quadratic), and the Quadratic specification with some
interactions selected by \cite{smith2005does} (Smith \& Todd).  We
compare the performance of each methodology across these three
specifications.

\begin{table}[t]
\begin{center}
  \caption{The bias and standard errors (shown in parentheses) of estimates of the average treatment effect for the treated in the LaLonde's Study. We use the benchmark \$886 as the true value.  \label{tab_data}}
\begin{tabular}{lcccc}
\hline
\hline
&  {\sf GLM} & {\sf CBPS1} & {\sf CBPS2} & {\sf oCBPS} \\
\hline
Linear &-1190.92 & -462.7 & -702.33 & -306.01 \\
&(1437.02)& (1295.19)& (1240.79)& (1662.02)\\
Quadratic &  -1808.16 &  -646.54 & 207.13 & -370.03 \\
&(1382.38) &(1284.13)& (1567.33)& (1773.03)\\
Smith \& Todd &  -1620.49 &  -1154.07 &  -462.24 &  -383.12\\
&(1424.57)& (1711.66)& (1404.15)& (1748.87)\\
\hline\hline
\end{tabular}
\end{center}
\end{table}

The results are shown in Table~\ref{tab_data}.  We find that although
the standard error is relatively large as in any evaluation study
based on the LaLonde data, the proposed oCBPS method yields much
smaller bias than GLM and CBPS1 under all three specifications. The
oCBPS also improves the CBPS2 under the linear and Smith \& Todd's
specifications of covariates. We note that the standard error of the
oCBPS method appears to be larger than the competing methods. This may
be due to the fact that the uncertainty of the estimated propensity
score is ignored when we calculate the standard error of the matching
estimators (i.e., GLM, CBPS1 and CBPS2). In summary, consistent with
the theoretical results, the proposed oCBPS method yields more
accurate estimates of ATT than the original CBPS estimator or the
standard logistic regression.  Finally, it is important to note that
these results are only suggestive since we do not know whether the
assumptions of propensity score methods hold in this study.

\section{Conclusion}

This paper presents a theoretical investigation of the covariate
balancing propensity score methodology that others have found work
well in practice \citep[e.g.,][]{wyss:etal:14,frol:hube:wies:15}.  We
derive the optimal choice of the covariate balancing function so that
the resulting IPTW estimator is first order unbiased under local
misspecification of the propensity score model.  Furthermore, it turns
out that the CBPS-based IPTW estimator with the same covariate
balancing function attains the semiparametric efficiency bound.

Given these theoretical insights, we propose an optimal CBPS
methodology by carefully choosing the covariate balancing estimating
functions.  We prove that the proposed {oCBPS}-based IPTW estimator is
doubly robust and locally
efficient.  
More importantly, we show that the rate of convergence of the proposed
estimator is faster than the standard AIPW estimator under locally
misspecified models.  To relax the parametric assumptions and improve
the double robustness property, we further extend the oCBPS method to
the nonparametric setting. We show that the proposed
estimator can achieve the semiparametric efficiency bound without
imposing parametric assumptions on the propensity score and outcome
models. The theoretical results require weaker technical conditions
than existing methods and the estimator has smaller asymptotic
bias. Our simulation and empirical studies confirm the theoretical
results, demonstrating the advantages of the proposed oCBPS
methodology.

	In this work, we mainly focus on the theoretical development of the IPTW estimator with the propensity score estimated by the optimal CBPS approach. It is a very interesting research problem to establish the theoretical results for the matching estimators combined with the optimal CBPS approach or some variants. While the asymptotic theory (i.e., consistency and asymptotic normality) for the estimated propensity score  via the optimal CBPS approach can be  derived from the current results (by the Delta method), the full development is beyond the scope of this work. We leave it for a future study.\
	


\section*{Supplementary Material}
The supplementary material contains the Appendix of this paper which collects the proofs and further technical details. 

\setlength{\bibsep}{0.85pt}{
\bibliographystyle{ims}
\bibliography{cbps,my,imai,spglm}
}

\newpage

\appendix

\section*{Supplementary Material}

\section{Locally Semiparametric Efficient Estimator}

For clarification, we reproduce the following definition of
locally semiparametric efficient estimator given in
\cite{robins1994estimation},
\begin{definition}
Given a semiparametric model, say $A$, and an additional restriction $R$ on the joint distribution of the data not imposed by the model, we say that an estimator $\hat\alpha$ is locally semiparametric efficient in model $A$ at $R$ if $\hat\alpha$ is a semiparametric estimator in model $A$ whose asymptotic variance attains the semiparametric variance bound for model $A$ when $R$ is true. 
\end{definition}

In our setting, the semiparametric model $A$ corresponds to the joint
distribution of the observed data $(T_i,Y_i,\bX_i)$ subject to the
strong ignorability of the treatment assignment
$ \{Y_i(1), Y_i(0)\} \perp T_i \mid \bX_i$; see
\cite{hahn1998role}. The semiparametric variance bound for model $A$
is $V_{opt}$. The restriction $R$ is the intersection of $R_1$ and
$R_2$ (denoted by $R_1\cap R_2$), where $R_1$ is the model that
satisfies the first condition in Theorem 3.1 (i.e., the propensity
score is correctly specified) and $R_2$ is the model that satisfies
the second condition in Theorem 3.1 (i..e,
$K(\bX_i)=\balpha_1^\top\Mb_1\bh_1(\bX_i)$ and
$L(\bX_i)=\balpha_2^\top\Mb_2\bh_2(\bX_i)$). In Corollary 3.2, we show
that the asymptotic variance of our estimator of ATE
$\hat\mu_{\hat\bbeta}$ is $V_{opt}$ when $R_1\cap R_2$ is true. From
the above definition of locally semiparametric efficient estimator, we
can claim that $\hat\mu_{\hat\bbeta}$ is locally semiparametric
efficient at $R_1\cap R_2$.

\section{Preliminaries}
\label{appendix::A1}

To simplify the notation, we use $\pi_i^* = \pi_{\bbeta^*}(\bX_i)$ and $\pi_i^o = \pi_{\bbeta^o}(\bX_i)$. For any vector $\bC\in\RR^{K}$, we denote $|\bC|=(|C_1|,...,|C_K|)^\top$ and write $\bC\leq \bB$ for $C_k\leq B_k$ for any $1\leq k\leq K$.

\begin{assumption} (Regularity Conditions for CBPS in Section \ref{sec:model.misspecification})
\label{reg_mis}
\begin{enumerate}
	\item There exists a positive definite matrix $\Wb^*$ such that
	$\hat{\Wb}\stackrel{p}{\longrightarrow} \Wb^*$.
	\item The minimizer
	$\bbeta^o=\argmin_{\bbeta} \EE(\bar\bg_{\bbeta}(\bT,\bX))^\top
	\Wb^*\EE(\bar\bg_{\bbeta}(\bT,\bX))$ is unique.
	\item  $\bbeta^o\in\textrm{int}(\Theta)$, where $\Theta$ is a
	compact set.
	\item $\pi_{\bbeta}(\bX)$ is continuous in $\bbeta$.
	\item There exists a constant $0<c_0<1/2$ such that with
	probability tending to one, $c_0\leq \pi_{\bbeta}(\bX)\leq 1-c_0$,
	for any $\bbeta\in \textrm{int}(\Theta)$.
	\item $\EE|f_j(\bX)|<\infty$ for $1\leq j\leq m$ and $\EE|Y(1)|^2<\infty$, $\EE|Y(0)|^2<\infty$.
	\item  $\Gb^*:=\EE(\partial \bg({\bbeta^o})/\partial\bbeta)$
	exists and there is a $q$-dimensional function $C(\bX)$ and a
	small constant $r>0$ such that
	$\sup_{\bbeta\in\mathbb{B}_r(\bbeta^o)}|\partial\pi_{\bbeta}(\bX)/\partial\bbeta|\leq
	C(\bX)$
	and $\EE(|f_{j}(\bX)|C(\bX))<\infty$ for $1\leq j\leq m$, where
	$\mathbb{B}_r(\bbeta^o)$ is a ball in $\RR^q$ with radius $r$ and
	center $\bbeta^o$. In addition, $\EE(|Y|C(\bX))<\infty$.
	\item $\Gb^{*\top}\Wb^*\Gb^*$ and
	$\EE(\bg_{\bbeta^o}(T_i,\bX_i)\bg_{\bbeta^o}(T_i,\bX_i)^\top)$
	are nonsingular.
	\item In the locally misspecified model (\ref{eq:misp}), assume $|u(\bX; \bbeta^*)|\leq C$ almost surely for some constant $C>0$.
\end{enumerate}
\end{assumption}

\begin{lemma}[Lemma 2.4 in \cite{newey1994large}]\label{lemulan}
	Assume that the data $Z_i$ are i.i.d., $\Theta$ is compact, $a(Z,\theta)$ is continuous for $\theta\in\Theta$, and there is $D(Z)$ with $|a(Z,\theta)|\leq D(Z)$ for all $\theta\in\Theta$ and $\EE(D(Z))<\infty$, then $\EE(a(Z,\theta))$ is continuous and $\sup_{\theta\in\Theta}|n^{-1}\sum_{i=1}^na(Z_i,\theta)-\EE(a(Z,\theta))|\stackrel{p}{\longrightarrow} 0$.
\end{lemma}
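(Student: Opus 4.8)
The plan is to establish the two assertions separately, each as a combination of the dominated convergence theorem with the ordinary (pointwise) strong law of large numbers, and then to pass from pointwise to uniform control by extracting a finite subcover of the compact parameter set $\Theta$. For the continuity claim, I would take an arbitrary sequence $\theta_n\to\theta$ in $\Theta$; continuity of $a(Z,\cdot)$ gives $a(Z,\theta_n)\to a(Z,\theta)$ for every $Z$, and the envelope bound $|a(Z,\theta_n)|\le D(Z)$ with $\EE(D(Z))<\infty$ lets me invoke dominated convergence to get $\EE(a(Z,\theta_n))\to\EE(a(Z,\theta))$, so $\theta\mapsto\EE(a(Z,\theta))$ is continuous on $\Theta$.

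For the uniform law I would introduce the local oscillation $\omega(Z,\theta,\delta):=\sup_{\theta'\in\Theta,\,\|\theta'-\theta\|\le\delta}|a(Z,\theta')-a(Z,\theta)|$; this is a measurable function of $Z$ because $a(Z,\cdot)$ is continuous and $\Theta$ is separable, so the supremum may be taken over a countable dense subset. Continuity of $a(Z,\cdot)$ forces $\omega(Z,\theta,\delta)\downarrow 0$ as $\delta\downarrow 0$ for each fixed $Z$ and $\theta$, while $\omega(Z,\theta,\delta)\le 2D(Z)$, so dominated convergence yields $\EE(\omega(Z,\theta,\delta))\to 0$ as $\delta\to 0$. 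Fixing $\epsilon>0$, for each $\theta$ I pick $\delta_\theta>0$ with $\EE(\omega(Z,\theta,\delta_\theta))<\epsilon$; the balls $\{B(\theta,\delta_\theta)\}_{\theta\in\Theta}$ cover $\Theta$, so by compactness finitely many centers $\theta_1,\dots,\theta_K$ suffice, i.e.\ $\Theta\subseteq\bigcup_{j=1}^K B(\theta_j,\delta_{\theta_j})$.

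The bridging step: for $\theta$ lying in the $j$-th ball, add and subtract $\frac1n\sum_i a(Z_i,\theta_j)$ and $\EE(a(Z,\theta_j))$, bound $|a(Z_i,\theta)-a(Z_i,\theta_j)|\le\omega(Z_i,\theta_j,\delta_{\theta_j})$ and $|\EE(a(Z,\theta))-\EE(a(Z,\theta_j))|\le\EE(\omega(Z,\theta_j,\delta_{\theta_j}))<\epsilon$, and then take the supremum over $\theta$ and the maximum over $j$ to obtain
\[
\sup_{\theta\in\Theta}\Bigl|\tfrac1n\sum_i a(Z_i,\theta)-\EE(a(Z,\theta))\Bigr|\ \le\ \max_{1\le j\le K}\Bigl(\tfrac1n\sum_i \omega(Z_i,\theta_j,\delta_{\theta_j})+\bigl|\tfrac1n\sum_i a(Z_i,\theta_j)-\EE(a(Z,\theta_j))\bigr|\Bigr)+\epsilon.
\]
Because there are only finitely many indices $j$, the ordinary strong law applied to each i.i.d.\ average (summands dominated in $L^1$ by $2D(Z)$ and $D(Z)$, respectively) gives $\frac1n\sum_i\omega(Z_i,\theta_j,\delta_{\theta_j})\to\EE(\omega(Z,\theta_j,\delta_{\theta_j}))<\epsilon$ and $\frac1n\sum_i a(Z_i,\theta_j)\to\EE(a(Z,\theta_j))$ almost surely, so the right-hand side converges a.s.\ to a limit bounded by $2\epsilon$. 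Letting $\epsilon$ run through a sequence decreasing to $0$ yields almost sure, hence in-probability, uniform convergence.

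The only delicate points are bookkeeping: the measurability of $\omega(Z,\theta,\delta)$, which is handled by separability of $\Theta$, and the need for a single integrable envelope that simultaneously dominates the oscillation term, the centered average, and the bias term — which is precisely what the hypothesis $|a(Z,\theta)|\le D(Z)$ with $\EE(D(Z))<\infty$ supplies. Beyond that it is the standard $\epsilon$-net argument, so I do not anticipate any substantive obstacle.
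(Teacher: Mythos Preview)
Your proposal is correct. The paper does not supply its own proof of this lemma; it merely states the result with attribution to Lemma~2.4 of \cite{newey1994large}, and the argument you have written---dominated convergence for continuity, then a finite $\epsilon$-net extracted from the compact $\Theta$ combined with the pointwise strong law---is precisely the classical proof underlying that reference.
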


\begin{lemma}\label{lemconsistency}
	Under Assumption \ref{reg_mis} (or Assumptions \ref{ass1}), we have $\hat\bbeta\stackrel{p}{\longrightarrow}\bbeta^o$. 
\end{lemma}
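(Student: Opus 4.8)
The plan is to apply the standard GMM consistency theorem for extremum estimators of \cite{newey1994large} to the objective $\hat Q_n(\bbeta) = \bar\bg_{\bbeta}(\bT,\bX)^\top \widehat\Wb\, \bar\bg_{\bbeta}(\bT,\bX)$ with population limit $Q_0(\bbeta) = \EE(\bar\bg_{\bbeta}(\bT,\bX))^\top \Wb^*\, \EE(\bar\bg_{\bbeta}(\bT,\bX))$. Three of the four hypotheses are immediate from the stated assumptions: $\Theta$ is compact (Condition~3), $Q_0$ is uniquely minimized at $\bbeta^o$ (Condition~2), and $Q_0$ is continuous on $\Theta$ --- the last because $\pi_{\bbeta}(\bX)$ is continuous in $\bbeta$ (Condition~4) and, on a high-probability event, bounded away from $0$ and $1$ (Condition~5), so $\bg_{\bbeta}(T,\bX)$ is continuous in $\bbeta$, whence $\EE(\bg_{\bbeta}(T,\bX))$ is continuous by Lemma~\ref{lemulan}. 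The substantive step is the remaining hypothesis, uniform convergence $\sup_{\bbeta\in\Theta}|\hat Q_n(\bbeta) - Q_0(\bbeta)| \stackrel{p}{\longrightarrow} 0$.

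To obtain this I would first establish $\sup_{\bbeta\in\Theta}\|\bar\bg_{\bbeta}(\bT,\bX) - \EE(\bg_{\bbeta}(T,\bX))\| \stackrel{p}{\longrightarrow} 0$ via the uniform law of large numbers in Lemma~\ref{lemulan}, for which I need an integrable envelope for $\bg_{\bbeta}(T_i,\bX_i)$ that is uniform in $\bbeta\in\Theta$. On the event $\{c_0\le \pi_{\bbeta}(\bX_i)\le 1-c_0\ \forall\,\bbeta\in\mathrm{int}(\Theta)\}$ supplied by Condition~5, the scalar factor $T_i/\pi_{\bbeta}(\bX_i) - (1-T_i)/(1-\pi_{\bbeta}(\bX_i))$ (and $T_i/\pi_{\bbeta}(\bX_i)-1$ in the oCBPS case) is bounded in absolute value by $2/c_0$, so $\|\bg_{\bbeta}(T_i,\bX_i)\| \le (2/c_0)\bigl(\|\bh_1(\bX_i)\| + \|\bh_2(\bX_i)\|\bigr) =: D(Z_i)$ with $\EE D(Z)<\infty$ by Condition~6. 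Since Condition~5 is only assumed to hold with probability tending to one, I would run the ULLN on that event and note that on its complement --- of vanishing probability --- the supremum difference is $o_p(1)$ anyway, so the conclusion is unaffected. Combining this with $\widehat\Wb\stackrel{p}{\longrightarrow}\Wb^*$ (Condition~1) and the boundedness of $\bbeta\mapsto\EE(\bg_{\bbeta}(T,\bX))$ over the compact $\Theta$, a routine add-and-subtract argument using the triangle inequality and Cauchy--Schwarz upgrades the uniform convergence of $\bar\bg_{\bbeta}$ to the required uniform convergence of $\hat Q_n$ to $Q_0$.

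With all four hypotheses verified, the GMM consistency theorem of \cite{newey1994large} yields $\hat\bbeta\stackrel{p}{\longrightarrow}\bbeta^o$. The argument under Assumption~\ref{ass1} is identical: its Conditions~1--5 coincide with those of Assumption~\ref{reg_mis} used above, and its integrability Condition~6 (together with the positivity Condition~5) furnishes the same envelope $D(Z)$. I expect the main obstacle to be bookkeeping rather than depth --- namely, carefully constructing the uniform-in-$\bbeta$ integrable envelope and handling the fact that the positivity bound holds only with probability tending to one, so that the ULLN must be invoked on a high-probability event with the complement controlled separately; the rest is the textbook extremum-estimator consistency argument.
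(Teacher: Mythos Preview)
Your proposal is correct and follows essentially the same route as the paper: both invoke Theorem~2.6 of \cite{newey1994large}, checking compactness, identification, and continuity directly from the listed conditions, and then verifying the integrable-envelope condition~(iv) by using the positivity bound (Condition~5) to get $|g_{\bbeta j}(T_i,\bX_i)|\le (2/c_0)|f_j(\bX_i)|$ (respectively $|h_{1j}|,|h_{2j}|$). If anything, you are slightly more careful than the paper in flagging that Condition~5 holds only with probability tending to one and handling the complement event separately; the paper simply applies the envelope bound without comment.
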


\begin{proof}[Proof of Lemma \ref{lemconsistency}]
	The proof of $\hat\bbeta\stackrel{p}{\longrightarrow}\bbeta^o$ follows from Theorem 2.6 in \cite{newey1994large}. Note that their conditions (i)--(iii) follow directly from Assumption \ref{ass1} (1)--(4). We only need to verify their condition (iv), i.e., $\EE(\sup_{\bbeta\in\Theta}|g_{\bbeta j}(T_i, \bX_i)|)<\infty$ where
	$$
	g_{\bbeta j}(T_i,\bX_i)=\Big(\frac{T_i}{\pi_{\bbeta}(\bX_i)}-\frac{1-T_i}{1-\pi_{\bbeta}(\bX_i)}\Big)f_{j}(\bX_{i}),
	$$
	By Assumption \ref{reg_mis} (5), we have $|g_{\bbeta j}(T_i, \bX_i)|\leq 2|f_{j}(\bX_i)|/c_0$ and thus $\EE(\sup_{\bbeta\in\Theta}|g_{\bbeta j}(T_i, \bX_i)|)<\infty$ by Assumption \ref{reg_mis} (6). In addition, for the proof of Theorem \ref{thmdr}, we similarly verify the following conditions to prove this lemma for the oCBPS estimator, i.e., $\EE(\sup_{\bbeta\in\Theta}|g_{1\bbeta j}(T_i, X_i)|)<\infty$ and $\EE(\sup_{\bbeta\in\Theta}|g_{2\bbeta j}(T_i, \bX_i)|)<\infty$, where
	$$
	g_{1\bbeta j}(T_i,\bX_i)=\Big(\frac{T_i}{\pi_{\bbeta}(\bX_i)}-\frac{1-T_i}{1-\pi_{\bbeta}(\bX_i)}\Big)h_{1j}(\bX_i),~~\textrm{and}~~g_{2\bbeta j}(T_i, \bX_i)=\Big(\frac{T_i}{\pi_{\bbeta}(\bX_i)}-1\Big)h_{2j}(\bX_i).
	$$
We have $|g_{1\bbeta j}(T_i, \bX_i)|\leq 2|h_{1j}(\bX_i)|/c_0$ and thus $\EE(\sup_{\bbeta\in\Theta}|g_{1\bbeta j}(T_i, \bX_i)|)<\infty$. Similarly, we can prove $\EE(\sup_{\bbeta\in\Theta}|g_{2\bbeta j}(T_i,\bX_i)|)<\infty$.	
	This completes the proof.	
\end{proof}

\begin{lemma}\label{lemAN}
Under Assumption \ref{reg_mis} (or Assumptions \ref{ass1} and \ref{ass2}), we have
\begin{equation}\label{eqhatbetaexpansion}
n^{1/2}(\hat{\bbeta}-\bbeta^o)=-(\bH_{\fb}^{*\top}\Wb^*\bH_{\fb}^*)^{-1}n^{1/2}\bH_{\fb}^{*\top}\Wb^*\bar\bg_{\bbeta^o}(\bT, \bX)+o_p(1),
\end{equation}
\begin{equation}\label{eqhatbetaAN}
n^{1/2}(\hat{\bbeta}-\bbeta^o)\stackrel{d}{\longrightarrow} N(0,(\bH_{\fb}^{*\top}\Wb^*\bH_{\fb}^*)^{-1}\bH_{\fb}^{*\top}\Wb^*\bOmega\Wb^*\bH_{\fb}^*(\bH_{\fb}^{*\top}\Wb^*\bH_{\fb}^*)^{-1}),
\end{equation}
where $\Omega = \Var(\bg_{\bbeta^*}(T_i, \bX_i))$. If the propensity score model is correctly specified with $\PP(T_i=1\mid \bX_i)=\pi_{\bbeta^o}(\bX_i)$ and $\Wb^*=\bOmega^{-1}$ holds, then $
n^{1/2}(\hat{\bbeta}-\bbeta^o)\stackrel{d}{\longrightarrow} N(0,(\bH_{\fb}^{*\top}\bOmega^{-1}\bH_{\fb}^*)^{-1})$.
\begin{proof}
The proof of (\ref{eqhatbetaexpansion}) and (\ref{eqhatbetaAN}) follows from Theorem 3.4 in \cite{newey1994large}. Note that their conditions (i), (ii), (iii) and (v) are directly implied by our Assumption \ref{reg_mis} (3), (4), (2) and  Assumption \ref{reg_mis} (1), respectively. In addition, their condition (iv), that is,  $\EE(\sup_{\bbeta\in \cN}|\partial \bg_{\bbeta^o}(T_i,\bX_i)/\partial\beta_j|)<\infty$ for some small neighborhood $\cN$ around $\bbeta^o$, is also implied by our Assumption \ref{reg_mis}.	To see this, by Assumption \ref{reg_mis} some simple calculations show that
	$$
	\sup_{\bbeta\in \cN}\Big|\frac{\partial\bg_{\bbeta}(T_i,\bX_i)}{\partial\beta_j}\Big|\leq\Big(\frac{T_i|\fb(\bX_i)|}{c_0^2}+\frac{(1-T_i)|\fb(\bX_i)|}{c_0^2}\Big) \sup_{\bbeta\in \cN}\Big|\frac{\partial\pi_{\bbeta}(\bX_i)}{\partial\beta_j}\Big|\leq C_j(\bX)|\fb(\bX_i)|/c_0^2,
	$$
	for $\cN\in \mathbb{B}_r(\bbeta^o)$. Hence, $\EE(\sup_{\bbeta\in \cN}|\partial \bg_{\bbeta^o}({T_i,\bX_i})/\partial\beta_j|)<\infty$, by Assumption \ref{reg_mis} (7). Thus, condition (iv) in Theorem 3.4 in \cite{newey1994large} holds. In order to apply this lemma to the proofs in Section 3, we need to further verify this condition for
	 $\bg_{\bbeta}(\cdot)=(\bg^{\top}_{1\bbeta}(\cdot),\bg^{\top}_{2\bbeta}(\cdot))^{\top}$, where
	$$
	\bg_{1\bbeta}(T_i,\bX_i)=\Big(\frac{T_i}{\pi_{\bbeta}(\bX_i)}-\frac{1-T_i}{1-\pi_{\bbeta}(\bX_i)}\Big)\bh_{1}(\bX_i),~~\textrm{and}~~\bg_{2\bbeta}(T_i,\bX_i)=\Big(\frac{T_i}{\pi_{\bbeta}(\bX_i)}-1\Big)\bh_{2}(\bX_i).
	$$
	To this end, by Assumption \ref{ass1} some simple calculations show that when
	$$
	\sup_{\bbeta\in \cN}\Big|\frac{\partial\bg_{1\bbeta}(T_i,\bX_i)}{\partial\beta_j}\Big|\leq\Big(\frac{T_i|\bh_{1}(\bX_i)|}{c_0^2}+\frac{(1-T_i)|\bh_{1}(\bX_i)|}{c_0^2}\Big) \sup_{\bbeta\in \cN}\Big|\frac{\partial\pi_{\bbeta}(\bX_i)}{\partial\beta_j}\Big|\leq C_j(\bX)|\bh_{1}(\bX_i)|/c_0^2,
	$$
	for $\cN\in \mathbb{B}_r(\bbeta^o)$. Hence, $\EE(\sup_{\bbeta\in \cN}|\partial \bg_{1\bbeta^o}({T_i,\bX_i})/\partial\beta_j|)<\infty$, by Assumption \ref{ass1} (7). Following the similar arguments, we can prove that $\EE(\sup_{\bbeta\in \cN}|\partial \bg_{2\bbeta^o}({T_i, \bX_i})/\partial\beta_j|)<\infty$ holds.
	This completes the proof of (\ref{eqhatbetaAN}). As shown in Lemma \ref{lemconsistency}, if	$\PP(T_i=1\mid \bX_i)=\pi_{\bbeta^o}(\bX_i)$ holds, the asymptotic normality of $n^{1/2}(\hat{\bbeta}-\bbeta^o)$ follows from (\ref{eqhatbetaAN}). The proof is complete.
\end{proof}
\end{lemma}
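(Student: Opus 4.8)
The plan is to deduce the Bahadur-type expansion \eqref{eqhatbetaexpansion} and the asymptotic normality \eqref{eqhatbetaAN} from the standard GMM theory for over-identified estimating equations (Theorem~3.4 of \cite{newey1994large}), verifying that theorem's hypotheses against Assumption~\ref{reg_mis}, and then to obtain the final efficiency statement by an algebraic collapse of the sandwich covariance.

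First I would record that $\hat\bbeta\stackrel{p}{\longrightarrow}\bbeta^o$ by Lemma~\ref{lemconsistency}, so that with probability tending to one $\hat\bbeta$ is an interior point of $\Theta$ (Assumption~\ref{reg_mis}(3)) and hence solves the first-order condition $\big(\partial\bar\bg_{\hat\bbeta}(\bT,\bX)/\partial\bbeta\big)^\top\widehat\Wb\,\bar\bg_{\hat\bbeta}(\bT,\bX)=0$. A component-wise mean-value expansion of $\bar\bg_{\hat\bbeta}(\bT,\bX)$ around $\bbeta^o$ writes $\bar\bg_{\hat\bbeta}(\bT,\bX)=\bar\bg_{\bbeta^o}(\bT,\bX)+\bar\Gb_n\,(\hat\bbeta-\bbeta^o)$, where $\bar\Gb_n$ is a Jacobian evaluated at intermediate points between $\hat\bbeta$ and $\bbeta^o$. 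A uniform law of large numbers (Lemma~\ref{lemulan}) applied to $\partial\bg_{\bbeta}(T_i,\bX_i)/\partial\bbeta$ over a small ball around $\bbeta^o$, together with consistency, gives $\bar\Gb_n\stackrel{p}{\longrightarrow}\Gb^*=\bH_{\fb}^*$; substituting into the first-order condition and invoking $\widehat\Wb\stackrel{p}{\longrightarrow}\Wb^*$ and the nonsingularity of $\bH_{\fb}^{*\top}\Wb^*\bH_{\fb}^*$ (Assumption~\ref{reg_mis}(1),(8)) yields \eqref{eqhatbetaexpansion}. Since the summands of $n^{1/2}\bar\bg_{\bbeta^o}(\bT,\bX)$ are i.i.d.\ with finite second moment (Assumption~\ref{reg_mis}(5),(6)) and covariance $\bOmega$, the Lindeberg--L\'evy CLT and the continuous mapping theorem deliver \eqref{eqhatbetaAN}.

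The one hypothesis of Theorem~3.4 of \cite{newey1994large} that requires actual work is the domination condition $\EE\big(\sup_{\bbeta\in\cN}|\partial\bg_{\bbeta}(T_i,\bX_i)/\partial\beta_j|\big)<\infty$ for a neighborhood $\cN\subset\mathbb{B}_r(\bbeta^o)$. Differentiating $\bg_{\bbeta}(T_i,\bX_i)=\big(T_i/\pi_{\bbeta}(\bX_i)-(1-T_i)/(1-\pi_{\bbeta}(\bX_i))\big)\fb(\bX_i)$ and using the positivity bound $c_0\le\pi_{\bbeta}(\bX)\le 1-c_0$ (Assumption~\ref{reg_mis}(5)) together with the local envelope $\sup_{\bbeta\in\mathbb{B}_r(\bbeta^o)}|\partial\pi_{\bbeta}(\bX)/\partial\beta_j|\le C_j(\bX)$ (Assumption~\ref{reg_mis}(7)), one gets $\sup_{\bbeta\in\cN}|\partial\bg_{\bbeta}(T_i,\bX_i)/\partial\beta_j|\le C_j(\bX)|\fb(\bX_i)|/c_0^2$, which is integrable because $\EE(|f_j(\bX)|C_j(\bX))<\infty$. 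The remaining conditions are immediate: continuity and differentiability of $\pi_{\bbeta}$ (Assumption~\ref{reg_mis}(4),(7)), interiority and compactness of $\Theta$ together with uniqueness of $\bbeta^o$ (Assumption~\ref{reg_mis}(2),(3)), consistency of the weight matrix (Assumption~\ref{reg_mis}(1)), and nonsingularity of $\Gb^{*\top}\Wb^*\Gb^*$ and $\bOmega$ (Assumption~\ref{reg_mis}(8)). For the applications in Section~\ref{sec:proposed.methodology} I would run the same domination check with $\bg_{\bbeta}=(\bg_{1\bbeta}^\top,\bg_{2\bbeta}^\top)^\top$, replacing $\fb$ by $\bh_1$ and $\bh_2$ and invoking Assumption~\ref{ass1}(7).

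Finally, when the propensity score model is correctly specified we have $\EE(\bg_{\bbeta^o}(T_i,\bX_i))=0$ and $\bbeta^o=\bbeta^*$, so $\bOmega=\Var(\bg_{\bbeta^o}(T_i,\bX_i))$; plugging $\Wb^*=\bOmega^{-1}$ into the sandwich $(\bH_{\fb}^{*\top}\Wb^*\bH_{\fb}^*)^{-1}\bH_{\fb}^{*\top}\Wb^*\bOmega\Wb^*\bH_{\fb}^*(\bH_{\fb}^{*\top}\Wb^*\bH_{\fb}^*)^{-1}$ collapses it to $(\bH_{\fb}^{*\top}\bOmega^{-1}\bH_{\fb}^*)^{-1}$, the efficient-GMM limiting variance, which is the last assertion. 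I expect the Jacobian-domination step to be the only genuine obstacle; everything else is bookkeeping against the stated assumptions.
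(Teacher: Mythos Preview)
Your proposal is correct and follows essentially the same route as the paper: both invoke Theorem~3.4 of \cite{newey1994large}, verify its domination condition (iv) via the bound $\sup_{\bbeta\in\cN}|\partial\bg_{\bbeta}/\partial\beta_j|\le C_j(\bX)|\fb(\bX_i)|/c_0^2$ using Assumption~\ref{reg_mis}(5),(7), repeat the check for $\bg_{1\bbeta},\bg_{2\bbeta}$ under Assumption~\ref{ass1}, and then collapse the sandwich under $\Wb^*=\bOmega^{-1}$. Your sketch of the first-order-condition/mean-value-expansion mechanics is simply an unpacking of what Theorem~3.4 delivers, so there is no substantive difference.
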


\section{Proof of Results in Section \ref{sec:model.misspecification}}

\subsection{Proof of Theorem~\ref{th:asymnormal_mis}}
\label{app:biasmis}
\begin{proof}
First, we derive the bias of $\hat \bbeta$. By the arguments in the proof of Lemma \ref{lemAN}, we can show that $\hat \bbeta =\bbeta^o+O_p(n^{-1/2})$, where $\bbeta^o$ satisfies $\bbeta^o=\argmin_{\bbeta} \EE(\bar\bg_{\bbeta}(\bT,\bX))^\top
\Wb^*\EE(\bar\bg_{\bbeta}(\bT,\bX))$. Let $u_i^* = u(\bX_i; \bbeta^*)$.
By the propensity score model and the fact that $|u(\bX_i; \bbeta^*)|$ is a bounded random variable and $\EE|f_j(\bX_i)|<\infty$, we can show that
$$
\EE(\bar \bg_{\bbeta^o})=\EE\Big\{\frac{\pi_i^*(1+\xi u_i^*)\fb(\bX_i)}{\pi_i^o}-\frac{(1-\pi_i^*-\xi \pi_i^*u_i^*)\fb(\bX_i)}{1-\pi_i^o}\Big\}+O(\xi^2).
$$
In addition, following the similar calculation, we have $\EE(\bar \bg_{\bbeta^*})=O(\xi)$. Therefore, $$\lim_{n\rightarrow\infty}\EE(\bar\bg_{\bbeta^*}(\bT,\bX))^\top
\Wb^*\EE(\bar\bg_{\bbeta^*}(\bT,\bX))=0.$$
Clearly, this quadratic form $\EE(\bar\bg_{\bbeta}(\bT,\bX))^\top \Wb^*\EE(\bar\bg_{\bbeta}(\bT,\bX))$ must be nonnegative for any $\bbeta$. By the uniqueness of $\bbeta^o$, we have $\bbeta^o-\bbeta^*=o(1)$. Therefore, we can expand $\pi_i^o$ around $\pi_i^*$, which yields
$$
\EE(\bar \bg_{\bbeta^o})=\EE\Big\{\xi \Big(\frac{u_i^*}{1-\pi_i^*}\Big)\fb(\bX_i)+\bH_{\fb}^* (\bbeta^o-\bbeta^*)\Big\}+O(\xi^2+\|\bbeta^o-\bbeta^*\|_2^2).
$$
This implies that the bias of $\bbeta^o$ is
\begin{equation}\label{eqbetao}
\bbeta^o-\bbeta^*=-\xi (\bH_{\fb}^{*\top}\Wb^*\bH_{\fb}^*)^{-1}\bH_{\fb}^{*\top}\Wb^*\EE\Big\{\Big(\frac{u_i^*}{1-\pi_i^*}\Big)\fb(\bX_i)\Big\}+O(\xi^2).
\end{equation}
Our next step is to derive the bias of $\hat \mu_{\hat \bbeta}$. Similar to the proof of Theorem \ref{thmAN}, we have
	$$
	\hat{\mu}_{\hat\bbeta}-\mu=\frac{1}{n}\sum_{i=1}^nD_i+\Hb_y^{*\top}(\hat{\bbeta}-\bbeta^o)+o_p(n^{-1/2}),
	$$
	where
	$$
	D_i=\frac{T_iY_i(1)}{\pi_i^o}-\frac{(1-T_i)Y_i(0)}{1-\pi_i^o}-\mu,
	$$
and
$$
n^{1/2}(\hat{\bbeta}-\bbeta^o)=-(\bH_{\fb}^{*\top}\Wb^*\bH_{\fb}^*)^{-1}n^{1/2}\bH_{\fb}^{*\top}\Wb^*\bar\bg_{\bbeta^o}(\bT, \bX)+o_p(1).
$$
In addition, following the similar steps, we can show that $\EE(D_i)=Bn^{-1/2}+o(n^{-1/2})$. Thus,
$$
\hat{\mu}_{\hat\bbeta}-\mu=\frac{1}{n}\sum_{i=1}^n\{D_i-\EE(D_i)\}+\Hb_y^{*\top}(\hat{\bbeta}-\bbeta^o)+Bn^{-1/2}+o_p(n^{-1/2}).
$$
Then the asymptotic normality of $\sqrt{n}(\hat \mu_{\hat\bbeta} - \mu)$ follows from the above asymptotic expansion and the central limit theorem. 
This completes the proof.
\end{proof}

\subsection{Proof of Corollary~\ref{th:localbias}} \label{app:localbias}
\begin{proof}
When $\bH_{\fb}^*$ is invertible, it is easy to show the bias term can be written as
\begin{eqnarray*}
B \ = \ \left[\EE\left\{\frac{u(\bX_i; \bbeta^*)(K(\bX_i)+(1-\pi_{\bbeta^*}(\bX_i)) L(\bX_i))}{1-\pi_{\bbeta^*}(\bX_i)}\right\}+\bH_y^*\bH_{\fb}^{*-1}\EE\left(\frac{u(\bX_i; \bbeta^*)\fb(\bX_i)}{1-\pi_{\bbeta^*}(\bX_i)}\right)\right],
\end{eqnarray*}
when the propensity score model is locally misspecified. If we choose the balancing function $\fb(\bX)$ such that $\balpha^{\top} \fb(\bX) = K(\bX_i)+(1-\pi_i^*) L(\bX_i)$ for some $\balpha \in \RR^q$, we have
\begin{eqnarray*}
\bH_y^*  & = &  -\EE\left(\frac{K(\bX_i)+(1-\pi_i^*)L(\bX_i)}{\pi_i^* (1-\pi_i^*)}\cdot \frac{\partial \pi_i^*}{\partial\bbeta}\right) =- \balpha^{\top}  \EE \Big(\frac{\fb(\bX_i)}{\pi_i^* (1-\pi_i^*)}\Big(\frac{\partial \pi_i^*}{\partial\bbeta}\Big)^{\top}\Big), \\
		\bH_{\fb}^* & = & -\mathbb{E}\left(\frac{\partial g_{\bbeta^*}(T_i, \bX_i)}{\partial \bbeta}\right) = -\EE \left(\frac{\fb(\bX_i)}{\pi_i^*(1-\pi_i^*)}\Big(\frac{\partial \pi_i^*}{\partial \bbeta}\Big)^{\top}\right).
\end{eqnarray*}
So the bias becomes
\begin{eqnarray*}
B &=& \left[\balpha^{\top}\EE\left\{\frac{u(\bX_i; \bbeta^*)\fb(\bX_i)}{1-\pi_{\bbeta^*}(\bX_i)}\right\} + \balpha^{\top} \bH_{\fb}^* (\bH_{\fb}^*)^{-1} \EE\left(\frac{u(\bX_i; \bbeta^*)\fb(\bX_i)}{1-\pi_{\bbeta^*}(\bX_i)}\right) \right] = 0.
\end{eqnarray*}
This proves that $\hat \mu_{\hat \bbeta}$ is first order unbiased.
\end{proof}

\subsection{Proof of Corollary~\ref{th:efficiency}} \label{app:efficiency}
\begin{proof}
Recall that even if the propensity score mode is known or pre-specified, the minimum asymptotic variance over the class of regular estimators is given by $V_{\textrm{opt}}$. In the following, we will verify that with the optimal choice of $\fb(\bX)$ our estimator has asymptotic variance $V_{\textrm{opt}}$.

The asymptotic variance bound $V_{\textrm{opt}}$ can be written as,
$V_{\textrm{opt}}=\Sigma_{\mu}-\balpha^{\top}\bOmega \balpha$, where $$\bOmega=\EE(\bg_{\bbeta^o}(T_i,\bX_i)\bg_{\bbeta^o}(T_i,\bX_i)^\top) = \EE\left(\frac{\fb(\bX_i)\fb(\bX_i)^{\top}}{\pi_i^*(1-\pi_i^*)}\right).$$
We can write the asymptotic variance of our estimator as
\begin{eqnarray*}
V = \Sigma_{\mu} + 2 \bH_y^{*\top} \bSigma_{\mu\bbeta} + \bH_y^{*\top} \bSigma_{\bbeta} \bH_y^*,
\end{eqnarray*}
where
\begin{eqnarray*}
&&\bH_y^* = \EE\left(\frac{\partial \mu_{\bbeta^*}(T_i, Y_i, \bX_i)}{\partial \bbeta}\right) = -\EE \left(\frac{K(\bX_i)+(1-\pi_i^*)L(\bX_i)}{\pi_i^*(1-\pi_i^*)} \frac{\partial \pi_i^*}{\partial \bbeta}\right), \\
&&\bSigma_{\mu\bbeta} = - (\bH_{\fb}^*)^{-1}\Cov(\mu_{\bbeta^*}(T_i, Y_i, \bX_i), \bg_{\bbeta^*}(T_i, \bX_i)),\\
&&\bH_{\fb}^* = \EE\left(\frac{\partial \bg_{\bbeta^*}(T_i,\bX_i)}{\partial \bbeta}\right) = - \EE \left(\frac{\fb(\bX_i)}{\pi_i^*(1-\pi_i^*)}\left(\frac{\partial \pi_i^{*}}{\partial \bbeta}\right)^{\top}\right), \\
&&\Cov(\mu_{\bbeta^*}(T_i,Y_i,\bX_i), \bg_{\bbeta^*}(T_i, \bX_i)) = \EE \left(\frac{K(\bX)+ (1-\pi_i^*)L(\bX_i)}{\pi_i^*(1-\pi_i^*)}\fb(\bX_i)\right),\\
&&\bSigma_{\bbeta} = (\bH_{\fb}^*)^{-1} \Var(\bg_{\bbeta^*}(T_i, \bX_i)) (\bH_{\fb}^{*\top})^{-1},\\
&&\Var(\bg_{\bbeta^*}(T_i,\bX_i)) = \EE \left(\frac{\fb(\bX_i)\fb(\bX_i)^{\top}}{\pi_i^*(1-\pi_i^*)}\right).
\end{eqnarray*}
If $K(\bX_i)+(1-\pi_i^*)L(\bX_i)$ lies in the linear space spanned by $\fb(\bX_i)$, that is, $K(\bX_i)+(1-\pi_i^*)L(\bX_i) = \balpha^{\top} \fb(\bX_i)$, we have
\begin{eqnarray*}
\bH_y^* = - \EE \left(\frac{\balpha^{\top} \fb(\bX_i)}{\pi_i^*(1-\pi_i^*)} \frac{\partial \pi_i^{*}}{\partial \bbeta} \right) = (\balpha^{\top} \bH_{\fb}^*)^{\top}.
\end{eqnarray*}
So
\begin{eqnarray*}
\bH_y^{*\top} \bSigma_{\mu\bbeta} = - \balpha^{\top} \bH_{\fb}^* (\bH_{\fb}^{*})^{-1} \EE \left(\frac{\balpha^{\top}\fb(\bX_i)\fb(\bX_i)}{\pi_i^*(1-\pi_i^*)}\right) = -\balpha^{\top}\EE\left(\frac{\fb(\bX_i)\fb(\bX_i)^{\top}}{\pi_i^*(1-\pi_i^*)}\right)\balpha,
\end{eqnarray*}
and
\begin{eqnarray*}
\bH_y^{*\top} \bSigma_{\bbeta} \bH_y^* = \balpha^{\top} \bH_{\fb}^* (\bH_{\fb}^*)^{-1} \EE \left(\frac{\fb(\bX_i)\fb(\bX_i)^{\top}}{\pi_i^*(1-\pi_i^*)}\right) (\bH_{\fb}^{*\top})^{-1}(\balpha^{\top} \bH_{\fb}^*)^{\top} = \balpha^{\top}\EE\left(\frac{\fb(\bX_i)\fb(\bX_i)^{\top}}{\pi_i^*(1-\pi_i^*)}\right)\balpha.
\end{eqnarray*}
It is seen that $\bH_y^{*\top} \bSigma_{\mu\bbeta} = - \bH_y^{*\top} \bSigma_{\bbeta} \bH_y^*$. Then we have
\begin{eqnarray*}
V = \Sigma_{\mu} - \balpha^{\top} \bOmega \balpha,
\end{eqnarray*}
which corresponds to the minimum asymptotic variance $V_{\textrm{opt}}$.
\end{proof}

\section{Proof of Results in Section \ref{sec:proposed.methodology}}\label{appsec3}

\subsection{Proof of Theorem~\ref{thmdr}}
\begin{proof}[Proof of Theorem \ref{thmdr}]
	We first consider the case (1). That is the propensity score model is correctly specified. By Lemma \ref{lemconsistency}, we have $\hat\bbeta\stackrel{p}{\longrightarrow}\bbeta^o$. Let
	$$
	r_{\bbeta}(T,Y,\bX)=\frac{TY}{\pi_{\bbeta}(\bX)}-\frac{(1-T)Y}{1-\pi_{\bbeta}(\bX)}.
	$$
	It is seen that $|r_{\bbeta}(T,Y,\bX)|\leq 2|Y|/c_0$ and by Assumption \ref{ass1} (6), $\EE|Y|<\infty$. Then Lemma \ref{lemulan} yields $\sup_{\bbeta\in\Theta}|n^{-1}\sum_{i=1}^nr_{\bbeta}(T_i,Y_i,\bX_i)-\EE(r_{\bbeta}(T_i,Y_i,\bX_i))|=o_p(1)$. In addition, by $\hat\bbeta\stackrel{p}{\longrightarrow}\bbeta^o$ and the dominated convergence theorem, we obtain that
	\begin{eqnarray*}
		\hat{\mu}_{\hat \bbeta}&=&\EE\Big(\frac{T_iY_i}{\pi_i^o}-\frac{(1-T_i)Y_i}{1-\pi_i^o}\Big)+o_p(1),
	\end{eqnarray*}
	where $\pi_i^o=\pi_{\bbeta^o}(\bX_i)$.
	Since $Y_i=Y_i(1)T_i+Y_i(0)(1-T_i)$ and $Y_i(1), Y_i(0)$ are independent of $T_i$ given $\bX_i$, we can further simplify the above expression,
	\begin{eqnarray*}
		\hat{\mu}_{\hat \bbeta}&=&\EE\Big(\frac{T_iY_i}{\pi_i^o}-\frac{(1-T_i)Y_i}{1-\pi_i^o}\Big)+o_p(1)=\EE\Big(\frac{T_iY_i(1)}{\pi_i^o}-\frac{(1-T_i)Y_i(0)}{1-\pi_i^o}\Big)+o_p(1)\\
		&=&\EE\Big(\frac{\EE(T_i\mid \bX_i)\EE(Y_i(1)\mid \bX_i)}{\pi_i^o}-\frac{(1-\EE(T_i\mid \bX_i))\EE(Y_i(1)\mid \bX_i)}{1-\pi_i^o}\Big)+o_p(1).
	\end{eqnarray*}
	In addition, if the propensity score model is correctly specified, it further implies
	\begin{eqnarray*}
		\hat{\mu}_{\hat\bbeta}&=&\EE(\EE(Y_i(1)\mid \bX_i)-\EE(Y_i(0)\mid \bX_i))+o_p(1)=\EE(Y_i(1)-Y_i(0))+o_p(1)=\mu+o_p(1).
	\end{eqnarray*}
	This completes the proof of consistence of $\hat{\mu}$ when the propensity score model is correctly specified.
	
	In the following, we consider the case (2). That is $K(\cdot)\in\textrm{span}\{\Mb_1\bh_1(\cdot)\}$ and $L(\cdot)\in\textrm{span}\{\Mb_2\bh_2(\cdot)\}$. By Lemma \ref{lemconsistency}, we have $\hat\bbeta\stackrel{p}{\longrightarrow}\bbeta^o$. The first order condition for $\bbeta^o$ yields $\partial Q(\bbeta^o)/\partial\bbeta=0$, where  $Q(\bbeta)=\EE(\bg^{\top}_{\bbeta})\Wb^*\EE(\bg_{\bbeta})$. By Assumption \ref{ass1} (7) and the dominated convergence theorem, we can interchange the differential with integral, and thus $\Gb^{*\top}\Wb^*\EE(\bg_{\bbeta^o})=0$. Under the assumption that $\PP(T_i=1\mid \bX_i)=\pi(\bX_i)\neq \pi_i^o$, we have
	$$
	\EE(\bg_{1\bbeta^o})=\EE\Big\{\Big(\frac{\pi(\bX_i)}{\pi_i^o}-\frac{1-\pi(\bX_i)}{1-\pi_i^o}\Big)\bh_1(\bX_i)\Big\},
	$$
	$$
	\EE(\bg_{2\bbeta^o})=\EE\Big\{\Big(\frac{\pi(\bX_i)}{\pi_i^o}-1\Big)\bh_2(\bX_i)\Big\}.
	$$
	Rewrite $\Gb^{*\top}\Wb^*=(\Mb_1,\Mb_2)$, where $\Mb_1\in\RR^{q\times m_1}$ and  $\Mb_1\in\RR^{q\times m_2}$. Then, $\bbeta^o$ satisfies
	\begin{equation}\label{eqmh}
	\EE\Big\{\Big(\frac{\pi(\bX_i)}{\pi_i^o}-\frac{1-\pi(\bX_i)}{1-\pi_i^o}\Big)\Mb_1\bh_1(\bX_i)+\Big(\frac{\pi(\bX_i)}{\pi_i^o}-1\Big)\Mb_2\bh_2(\bX_i) \Big\}=0.
	\end{equation}
	Following the similar arguments to that in case (1), we can prove that
	\begin{eqnarray*}
		\hat{\mu}_{\hat\bbeta}&=&\EE\Big(\frac{T_iY_i}{\pi_i^o}-\frac{(1-T_i)Y_i}{1-\pi_i^o}\Big)+o_p(1)\\
		&=&\EE\Big(\frac{\EE(T_i\mid \bX_i)\EE(Y_i(1)\mid \bX_i)}{\pi_i^o}-\frac{(1-\EE(T_i\mid \bX_i))\EE(Y_i(1)\mid \bX_i)}{1-\pi_i^o}\Big)+o_p(1).
	\end{eqnarray*}
	By $\EE(T_i\mid \bX_i)=\pi(\bX_i)$ and the outcome model, it further implies
	\begin{eqnarray*}
		\hat{\mu}_{\hat\bbeta}-\mu&=&\EE\Big\{\frac{\pi(\bX_i)(K(\bX_i)+L(\bX_i))}{\pi_i^o}-\frac{(1-\pi(\bX_i))K(\bX_i)}{1-\pi_i^o}\Big\}-\mu+o_p(1)\\
		&=&\EE\Big\{\Big(\frac{\pi(\bX_i)}{\pi_i^o}-\frac{1-\pi(\bX_i)}{1-\pi_i^o}\Big)K(\bX_i)\Big\}+\EE\Big\{\frac{\pi(\bX_i)L(\bX_i)}{\pi_i^o}\Big\}-\mu+o_p(1)\\
		&=&\EE\Big\{\Big(\frac{\pi(\bX_i)}{\pi_i^o}-\frac{1-\pi(\bX_i)}{1-\pi_i^o}\Big)K(\bX_i)\Big\}+\EE\Big\{\Big(\frac{\pi(\bX_i)}{\pi_i^o}-1\Big)L(\bX_i)\Big\}+o_p(1),
	\end{eqnarray*}
	where in the last step we use $\mu=\EE(L(\bX_i))$. By equation (\ref{eqmh}), we obtain $\hat{\mu}=\mu+o_p(1)$, provided $K(\bX_i)=\balpha_1^{\top}\Mb_1\bh_1(\bX_i)$ and $L(\bX_i)=\balpha_2^{\top}\Mb_2\bh_2(\bX_i)$, where $\balpha_1$ and $\balpha_2$ are $q$-dimensional vectors of constants. This completes the whole proof.

\end{proof}

\subsection{Proof of Theorem \ref{thmAN}}

\begin{proof}[Proof of Theorem \ref{thmAN}]
	We first consider the case (1). That is the propensity score model is correctly specified. By the mean value theorem, we have $\hat{\mu}=\bar{\mu}+\hat\Hb(\tilde\bbeta)^{\top}(\hat{\bbeta}-\bbeta^o)$,
	where
	\begin{eqnarray*}
		\bar{\mu}&=&\frac{1}{n}\sum_{i=1}^n\Big(\frac{T_iY_i}{\pi_i^o}-\frac{(1-T_i)Y_i}{1-\pi_i^o}\Big),~~~\hat\Hb(\tilde\bbeta)=-\frac{1}{n}\sum_{i=1}^n\Big(\frac{T_iY_i}{\tilde\pi_i^2}+\frac{(1-T_i)Y_i}{(1-\tilde\pi_i)^2}\Big)\frac{\partial \tilde{\pi}_i}{\partial\bbeta},
	\end{eqnarray*}
	where $\pi_i^o=\pi_{\bbeta^o}(\bX_i)$, $\tilde\pi_i=\pi_{\tilde\bbeta}(\bX_i)$ and $\tilde{\bbeta}$ is an intermediate value between $\hat{\bbeta}$ and $\bbeta^o$. By Assumption \ref{ass2} (2), we can show that the summand in $\hat\Hb(\tilde\bbeta)$ has a bounded envelop function.  By Lemma \ref{lemulan}, we have $\sup_{\bbeta\in\mathbb{B}_r(\bbeta^o)}|\hat\Hb(\bbeta)-\EE(\hat\Hb(\bbeta))|=o_p(1)$. Since $\hat{\bbeta}$ is consistent, by the dominated convergence theorem we can obtain $\hat\Hb(\tilde\bbeta)=\Hb^*+o_p(1)$, where \begin{eqnarray*}
		\Hb^*&=&-\EE\Big\{\Big(\frac{T_iY_i}{\pi_i^{o2}}+\frac{(1-T_i)Y_i}{(1-\pi_i^o)^{2}}\Big)\frac{\partial \pi_i^o}{\partial\bbeta}\Big\}= -\EE\Big\{\Big(\frac{Y_i(1)}{\pi_i^o}+\frac{Y_i(0)}{1-\pi_i^o}\Big)\frac{\partial \pi_i^o}{\partial\bbeta}\Big\}\\
		&=&-\EE\Big\{\frac{K(\bX_i)+L(\bX_i)(1-\pi_i^o)}{\pi_i^o(1-\pi_i^o)}\frac{\partial \pi_i^o}{\partial\bbeta}\Big\}.
	\end{eqnarray*}
	Finally, we invoke the central limit theorem and equation (\ref{eqhatbetaexpansion}) to obtain that
	$$
	n^{1/2}(\hat{\mu}-\mu)\stackrel{d}{\longrightarrow} N(0, \bar\Hb^{*\top}\bSigma\bar\Hb^*),
	$$
	where $\bar\Hb^*=(1,\Hb^{*\top})^{\top}$, $\bSigma_{\bbeta}=(\Gb^{*\top}\Wb^*\Gb^*)^{-1}\Gb^{*\top}\Wb^*\bOmega\Wb^*\Gb^*(\Gb^{*\top}\Wb^*\Gb^*)^{-1}$ and
	\[ \bSigma= \left( \begin{array}{cc}
	\Sigma_{\mu} & \bSigma^{\top}_{\mu\bbeta}  \\
	\bSigma_{\mu\bbeta} & \bSigma_{\bbeta} \end{array} \right).\]
	Denote $b_i(T_i,\bX_i,Y_i(1),Y_i(0))=T_iY_i(1)/\pi_i^o-(1-T_i)Y_i(0)/(1-\pi_i^o)-\mu$. Here, some simple calculations yield,
	$$
	\Sigma_{\mu}=\EE[b^2_i(T_i,\bX_i,Y_i(1),Y_i(0))]=\EE\Big(\frac{Y_i^2(1)}{\pi_i^o}+\frac{Y_i^2(0)}{1-\pi_i^o}\Big)-\mu^{2}.
	$$
	In addition, the off diagonal matrix can be written as  $\bSigma_{\mu\bbeta}=(\bSigma^{\top}_{1\mu\bbeta},\bSigma^{\top}_{2\mu\bbeta})^{\top}$, where
	$$
	\bSigma_{\mu\bbeta} = -(\Gb^{*\top}\Wb^*\Gb^*)^{-1}\Gb^{*\top}\Wb^*\Tb,
	$$
	where $\Tb=(\EE[\bg^\top_{1\bbeta^o}(T_i, \bX_i)b_i(T_i,\bX_i,Y_i(1),Y_i(0))],\EE[\bg^\top_{2\bbeta^o}(T_i, \bX_i)b_i(T_i,\bX_i,Y_i(1),Y_i(0))])^{\top}$
with
	$$
	\bg_{1\bbeta}(T_i, \bX_i)=\Big(\frac{T_i}{\pi_{\bbeta}(\bX_i)}-\frac{1-T_i}{1-\pi_{\bbeta}(\bX_i)}\Big)\bh_{1}(\bX_i),~~\textrm{and}~~\bg_{2\bbeta}(T_i, \bX_i)=\Big(\frac{T_i}{\pi_{\bbeta}(\bX_i)}-1\Big)\bh_{2}(\bX_i).
	$$
	After some algebra, we can show that
	$$\Tb=\left\{\EE\Big(\frac{K(\bX_i)+(1-\pi_i^o)L(\bX_i)}{(1-\pi_i^o)\pi_i^o}\bh^\top_{1}(\bX_i)\Big), \EE\Big(\frac{K(\bX_i)+(1-\pi_i^o)L(\bX_i)}{\pi_i^o}\bh^\top_{2}(\bX_i)\Big)\right\}^{\top}.
	$$
	This completes the proof of equation (\ref{eqdeltaAN1}). Next, we consider the case (2). Recall that  $\PP(T_i=1\mid \bX_i)=\pi(\bX_i)\neq \pi_{\beta^o}(\bX_i)$. Following the similar arguments, we can show that
	$$
	\hat{\mu}_{\hat\bbeta}-\mu=\frac{1}{n}\sum_{i=1}^nD_i+\Hb^{*\top}(\hat{\bbeta}-\bbeta^o)+o_p(n^{-1/2}),
	$$
	where
	$$
	D_i=\frac{T_iY_i(1)}{\pi_i^o}-\frac{(1-T_i)Y_i(0)}{1-\pi_i^o}-\mu,
	$$
	and
	$$ \Hb^*=-\EE\Big\{\Big(\frac{\pi(\bX_i)(K(\bX_i)+L(\bX_i))}{\pi_i^{o2}}+\frac{(1-\pi(\bX_i))K(\bX_i)}{(1-\pi_i^o)^{2}}\Big)\frac{\partial {\pi}^o_i}{\partial\bbeta}\Big\}.
	$$
	By equation (\ref{eqhatbetaexpansion}) in Lemma \ref{lemAN}, we have that
	$$
	n^{1/2}(\hat{\mu}_{\hat\bbeta}-\mu)\stackrel{d}{\longrightarrow} N(0, \tilde\Hb^{*\top}\tilde\bSigma\tilde\Hb^*),
	$$
	where $\tilde\Hb^*=(1,\Hb^{*\top})^{\top}$, $\bSigma_{\bbeta}=(\Gb^{*\top}\Wb^*\Gb^*)^{-1}\Gb^{*\top}\Wb^*\bOmega\Wb^*\Gb^*(\Gb^{*\top}\Wb^*\Gb^*)^{-1}$ and
	\[ \tilde\bSigma= \left( \begin{array}{cc}
	\Sigma_{\mu} & \tilde\bSigma^{\top}_{\mu\bbeta}  \\
	\tilde\bSigma_{\mu\bbeta} & \tilde\bSigma_{\bbeta} \end{array} \right).\]
	Denote $c_i(T_i,\bX_i,Y_i(1),Y_i(0))=T_iY_i(1)/\pi_i^o-(1-T_i)Y_i(0)/(1-\pi_i^o)-\mu$. As shown in the proof of Theorem \ref{thmdr}, $\EE[b_i(T_i,\bX_i,Y_i(1),Y_i(0))]=0$. Thus,
	\begin{eqnarray*}
		\Sigma_{\mu}&=&\EE[c^2_i(T_i,\bX_i,Y_i(1),Y_i(0))]=\EE\Big(\frac{T_iY_i^2(1)}{\pi_i^{o2}}+\frac{(1-T_i)Y_i^2(0)}{(1-\pi_i^o)^2}\Big)-\mu^{2}\\
		&=&\EE\Big(\frac{\pi(\bX_i)Y_i^2(1)}{\pi_i^{o2}}+\frac{(1-\pi(\bX_i))Y_i^2(0)}{(1-\pi_i^o)^2}\Big)-\mu^{2}.
	\end{eqnarray*}
	Similarly, the off diagonal matrix can be written as  $\tilde\bSigma_{\mu\bbeta}=(\tilde\bSigma^{\top}_{1\mu\bbeta},\tilde\bSigma^{\top}_{2\mu\bbeta})^{\top}$, where
	$$
	\tilde\bSigma_{\mu\bbeta}= -(\Gb^{*\top}\Wb^*\Gb^*)^{-1}\Gb^{*\top}\Wb^* \bS,
	$$
	where $\bS=(\EE[\bg^\top_{1\bbeta^o}(T_i, \bX_i)c_i(T_i,\bX_i,Y_i(1),Y_i(0))]
,\EE[\bg^\top_{2\bbeta^o}(T_i, \bX_i)c_i(T_i,\bX_i,Y_i(1),Y_i(0))])^{\top}$
with
	\begin{equation}\label{eqg}
	\bg_{1\bbeta}(T_i, \bX_i)=\Big(\frac{T_i}{\pi_{\bbeta}(\bX_i)}-\frac{1-T_i}{1-\pi_{\bbeta}(\bX_i)}\Big)\bh_{1}(\bX_i),~~\textrm{and}~~\bg_{2\bbeta}(T_i, \bX_i)=\Big(\frac{T_i}{\pi_{\bbeta}(\bX_i)}-1\Big)\bh_{2}(\bX_i).
	\end{equation}
	After some tedious algebra, we can show that $\bS=(\bS^\top_1,\bS^\top_2)^{\top}$,	
	where
	$$
	\bS_1=\EE\Big\{\Big(\frac{\pi(\bX_i)(K(\bX_i)+L(\bX_i)-\pi^o_i\mu)}{\pi_i^{o2}}+\frac{(1-\pi(\bX_i))(K(\bX_i)+(1-\pi^o_i)\mu)}{(1-\pi_i^{o})^2}\Big)\bh_{1}(\bX_i)\Big\},
	$$
	$$
	\bS_2=\EE\Big\{\Big(\frac{\pi(\bX_i)[(K(\bX_i)+L(\bX_i))(1-\pi^o_i)-\pi^o_i\mu]}{\pi_i^{o2}}+\frac{(1-\pi(\bX_i))K(\bX_i)+(1-\pi^o_i)\mu}{1-\pi_i^{o}}\Big)\bh_{2}(\bX_i)\Big\}.
	$$
	This completes the proof of equation (\ref{eqdeltaAN2}).
	
	Finally, we start to prove part 3.
	By (\ref{eqdeltaAN1}), the asymptotic variance of $\hat{\mu}$ denoted by $V$, can be written as
	\begin{equation}\label{eqV}
	V=\Sigma_{\mu}+2\Hb^{*\top}\bSigma_{\mu\bbeta}+\Hb^{*\top}\bSigma_{\bbeta}\Hb^{*}.
	\end{equation}
	Note that by Lemma \ref{lemAN}, we have $\bSigma_{\bbeta}=(\Gb^{*\top}\bOmega^{-1}\Gb^*)^{-1}$. Under this correctly specified   propensity score model, some algebra yields
	\[ \bOmega= \EE[\bg_{\bbeta^o}(T_i, \bX_i)\bg^{\top}_{\bbeta^o}(T_i, \bX_i)]=\left( \begin{array}{cc}
	\EE(\frac{\bh_1\bh_1^{\top}}{\pi_i^o(1-\pi_i^o)}) & \EE(\frac{\bh_1\bh_2^{\top}}{\pi_i^o})  \\
	\EE(\frac{\bh_2\bh^{\top}_1}{\pi_i^o})  & \EE(\frac{\bh_2\bh_2^{\top}(1-\pi_i^o)}{\pi_i^o})  \end{array} \right),\]
	where $\bg_{\bbeta}(T_i, \bX_i)=(\bg^{\top}_{1\bbeta}(T_i, \bX_i),\bg^{\top}_{2\bbeta}(T_i, \bX_i))^{\top}$ and $\bg_{1\bbeta}(T_i, \bX_i)$ and $\bg_{2\bbeta}(T_i, \bX_i)$ are defined in (\ref{eqg}). In addition, $\Gb^*=(\Gb^{*\top}_1,\Gb^{*\top}_{2})^{\top}$, where
	\begin{equation}\label{eqgb12}
	\Gb_1^*=-\EE\Big(\frac{\bh_1(\bX_i)}{\pi_i^o(1-\pi_i^o)}\Big(\frac{\partial\pi_i^o}{\partial\bbeta}\Big)^{\top}\Big),~~\Gb_2^*=-\EE\Big(\frac{\bh_2(\bX_i)}{\pi_i^o}\Big(\frac{\partial\pi_i^o}{\partial\bbeta}\Big)^{\top}\Big).
	\end{equation}
	Since the functions $\bK(\cdot)$ and $\bL(\cdot)$ lie in the linear space spanned by the functions $\Mb_1\bh_1(\cdot)$ and $\Mb_2\bh_2(\cdot)$ respectively, where $\Mb_1\in\RR^{q\times m_1}$ and $\Mb_1\in\RR^{q\times m_2}$ are the partitions of $\Gb^{*\top}\Wb^*=(\Mb_1,\Mb_2)$. We have  $K(\bX_i)=\balpha_1^{\top}\Mb_1\bh_1(\bX_i)$ and $L(\bX_i)=\balpha_2^{\top}\Mb_2\bh_2(\bX_i)$, where $\balpha_1$ and $\balpha_2$ are $q$-dimensional vectors of constants. Thus
	\begin{eqnarray*}
		\Hb^*&=&-\EE\Big\{\frac{K(\bX_i)+L(\bX_i)(1-\pi_i^o)}{\pi_i^{o}(1-\pi_i^o)}\frac{\partial {\pi}^o_i}{\partial\bbeta}\Big\}\\
		&=&-\EE\Big\{\frac{\balpha_1^{\top}\Mb_1\bh_1(\bX_i)+\balpha_2^{\top}\Mb_2\bh_2(\bX_i)(1-\pi_i^o)}{\pi_i^{o}(1-\pi_i^o)}\frac{\partial {\pi}^o_i}{\partial\bbeta}\Big\}.
	\end{eqnarray*}
	Comparing to the expression of $\Gb^*$ in (\ref{eqgb12}), we can rewrite $\Hb^*$ as
	\[\Hb^*=\Gb^{*\top} \left( \begin{array}{c}
	\Mb_1^{\top}\balpha_1  \\
	\Mb_2^{\top}\balpha_2  \end{array} \right).\]
	Following the similar derivations, it is seen that
	\[\bSigma_{\mu\bbeta}=-(\Gb^{*\top}\bOmega^{-1}\Gb^{*})^{-1}\Gb^{*\top}\bOmega^{-1} \left( \begin{array}{c}
	\EE\{\frac{\balpha_1^{\top}\Mb_1\bh_1(\bX_i)+\balpha_2^{\top}\Mb_2\bh_2(\bX_i)(1-\pi_i^o)}{\pi_i^{o}(1-\pi_i^o)}\bh_1(\bX_i)\}  \\
	\EE\{\frac{\balpha_1^{\top}\Mb_1\bh_1(\bX_i)+\balpha_2^{\top}\Mb_2\bh_2(\bX_i)(1-\pi_i^o)}{\pi_i^{o}}\bh_2(\bX_i)\}  \end{array} \right),\]
	which is equivalent to
	\[\bSigma_{\mu\bbeta}=-(\Gb^{*\top}\bOmega^{-1}\Gb^{*})^{-1}\Gb^{*\top} \left( \begin{array}{c}
	\Mb_1^{\top}\balpha_1  \\
	\Mb_2^{\top}\balpha_2  \end{array} \right).\]
	Hence,
	\[\Hb^{*\top}\bSigma_{\mu\bbeta}=-(\balpha_1^{\top}\Mb_1,\balpha_2^{\top}\Mb_2)\Gb^*(\Gb^{*\top}\bOmega^{-1}\Gb^{*})^{-1}\Gb^{*\top} \left( \begin{array}{c}
	\Mb_1^{\top}\balpha_1  \\
	\Mb_2^{\top}\balpha_2  \end{array} \right)=-\Hb^{*\top}\bSigma_{\bbeta}\Hb^{*}.\]
	Together with (\ref{eqV}), we have
	\[V=\Sigma_{\mu}-(\balpha_1^{\top}\Mb_1,\balpha_2^{\top}\Mb_2)\Gb^*(\Gb^{*\top}\bOmega^{-1}\Gb^{*})^{-1}\Gb^{*\top} \left( \begin{array}{c}
	\Mb_1^{\top}\balpha_1  \\
	\Mb_2^{\top}\balpha_2  \end{array} \right).\]
	This completes of the proof.

\end{proof}
\subsection{Proof of Corollary \ref{coraddmoment}}

\begin{proof}[Proof of Corollary \ref{coraddmoment}]
	By Theorem \ref{thmAN}, it suffices to show that
	\begin{equation}\label{eqcoraddmoment1}
	(\bar\balpha_1^{\top}\bar\Mb_1,\bar\balpha_2^{\top}\bar\Mb_2)\bar\Gb^*\bar\Cb\bar\Gb^{*\top} \left( \begin{array}{c}
	\bar\Mb_1^{\top}\bar\balpha_1  \\
	\bar\Mb_2^{\top}\bar\balpha_2  \end{array} \right)\leq (\balpha_1^{\top}\Mb_1,\balpha_2^{\top}\Mb_2)\Gb^*\Cb\Gb^{*\top} \left( \begin{array}{c}
	\Mb_1^{\top}\balpha_1  \\
	\Mb_2^{\top}\balpha_2  \end{array} \right),
	\end{equation}
	where $\Cb=(\Gb^{*\top}\bOmega^{-1}\Gb^{*})^{-1}$ and $\bar\balpha_1$ and $\bar\Mb_1$ among others are the corresponding quantities with $\bar \bh_1(\bX)$ and $\bar \bh_2(\bX)$. Assume that $\bar \bh_1(\bX)\in\RR^{m_1+a_1}$ and $\bar \bh_2(\bX)\in\RR^{m_2+a_2}$.
	Since $K(\bX_i)=\balpha_1^{\top}\Mb_1\bh_1(\bX_i)$ and $L(\bX_i)=\balpha_2^{\top}\Mb_2\bh_2(\bX_i)$, we find that $(\bar\balpha_1^{\top}\bar\Mb_1,\bar\balpha_2^{\top}\bar\Mb_2)=(\balpha_1^{\top}\Mb_1,0,\balpha_2^{\top}\Mb_2,0)$, which is a vector in $\RR^{m+a}$ with $a=a_1+a_2$.  Because some components of $(\bar\balpha_1^{\top}\bar\Mb_1,\bar\balpha_2^{\top}\bar\Mb_2)$ are $0$, by the matrix algebra, (\ref{eqcoraddmoment1}) holds if $\Cb-\bar\Cb$ is positive semidefinite. Without loss of generality, we rearrange orders and write the $(m+a)\times q$ matrix $\bar{\Gb}^*$ and the $(m+a)\times (m+a)$ matrix $\bar{\bOmega}^*$ as
	\begin{equation*}
	\bar{\Gb}^*=\Big( \begin{array}{c}
	\Gb^*  \\
	\Ab  \end{array} \Big), ~~\textrm{and}~~\bar{\bOmega}=\Big( \begin{array}{cc}
	\bOmega & \bOmega_1 \\
	\bOmega_1 & \bOmega_2  \end{array} \Big).
	\end{equation*}
	For simplicity, we use the following notation: two matrices satisfy $\Ob_1\geq \Ob_2$ if $\Ob_1- \Ob_2$ is positive semidefinite. To show $\Cb\geq \bar\Cb$, we have the following derivation
	\begin{eqnarray*}
		\bar\Gb^{*\top}\bar\bOmega^{-1}\bar\Gb^{*}&=&(\Gb^{*\top},\Ab^\top)\Big( \begin{array}{cc}
			\bOmega & \bOmega_1 \\
			\bOmega_1 & \bOmega_2  \end{array} \Big)^{-1}\Big( \begin{array}{c}
			\Gb^*  \\
			\Ab  \end{array} \Big)\\
		&\geq& (\Gb^{*\top},\Ab^\top)\Big( \begin{array}{cc}
			\bOmega^{-1} & 0 \\
			0 & 0  \end{array} \Big)\Big( \begin{array}{c}
			\Gb^*  \\
			\Ab  \end{array} \Big)=\Gb^{*\top}\bOmega^{-1}\Gb^{*}.
	\end{eqnarray*}
	This completes the proof of (\ref{eqcoraddmoment1}), and therefore the corollary holds.
	
\end{proof}

\subsection{Proof of Corollary \ref{cormq}}
\begin{proof}[Proof of Corollary \ref{cormq}]
	The proof of the double robustness property mainly follows from Theorem \ref{thmdr}. In this case, we only need to verify that $\textrm{span}\{\bh_1(\cdot)\}=\textrm{span}\{\Mb_1\bh_1(\cdot)\}$ and $\textrm{span}\{\bh_2(\cdot)\}=\textrm{span}\{\Mb_2\bh_2(\cdot)\}$, where $\Mb_1\in\RR^{q\times m_1}$ and $\Mb_1\in\RR^{q\times m_2}$ are the partitions of $\Gb^{*\top}\Wb^*=(\Mb_1,\Mb_2)$. Apparently, we have  $\textrm{span}\{\Mb_1\bh_1(\cdot)\}\subseteq\textrm{span}\{\bh_1(\cdot)\}$, since the former can always be written as a linear combination of $\bh_1(\cdot)$. To show  $\textrm{span}\{\bh_1(\cdot)\}\subseteq\textrm{span}\{\Mb_1\bh_1(\cdot)\}$, note that the $m_1\times m_1$ principal submatrix $\Mb_{11}$ of $\Mb_1$ is invertible. Thus, $\textrm{span}\{\bh_1(\cdot)\}=\textrm{span}\{\Mb_{11}\bh_1(\cdot)\}\subseteq\textrm{span}\{\Mb_1\bh_1(\cdot)\}$. This is because the $m_1$ dimensional functions $\Mb_{11}\bh_1(\cdot)$ are identical to the first $m_1$ coordinates of $\Mb_{1}\bh_1(\cdot)$. This completes the proof of double robustness property. The efficiency property follows from  Theorem \ref{thmAN}. We do not replicate the details.
\end{proof}
%

\section{Regularity Conditions in Section \ref{secnon}}
\begin{assumption}\label{assnon} The following regularity
  conditions are assumed.
  \begin{enumerate}
  \item
  The minimizer
    $\bbeta^o=\argmin_{\bbeta\in\Theta} \|\EE(\bar\bg_{\bbeta}(\bT,\bX))\|^2_2$ is unique.
  \item  $\bbeta^o\in\textrm{int}(\Theta)$, where $\Theta$ is a
    compact set.
  \item There exist constants $0<c_0<1/2$, $c_1>0$ and $c_2>0$ such that $c_0\leq J(v)\leq 1-c_0$ and $0< c_1\leq \partial J(v)/\partial v\leq c_2$,  for any $v=\bbeta^\top\bB(\bx)$ with $\bbeta\in \textrm{int}(\Theta)$. There exists a small neighborhood of $v^*=\bbeta^{*\top}\bB(\bx)$, say $\mathcal{B}$ such that for any $v\in\mathcal{B}$ it holds that $|\partial^2 J(v)/\partial v^2|\leq c_3$ for some constant $c_3>0$.
  \item   $\EE|Y(1)|^2<\infty$ and $\EE|Y(0)|^2<\infty$.
  \item Let $\Gb^*:=\EE[\bB(\bX_i)\bh(\bX_i)^\top\bDelta_i(\psi^*(\bX_i))]$, where $\bDelta_i(\psi(\bX_i))=\textrm{diag}(\xi_i(\psi(\bX_i)){\bf 1}_{m_1}, \phi_i(\psi(\bX_i)){\bf 1}_{m_2})$ is a $\kappa\times \kappa$ diagonal matrix with
\begin{align*}
\xi_i(\psi(\bX_i))&=-\Big(\frac{T_i}{J^2(\psi(\bX_i))}+\frac{1-T_i}{(1-J(\psi(\bX_i)))^2}\Big)\frac{\partial J(\psi(\bX_i))}{\partial \psi}, \\
\phi_i(\psi(\bX_i))&=-\frac{T_i}{J^2(\psi(\bX_i))}\frac{\partial J(\psi(\bX_i))}{\partial \psi}.
\end{align*}
Here, ${\bf 1}_{m_1}$ is a vector of $1$'s with length $m_1$.  Assume that there exists a constant $C_1>0$, such that $\lambda_{\min}(\Gb^{*\top}\Gb^*)\geq C_1$, where $\lambda_{\min}(\cdot)$ denotes the minimum eigenvalue of a matrix.
 \item  For some constant $C$, it holds $\|\EE[\bh(\bX_i)\bh(\bX_i)^\top]\|_2\leq C$ and $\|\EE[\bB(\bX_i)\bB(\bX_i)^\top]\|_2\leq C$, where $\|\Ab\|_2$ denotes the spectral norm of the matrix $\Ab$. In addition, $\sup_{\bx\in\mathcal{X}}\|\bh(\bx)\|_2\leq C\kappa^{1/2}$, and $\sup_{\bx\in\mathcal{X}}\|\bB(\bx)\|_2\leq C\kappa^{1/2}$.
 \item Let $m^*(\cdot)\in\mathcal{M}$ and $K(\cdot), L(\cdot)\in\mathcal{H}$, where $\mathcal{M}$ and $\mathcal{H}$ are two sets of smooth functions. Assume that $\log N_{[~]}(\epsilon, \mathcal{M}, L_2(P))\leq C(1/\epsilon)^{1/k_1}$ and $\log N_{[~]}(\epsilon, \mathcal{H}, L_2(P))\leq C(1/\epsilon)^{1/k_2}$, where $C$ is a positive constant and $k_1, k_2>1/2$. Here, $N_{[~]}(\epsilon, \mathcal{M}, L_2(P))$ denotes the minimum number of $\epsilon$-brackets needed to cover $\mathcal{M}$; see  Definition 2.1.6 of \cite{van1996weak}.
  \end{enumerate}
\end{assumption}
Note that the first five conditions are similar to
Assumptions~\ref{ass1}~and~\ref{ass2}. In particular, Condition~5 is
the natural extension of Condition~1 of Assumption~\ref{ass2}, when
the dimension of the matrix $\Gb^*$ grows with the sample size
$n$. Condition~6 is a mild technical condition on the basis functions
$\bh(\bx)$ and $\bB(\bx)$, which is implied by Assumption 2 of
\cite{newey1997convergence}. In particular, this condition is
satisfied by many bases such as the regression spline, trigonometric
polynomial, wavelet bases; see
\cite{newey1997convergence,horowitz2004nonparametric,chen2007large,belloni2015some}. Finally,
Condition~7 is a technical condition on the complexity of the function
classes $\mathcal{M}$ and $\mathcal{H}$. Specifically, it requires
that the bracketing number $N_{[~]}(\epsilon, \cdot, L_2(P))$ of
$\mathcal{M}$ and $\mathcal{H}$ cannot increase too fast as $\epsilon$
approaches to 0. This condition holds for many commonly used function
classes. For instance, if $\mathcal{M}$ corresponds to the H\"older
class with smoothness parameter $s$ defined on a bounded convex subset
of $\RR^d$, then
$\log N_{[~]}(\epsilon, \mathcal{M}, L_2(P))\leq C(1/\epsilon)^{d/s}$
by Corollary 2.6.2 of \cite{van1996weak}. Hence, this condition simply
requires $s/d>1/2$. Given Assumption~\ref{assnon}, the following
theorem establishes the asymptotic normality and semiparametric
efficiency of the estimator $\tilde{\mu}_{\tilde\bbeta}$.

\section{Proof of Results in Section \ref{secnon}}

For notational simplicity, we denote $\pi^*(\bx)=J(m^*(\bx))$, $J^*(\bx)=J(\bbeta^{*\top}\bB(\bx))$,  and $\tilde J(\bx)=J(\tilde\bbeta^{\top}\bB(\bx))$. Define $Q_n(\bbeta)=\|\bar\bg_{\bbeta}(\bT, \bX)\|_2^2$ and $Q(\bbeta)=\|\EE\bg_{\bbeta}(\bT_i, \bX_i)\|_2^2$. In the following proof, we use $C, C'$ and $C''$ to denote generic positive constants, whose values may change from line to line. In this section, denote $K=\kappa$ and $\psi(\bX)=m(\bX)$.

\begin{lemma}[Bernstein's inequality  for $U$-statistics \citep{arcones1995bernstein}]\label{lemustat}
Given   i.i.d. random variables $Z_1,\ldots Z_n$ taking values in a measurable space $(\mathbb{S},\mathcal{B})$ and a symmetric and measurable kernel function  $h\colon\mathbb{S}^m \rightarrow R$, we define the $U$-statistics with kernel $h$ as
$
U \coloneqq {n \choose m}^{-1}\sum_{i_1<\ldots<i_m} h(Z_{i_1},\ldots, Z_{i_m}).
$
Suppose that $\EE h(Z_{i_1},\ldots, Z_{i_m}) = 0$, $\EE \bigl\{\EE[h(Z_{i_1},\ldots, Z_{i_m})\mid Z_{i_1}]\bigr\}^2 = \sigma^2$ and $\|h\|_{\infty} \leq b$.  There exists a constant $K(m)>0$ depending on $m$ such that
$$
\PP( |U| > t) \leq 4 \exp\big\{-nt^2/ [2m^2\sigma^2 + K(m) bt]\big\},~\forall t>0.
$$
\end{lemma}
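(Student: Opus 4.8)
The plan is to prove Lemma~\ref{lemustat} by the standard route for Bernstein-type inequalities for $U$-statistics: isolate the H\'ajek (linear) projection, which carries the variance proxy $m^2\sigma^2$ and is an average of i.i.d.\ bounded variables amenable to scalar Bernstein, and then show that the remaining, completely degenerate, components only contribute to the constant $K(m)$. Concretely, I would first invoke the Hoeffding ANOVA decomposition of the symmetric kernel,
\[
h(z_1,\dots,z_m)=\sum_{c=0}^{m}\ \sum_{\{j_1<\dots<j_c\}\subseteq\{1,\dots,m\}}\tilde h_c(z_{j_1},\dots,z_{j_c}),
\]
where $\tilde h_0=\EE h=0$ and, for $c\ge1$, $\tilde h_c$ is the canonical kernel of order $c$, i.e.\ $\EE[\tilde h_c(Z_{j_1},\dots,Z_{j_c})\mid Z_{j_1},\dots,Z_{j_{c-1}}]=0$. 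By symmetry this gives $U=\sum_{c=1}^{m}\binom{m}{c}U_n^{(c)}$ with $U_n^{(c)}=\binom{n}{c}^{-1}\sum_{i_1<\dots<i_c}\tilde h_c(Z_{i_1},\dots,Z_{i_c})$; here $\tilde h_1(z)=\EE[h(Z_1,\dots,Z_m)\mid Z_1=z]$ satisfies $\EE\tilde h_1=0$, $\EE\tilde h_1^2=\sigma^2$, $|\tilde h_1|\le b$, while each $\tilde h_c$ obeys $\|\tilde h_c\|_\infty\le 2^m b$ by applying the triangle inequality to the inclusion--exclusion formula that defines it.

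Second, I would handle the linear term $\binom{m}{1}U_n^{(1)}=\frac{m}{n}\sum_{i=1}^{n}\tilde h_1(Z_i)$, which is $m/n$ times a sum of $n$ i.i.d.\ mean-zero variables of variance $\sigma^2$ and range at most $b$; the classical one-dimensional Bernstein inequality then yields $\PP(|m\,U_n^{(1)}|>t)\le 2\exp\{-nt^2/(2m^2\sigma^2+\tfrac23 m b t)\}$, already of the advertised form. Third, for each $c\ge2$ the term $U_n^{(c)}$ is a completely degenerate (canonical) $U$-statistic of degree $c$ with bounded kernel, and I would invoke the known exponential inequalities for such objects --- obtained via decoupling and the hypercontractivity of Rademacher chaoses of order $c$ --- giving tails of the form $\PP(|U_n^{(c)}|>s)\le C_c\exp\{-C_c'\,n\,(s/(2^m b))^{2/c}\}$. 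Because a degenerate $U$-statistic of degree $c\ge2$ fluctuates at the scale $n^{-c/2}$, i.e.\ faster than the $n^{-1/2}$ scale of $U_n^{(1)}$, and since $2/c\le1<2$, these stretched-exponential tails are dominated by the linear-term tail over the whole relevant range of deviations, up to factors depending only on $c$ and $m$. Finally I would split $t$ among the $m$ components, take a union bound, absorb all $c$- and $m$-dependent constants into a single $K(m)$, and account for two-sidedness; together these produce the numerical prefactor $4$.

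The main obstacle is the control of the degenerate components $U_n^{(c)}$ for $c\ge2$. Unlike the linear term, these are not sums of i.i.d.\ summands, so scalar Bernstein is unavailable, and one must instead rely on the more delicate concentration theory for Rademacher chaos --- decoupling to reduce to independent blocks, then hypercontractive moment comparison to turn $L_2$-control of the canonical kernel into exponential tail bounds --- and then verify that, uniformly over the range of $t$ in the statement, the resulting tails never overtake the exponential-then-Gaussian tail coming from the H\'ajek projection. Making the bookkeeping of constants uniform in $m$, so that one $K(m)$ suffices, is the fiddly part; once the ANOVA decomposition is set up, the remainder of the argument is routine.
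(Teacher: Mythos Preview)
The paper does not prove this lemma; it is quoted verbatim from \cite{arcones1995bernstein} (note the citation in the lemma heading) and used only as a black box in the proof of Lemma~\ref{lemnon0}. There is therefore no paper proof to compare against. Your sketch---Hoeffding's ANOVA decomposition, scalar Bernstein on the H\'ajek projection $mU_n^{(1)}$, then decoupling and hypercontractivity for the canonical degenerate pieces $U_n^{(c)}$, $c\ge2$---is the standard route to Bernstein-type bounds for $U$-statistics and is in the spirit of the cited reference.

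One small sharpening: the degenerate tails are not literally ``dominated by the linear-term tail'' across both the Gaussian and exponential regimes; what makes the argument close is that they can be absorbed entirely into the sub-exponential $K(m)bt$ portion of the denominator. Since $\|h\|_\infty\le b$ forces $|U|\le b$ almost surely, only $t\le b$ is nontrivial, and on that range $(t/(2^m b))^{2/c}\ge t/(2^m b)$ for every $c\ge2$, so each degenerate tail $C_c\exp\{-C_c'\,n(t/(2^m b))^{2/c}\}$ is at most $\exp\{-nt/(K(m)b)\}$ once $K(m)$ is taken large enough, and this in turn is bounded by the advertised right-hand side. With that adjustment your bookkeeping goes through.
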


\begin{lemma}\label{lemnon0}
Under the conditions in Theorem \ref{thmnon}, it holds that
$$
\sup_{\bbeta\in\Theta}\Big|Q_n(\bbeta)-Q(\bbeta)\Big|=O_p\Big(\sqrt{\frac{K^2\log K}{n}}\Big).
$$
\end{lemma}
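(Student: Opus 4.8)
The plan is to avoid bounding the full $\kappa$-dimensional centered process $\bar\bg_{\bbeta}(\bT,\bX)-\EE\bg_{\bbeta}(\bT_i,\bX_i)$ uniformly in $\bbeta$ (a naive $\epsilon$-net applied directly to this vector process loses a spurious factor $\kappa^{1/2}$), and instead to expand $Q_n(\bbeta)=\|\bar\bg_{\bbeta}(\bT,\bX)\|_2^2$ as a $V$-statistic and peel off its Hoeffding decomposition. Writing $Z_i=(T_i,\bX_i)$ and $\bar\bg_{\bbeta}=n^{-1}\sum_i\bg_{\bbeta}(Z_i)$, we have $Q_n(\bbeta)=n^{-2}\sum_{i,j}\bg_{\bbeta}(Z_i)^\top\bg_{\bbeta}(Z_j)=n^{-2}\sum_i\|\bg_{\bbeta}(Z_i)\|_2^2+\frac{n-1}{n}U_n(\bbeta)$, where $U_n(\bbeta)=\binom{n}{2}^{-1}\sum_{i<j}\bg_{\bbeta}(Z_i)^\top\bg_{\bbeta}(Z_j)$ is a $U$-statistic with $\EE U_n(\bbeta)=\|\EE\bg_{\bbeta}\|_2^2=Q(\bbeta)$. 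Its Hoeffding decomposition is $U_n(\bbeta)-Q(\bbeta)=2(\bar\bg_{\bbeta}-\EE\bg_{\bbeta})^\top\EE\bg_{\bbeta}+D_n(\bbeta)$, with completely degenerate remainder $D_n(\bbeta)=\binom{n}{2}^{-1}\sum_{i<j}\zeta_i(\bbeta)^\top\zeta_j(\bbeta)$ and $\zeta_i(\bbeta)=\bg_{\bbeta}(Z_i)-\EE\bg_{\bbeta}$. Hence $Q_n(\bbeta)-Q(\bbeta)=n^{-2}\sum_i\|\bg_{\bbeta}(Z_i)\|_2^2-n^{-1}Q(\bbeta)+\frac{n-1}{n}\bigl[2(\bar\bg_{\bbeta}-\EE\bg_{\bbeta})^\top\EE\bg_{\bbeta}+D_n(\bbeta)\bigr]$, and it suffices to bound each of the four pieces uniformly over $\bbeta\in\Theta$.

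The first two pieces are controlled deterministically. By Condition~3 of Assumption~\ref{assnon} the factors $T_i/J(\bbeta^\top\bB(\bX_i))-(1-T_i)/(1-J(\bbeta^\top\bB(\bX_i)))$ and $T_i/J(\bbeta^\top\bB(\bX_i))-1$ are bounded by $1/c_0$, so with Condition~6 ($\sup_{\bx}\|\bh(\bx)\|_2\le C\kappa^{1/2}$) we get $\|\bg_{\bbeta}(Z_i)\|_2^2\le C^2\kappa/c_0^2$ for all $\bbeta$ and $i$; therefore $n^{-2}\sum_i\|\bg_{\bbeta}(Z_i)\|_2^2\le C^2\kappa/(nc_0^2)$ and $n^{-1}Q(\bbeta)\le n^{-1}\EE\|\bg_{\bbeta}\|_2^2\le C^2\kappa/(nc_0^2)$, both $o(\sqrt{\kappa^2\log\kappa/n})$ since $\kappa=o(n^{1/3})$.

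For the H\'ajek term, put $W_i(\bbeta)=\bg_{\bbeta}(Z_i)^\top\EE\bg_{\bbeta}$, a scalar with $|W_i(\bbeta)|\le C^2\kappa/c_0^2$ and, using $\|\EE[\bg_{\bbeta}\bg_{\bbeta}^\top]\|_2\le 2C/c_0^2$ (which follows from Condition~6 via $v^\top\EE[\bg_{\bbeta}\bg_{\bbeta}^\top]v\le 2c_0^{-2}(v_1^\top\EE[\bh_1\bh_1^\top]v_1+v_2^\top\EE[\bh_2\bh_2^\top]v_2)$ for the block decomposition $v=(v_1,v_2)$), $\Var(W_i(\bbeta))\le\|\EE[\bg_{\bbeta}\bg_{\bbeta}^\top]\|_2\,\|\EE\bg_{\bbeta}\|_2^2=O(\kappa)$. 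For fixed $\bbeta$, Bernstein's inequality yields $\PP\bigl(|n^{-1}\sum_i(W_i(\bbeta)-\EE W_i(\bbeta))|>t\bigr)\le 2\exp(-cnt^2/\kappa)$ in the range $t=o(1)$. Since $J$ is $C^1$ with $|J'|\le c_2$ and $\|\bB(\bx)\|_2\le C\kappa^{1/2}$, the Jacobian $\partial\bg_{1\bbeta}(Z_i)/\partial\bbeta$ equals a scalar bounded by $c_2/c_0^2$ times the rank-one matrix $\bh_1(\bX_i)\bB(\bX_i)^\top$ (and similarly for $\bg_{2\bbeta}$), so $\bbeta\mapsto\bg_{\bbeta}(Z_i)$ and $\bbeta\mapsto\EE\bg_{\bbeta}$ are $O(\kappa)$-Lipschitz, hence $\bbeta\mapsto n^{-1}\sum_i W_i(\bbeta)$ and its mean are $O(\kappa^{3/2})$-Lipschitz. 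Covering the compact $\Theta$ by an $\epsilon$-net of cardinality $(c/\epsilon)^{\kappa}$ with $\epsilon=n^{-2}$ and taking a union bound, the discretization error is $O(\kappa^{3/2}n^{-2})=o(\sqrt{\kappa^2\log\kappa/n})$ while the net has $\log$-cardinality $O(\kappa\log n)=O(\kappa\log\kappa)$ (here we use $n^{1/(2(r_b+r_h))}=o(\kappa)$, which forces $\log n\asymp\log\kappa$); solving $nt^2/\kappa\gtrsim\kappa\log\kappa$ gives $\sup_{\bbeta}\bigl|(\bar\bg_{\bbeta}-\EE\bg_{\bbeta})^\top\EE\bg_{\bbeta}\bigr|=O_p(\sqrt{\kappa^2\log\kappa/n})$, which is the claimed rate and (via the pointwise variance $\Theta(\kappa)$) is exactly the dominant term.

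Finally, $D_n(\bbeta)$ is a $U$-statistic with symmetric kernel $h_{\bbeta}(Z_i,Z_j)=\zeta_i(\bbeta)^\top\zeta_j(\bbeta)$ that has mean zero and is completely degenerate, since $\EE[h_{\bbeta}(Z_i,Z_j)\mid Z_i]=\zeta_i(\bbeta)^\top\EE\zeta_j(\bbeta)=0$; thus the conditional-variance parameter $\sigma^2$ in Lemma~\ref{lemustat} vanishes, and $\|h_{\bbeta}\|_\infty\le\sup_i\|\zeta_i(\bbeta)\|_2^2\le 4C^2\kappa/c_0^2$. Lemma~\ref{lemustat} then gives $\PP(|D_n(\bbeta)|>t)\le 4\exp(-cnt/\kappa)$ for fixed $\bbeta$; the kernel is $O(\kappa^{3/2})$-Lipschitz in $\bbeta$, so the same net/union-bound argument yields $\sup_{\bbeta}|D_n(\bbeta)|=O_p(\kappa^2\log\kappa/n)=o_p(\sqrt{\kappa^2\log\kappa/n})$. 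Combining the four bounds proves the lemma. The main obstacle is the uniformity over $\bbeta$: the Hoeffding decomposition is what repackages the problem into a scalar process plus a degenerate $U$-statistic, and the refined variance bound $\Var(W_i)=O(\kappa)$ (rather than the crude $O(\kappa^2)$ from $|W_i|=O(\kappa)$) together with the vanishing $\sigma^2$ in Lemma~\ref{lemustat} are precisely what make the $\kappa$-dimensional net affordable at the target rate.
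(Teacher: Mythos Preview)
Your proof is correct and shares its skeleton with the paper's: expand $Q_n(\bbeta)$ as a $V$-statistic, strip off the diagonal (which is $O_p(\kappa/n)$ uniformly by the deterministic bound $\|\bg_{\bbeta}(Z_i)\|_2^2\lesssim\kappa$), and handle the off-diagonal $U$-statistic via an $\epsilon$-net on the compact $\Theta\subset\RR^{\kappa}$ together with the Arcones--Gin\'e inequality of Lemma~\ref{lemustat}. The one substantive difference is that you insert the Hoeffding decomposition before applying concentration, whereas the paper applies Lemma~\ref{lemustat} directly to the centered (but non-degenerate) kernel $u_{ij}(\bbeta)=\bg_{\bbeta}(Z_i)^\top\bg_{\bbeta}(Z_j)-Q(\bbeta)$ with the projection variance $\sigma^2=\EE\{\EE[u_{ij}\mid Z_i]\}^2\le C\kappa$. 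Your H\'ajek term $W_i(\bbeta)=\bg_{\bbeta}(Z_i)^\top\EE\bg_{\bbeta}$ isolates exactly this $O(\kappa)$ variance (via $\Var(W_i)\le\|\EE[\bg_{\bbeta}\bg_{\bbeta}^\top]\|_2\,\|\EE\bg_{\bbeta}\|_2^2$), so the two arguments are essentially dual: what the paper feeds as $\sigma^2$ into Lemma~\ref{lemustat} you instead feed into a scalar Bernstein bound. Your route buys a small bonus: it makes explicit that the degenerate remainder $D_n(\bbeta)$ is of strictly smaller order $O_p(\kappa^2\log\kappa/n)$, so the rate $\sqrt{\kappa^2\log\kappa/n}$ is driven entirely by the linear projection. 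The paper's route is slightly shorter since it avoids treating two pieces separately. Two cosmetic remarks: the paper takes the net radius $r=\kappa^{-2}$ rather than your $\epsilon=n^{-2}$, which sidesteps the need to invoke $\log n\asymp\log\kappa$ (though your justification of that equivalence from $n^{1/(2(r_b+r_h))}=o(\kappa)$ is correct); and your spectral bound $\|\EE[\bg_{\bbeta}\bg_{\bbeta}^\top]\|_2\le 2C/c_0^2$ is exactly what the paper uses implicitly when it writes ``$\|\EE\xi_i^2(\bbeta)\bh_1(\bX_i)\bh_1(\bX_i)^\top\|_2$ is bounded.''
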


\begin{proof}[Proof of Lemma \ref{lemnon0}]
Let $\bxi(\bbeta)=(\xi_1(\bbeta),...,\xi_n(\bbeta))^\top$ and $\bphi(\bbeta)=(\phi_1(\bbeta),...,\phi_n(\bbeta))^\top$, where
$$\xi_i(\bbeta)=\frac{T_i}{J(\bbeta^\top\bB(\bX_i))}-\frac{1-T_i}{1-J(\bbeta^\top\bB(\bX_i))},~~\phi_i(\bbeta)=\frac{T_i}{J(\bbeta^\top\bB(\bX_i))}-1.
$$
Then we have
\begin{align*}
Q_n(\bbeta)&=n^{-2}\sum_{i=1}^n\sum_{j=1}^n\big[\xi_i(\bbeta)\xi_j(\bbeta)\bh_1(\bX_i)^\top\bh_1(\bX_j)+\phi_i(\bbeta)\phi_j(\bbeta)\bh_2(\bX_i)^\top\bh_2(\bX_j)\big]\\
&=n^{-2}\sum_{1\leq i\neq j\leq n}\big[\xi_i(\bbeta)\xi_j(\bbeta)\bh_1(\bX_i)^\top\bh_1(\bX_j)+\phi_i(\bbeta)\phi_j(\bbeta)\bh_2(\bX_i)^\top\bh_2(\bX_j)\big]+A_n(\bbeta),
\end{align*}
where $A_n(\bbeta)=n^{-2}\sum_{i=1}^n\big[\xi_i(\bbeta)^2\|\bh_1(\bX_i)\|^2_2+\phi_i(\bbeta)^2\|\bh_2(\bX_i)\|_2^2\big]$. Since there exists a constant $c_0>0$ such that $c_0\leq |J(\bbeta^\top\bB(\bx))|\leq 1-c_0$ for any $\bbeta\in\Theta$ and $T_i\in\{0,1\}$, it implies that
$\sup_{\bbeta\in\Theta}\max_{1\leq i\leq n}|\xi_i(\bbeta)|\leq C$ and $\sup_{\bbeta\in\Theta}\max_{1\leq i\leq n}|\phi_i(\bbeta)|\leq C$ for some constant $C>0$. Then we can show that
\begin{align*}
\EE\Big(\sup_{\bbeta\in\Theta}|A_n(\bbeta)|\Big)\leq \frac{C}{n}\EE(\|\bh(\bX_i)\|^2_2)=O(K/n).
\end{align*}
By the Markov inequality, we have $\sup_{\bbeta\in\Theta}|A_n(\bbeta)|=O_p(K/n)=o_p(1)$. Following the similar arguments, it can be easily shown that $\sup_{\bbeta\in\Theta}|Q(\bbeta)|/n=O(K/n)$. Thus, it holds that
\begin{align}
\sup_{\bbeta\in\Theta}|Q_n(\bbeta)-Q(\bbeta)|&=\sup_{\bbeta\in\Theta}\Big|\frac{2}{n(n-1)}\sum_{1\leq i<j\leq n}u_{ij}(\bbeta)\Big|+O_p(K/n),\label{eqlemnon00}
\end{align}
where $u_{ij}(\bbeta)=u_{1ij}(\bbeta)+u_{2ij}(\bbeta)$ is a kernel function of a U-statistic with
\begin{align*}
u_{1ij}(\bbeta)&=\xi_i(\bbeta)\xi_j(\bbeta)\bh_1(\bX_i)^\top\bh_1(\bX_j)-\EE[\xi_i(\bbeta)\xi_j(\bbeta)\bh_1(\bX_i)^\top\bh_1(\bX_j)],\\
u_{2ij}(\bbeta)&=\phi_i(\bbeta)\phi_j(\bbeta)\bh_2(\bX_i)^\top\bh_2(\bX_j)-\EE[\phi_i(\bbeta)\phi_j(\bbeta)\bh_2(\bX_i)^\top\bh_2(\bX_j)].
\end{align*}
Since $\Theta$ is a compact set in $\RR^K$, by the covering number theory, there exists a constant $C$ such that $M=(C/r)^K$ balls with the radius $r$ can cover $\Theta$. Namely, $\Theta\subseteq \cup_{1\leq m\leq M} \Theta_m$, where $\Theta_m=\{\bbeta\in\RR^K: \|\bbeta-\bbeta_m\|_2\leq r\}$ for some $\bbeta_1,...,\bbeta_M$. Thus, for any given $\epsilon>0$,
\begin{align}
\PP\Big(\sup_{\bbeta\in\Theta}&\Big|\frac{2}{n(n-1)}\sum_{1\leq i<j\leq n}u_{1ij}(\bbeta)\Big|>\epsilon\Big)\leq \sum_{m=1}^M\PP\Big(\sup_{\bbeta\in\Theta_m}\Big|\frac{2}{n(n-1)}\sum_{1\leq i<j\leq n}u_{1ij}(\bbeta)\Big|>\epsilon\Big)\nonumber\\
&\leq \sum_{m=1}^M\Big[\PP\Big(\Big|\frac{2}{n(n-1)}\sum_{1\leq i<j\leq n}u_{1ij}(\bbeta_m)\Big|>\epsilon/2\Big)\nonumber\\
&\quad\quad\quad+\PP\Big(\sup_{\bbeta\in\Theta_m}\frac{2}{n(n-1)}\sum_{1\leq i<j\leq n}\Big|u_{1ij}(\bbeta)-u_{1ij}(\bbeta_m)\Big|>\epsilon/2\Big)\Big].\label{eqlemnon01}
\end{align}
By the Cauchy-Schwarz inequality, $|\bh_1(\bX_i)^\top\bh_1(\bX_j)|\leq \|\bh_1(\bX_i)\|_2\|\bh_1(\bX_j)\|_2\leq CK$, and thus $|u_{1ij}(\bbeta_m)|\leq CK$.  In addition, for any $\bbeta$,
\begin{align*}
&\EE\big\{\xi_i(\bbeta)\bh_1(\bX_i)^\top\EE[\xi_j(\bbeta)\bh_1(\bX_j)]-\EE[\xi_i(\bbeta)\xi_j(\bbeta)\bh_1(\bX_i)^\top\bh_1(\bX_j)]\big\}^2\\
&\leq \EE\big\{\xi_i(\bbeta)\bh_1(\bX_i)^\top\EE[\xi_j(\bbeta)\bh_1(\bX_j)]\big\}^2\leq \|\EE\xi^2_i(\bbeta)\bh_1(\bX_i)\bh_1(\bX_i)^\top\|_2\cdot\|\EE\xi_j(\bbeta)\bh_1(\bX_j)\|_2^2\leq CK,
\end{align*}
for some constant $C>0$. Here, in the last step we use that fact that
$$
\|\EE\xi_j(\bbeta)\bh_1(\bX_j)\|_2^2\leq \EE\|\xi_j(\bbeta)\bh_1(\bX_j)\|_2^2\leq C\cdot \EE\|\bh_1(\bX_j)\|_2^2\leq CK,$$
and $\|\EE\xi^2_i(\bbeta)\bh_1(\bX_i)\bh_1(\bX_i)^\top\|_2$ is bounded because $\|\EE\bh_1(\bX_j)\bh_1(\bX_j)^\top\|_2$ is bounded by assumption. Thus, we can apply the Bernstein's inequality in Lemma \ref{lemustat} to the U-statistic with kernel function $u_{1ij}(\bbeta_m)$,
\begin{align}
\PP\Big(\Big|\frac{2}{n(n-1)}\sum_{1\leq i<j\leq n}u_{1ij}(\bbeta_m)\Big|>\epsilon/2\Big)\leq 2\exp\big(-Cn\epsilon^2/[K+K\epsilon]\big),\label{eqlemnon02}
\end{align}
for some constant $C>0$. Since $|\partial J(v)/\partial v|$ is upper bounded by a constant for any $v=\bbeta^\top\bB(\bx)$, it is easily seen that for any $\bbeta\in\Theta_m$, $|\xi_i(\bbeta)-\xi_i(\bbeta_m)|\leq C|(\bbeta-\bbeta_m)^\top\bB(\bX_i)|\leq CrK^{1/2}$, where the last step follows from the Cauchy-Schwarz inequalty. This further implies $|\xi_i(\bbeta)\xi_j(\bbeta)-\xi_i(\bbeta_m)\xi_j(\bbeta_m)|\leq  CrK^{1/2}$ for some constant $C>0$ by performing a standard perturbation analysis. Thus,
\begin{align*}
|u_{1ij}(\bbeta)-u_{1ij}(\bbeta_m)|&\leq CrK^{1/2}|\bh_1(\bX_i)^\top\bh_1(\bX_j)|\leq CrK^{3/2},
\end{align*}
and note that with $r=K^{-2}$, then $CrK^{1/2}\EE|\bh_1(\bX_i)^\top\bh_1(\bX_j)|\leq \epsilon/4$ for $n$ large enough. Thus
\begin{align}
&\PP\Big(\sup_{\bbeta\in\Theta_m}\frac{2}{n(n-1)}\sum_{1\leq i<j\leq n}\Big|u_{1ij}(\bbeta)-u_{1ij}(\bbeta_m)\Big|>\epsilon/2\Big)\nonumber\\
&\leq \PP\Big(\frac{2CrK^{1/2}}{n(n-1)}\sum_{1\leq i<j\leq n}|\bh_1(\bX_i)^\top\bh_1(\bX_j)|>\epsilon/2\Big)\nonumber\\
&\leq \PP\Big(\frac{2CrK^{1/2}}{n(n-1)}\sum_{1\leq i<j\leq n}\big[|\bh_1(\bX_i)^\top\bh_1(\bX_j)|-\EE|\bh_1(\bX_i)^\top\bh_1(\bX_j)|\big]>\epsilon/4\Big)\nonumber\\
&\leq 2\exp(-CnK\epsilon^2),\label{eqlemnon03}
\end{align}
where the last step follows from the Hoeffding inequality for U-statistic. Thus, combining (\ref{eqlemnon01}), (\ref{eqlemnon02}) and (\ref{eqlemnon03}), we have for some constants $C_1, C_2, C_3>0$, as $n$ goes to infinity,
\begin{align*}
\PP\Big(\sup_{\bbeta\in\Theta}&\Big|\frac{2}{n(n-1)}\sum_{1\leq i<j\leq n}u_{1ij}(\bbeta)\Big|>\epsilon\Big)\\
&\leq \exp(C_1K\log K-C_2n\epsilon^2/[K+K\epsilon])+\exp(C_1K\log K-C_3n\epsilon^2K)\rightarrow 0,
\end{align*}
where we take $\epsilon=C\sqrt{K^2\log K/n}$ for some constant $C$ sufficiently large. This implies
$$
\sup_{\bbeta\in\Theta}\Big|\frac{2}{n(n-1)}\sum_{1\leq i<j\leq n}u_{1ij}(\bbeta)\Big|=O_p\Big(\sqrt{\frac{K^2\log K}{n}}\Big).
$$
Following the same arguments, we can show that with the same choice of $\epsilon$,
$$
\sup_{\bbeta\in\Theta}\Big|\frac{2}{n(n-1)}\sum_{1\leq i<j\leq n}u_{2ij}(\bbeta)\Big|=O_p\Big(\sqrt{\frac{K^2\log K}{n}}\Big).
$$
Plugging these results into (\ref{eqlemnon00}), we complete the proof.
\end{proof}

\begin{lemma}[Bernstein's inequality for random matrices \citep{tropp2015introduction}]\label{lemtropp}
Let $\{\Zb_k\}$ be a sequence of independent random matrices with dimensions $d_1\times d_2$. Assume that $\EE\Zb_k={\bf 0}$ and $\|\Zb_k\|_2\leq R_n$ almost sure. Define
$$
\sigma^2_n=\max\Big\{\Big\|\sum_{k=1}^n\EE(\Zb_k\Zb_k^\top)\Big\|_2, \Big\|\sum_{k=1}^n\EE(\Zb_k^\top\Zb_k)\Big\|_2\Big\}.
$$
Then, for all $t\geq 0$,
$$
\PP\Big(\Big\|\sum_{k=1}^n\Zb_k\Big\|_2\geq t\Big)\leq (d_1+d_2)\exp\Big(-\frac{t^2/2}{\sigma_n^2+R_nt/3}\Big).
$$
\end{lemma}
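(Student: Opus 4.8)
This is the matrix Bernstein inequality of \cite{tropp2015introduction}, so the paper invokes it as a black box and its ``proof'' there is simply the citation. Were one to reproduce it, the plan is the standard matrix Laplace-transform argument, which I sketch below.

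\emph{Reduction to the Hermitian case.} For each $k$, replace $\Zb_k$ by its Hermitian dilation, the $(d_1+d_2)\times(d_1+d_2)$ Hermitian matrix carrying $\Zb_k$ in its upper-right $d_1\times d_2$ block and $\Zb_k^\top$ in its lower-left block, with zero diagonal blocks. One checks directly that dilation preserves the mean-zero property, that the operator norm of the dilation equals $\|\Zb_k\|_2\le R_n$, that its square is block-diagonal with blocks $\Zb_k\Zb_k^\top$ and $\Zb_k^\top\Zb_k$ (so the variance parameter is still $\sigma_n^2$), and that the largest eigenvalue of the dilated sum equals $\|\sum_k\Zb_k\|_2$. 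Hence we may assume the $\Zb_k$ are independent, mean-zero, Hermitian of dimension $d:=d_1+d_2$ with $\|\Zb_k\|_2\le R_n$ and $\sigma_n^2=\|\sum_k\EE\Zb_k^2\|_2$, and bound $\PP(\lambda_{\max}(\sum_k\Zb_k)\ge t)$.

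\emph{The Hermitian case.} Three steps: (i) the matrix Laplace-transform bound: for $\theta>0$, $\PP(\lambda_{\max}(\sum_k\Zb_k)\ge t)\le e^{-\theta t}\,\EE\,\lambda_{\max}(e^{\theta\sum_k\Zb_k})\le e^{-\theta t}\,\EE\,\mathrm{tr}\,e^{\theta\sum_k\Zb_k}$, using Markov's inequality and $\lambda_{\max}\le\mathrm{tr}$ on the positive semidefinite cone; (ii) subadditivity of the matrix cumulant generating function, $\EE\,\mathrm{tr}\,e^{\theta\sum_k\Zb_k}\le\mathrm{tr}\,\exp(\sum_k\log\EE\,e^{\theta\Zb_k})$, obtained by peeling off one summand at a time, conditioning on the rest, and invoking Lieb's concavity theorem (concavity of $\Ab\mapsto\mathrm{tr}\,\exp(\Cb+\log\Ab)$ over positive-definite $\Ab$ for fixed Hermitian $\Cb$) together with Jensen's inequality; and (iii) a per-summand estimate: since $\EE\,\Zb_k=\mathbf{0}$ and $\|\Zb_k\|_2\le R_n$, the scalar inequality $e^{x}\le 1+x+(e^{\theta R_n}-\theta R_n-1)R_n^{-2}x^2$ for $|x|\le R_n$ transfers to Hermitian matrices through the spectral theorem, giving $\EE\,e^{\theta\Zb_k}\preceq\exp(g(\theta)\,\EE\Zb_k^2)$ with $g(\theta)=(e^{\theta R_n}-\theta R_n-1)/R_n^2$. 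Combining (ii) and (iii) with monotonicity of $\Hb\mapsto\mathrm{tr}\,e^{\Hb}$ yields $\EE\,\mathrm{tr}\,e^{\theta\sum_k\Zb_k}\le d\,\exp(g(\theta)\sigma_n^2)$; finally, minimizing $\exp(-\theta t+g(\theta)\sigma_n^2)$ over $\theta>0$ — using $g(\theta)\le\tfrac{\theta^2/2}{1-R_n\theta/3}$ on $(0,3/R_n)$ and taking $\theta=t/(\sigma_n^2+R_n t/3)$ — produces the stated exponent $-\tfrac{t^2/2}{\sigma_n^2+R_n t/3}$.

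\emph{Main obstacle.} The only genuinely delicate ingredient is step (ii). In the scalar case one simply factors $\EE\,e^{\theta\sum_k Y_k}=\prod_k\EE\,e^{\theta Y_k}$ using independence, but for matrices $e^{\Ab+\Cb}\ne e^{\Ab}e^{\Cb}$ when $\Ab,\Cb$ do not commute, so this step fails outright; the substitute — bounding the trace exponential via Lieb's concavity theorem — is the heart of the Ahlswede--Winter/Tropp machinery, and everything else parallels the classical proof of Bernstein's inequality. Since the present paper needs only the statement, the pragmatic route is to cite \cite{tropp2015introduction}.
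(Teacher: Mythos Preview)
Your assessment is correct: the paper states this lemma as a citation to \cite{tropp2015introduction} with no proof, and your sketch of the underlying argument (Hermitian dilation, matrix Laplace transform via Lieb's concavity, per-summand mgf bound, optimization over $\theta$) is accurate and more than the paper provides. Nothing to add.
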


\begin{lemma}\label{lemnon2}
Let $\Hb=(\bh(\bX_1),...,\bh(\bX_n))^\top$ and $\Bb=(\bB(\bX_1),...,\bB(\bX_n))^\top$ be two $n\times K$ matrices. Under the conditions in Theorem \ref{thmnon}, then
\begin{equation}\label{eqlemnon21}
\|\Hb^\top\Hb/n-\EE[\bh(\bX_i)\bh(\bX_i)^\top]\|_2=O_p(\sqrt{K\log K/n})
\end{equation}
and
\begin{equation}\label{eqlemnon22}
\|\Bb^\top\Bb/n-\EE[\bB(\bX_i)\bB(\bX_i)^\top]\|_2=O_p(\sqrt{K\log K/n}).
\end{equation}
\end{lemma}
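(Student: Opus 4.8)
The plan is to recognize both identities as sample averages of i.i.d.\ centered random matrices and apply the matrix Bernstein inequality (Lemma~\ref{lemtropp}). Write
$\Hb^\top\Hb/n-\EE[\bh(\bX_i)\bh(\bX_i)^\top] = n^{-1}\sum_{i=1}^n\Zb_i$, where $\Zb_i = \bh(\bX_i)\bh(\bX_i)^\top-\EE[\bh(\bX_i)\bh(\bX_i)^\top]$ are i.i.d., symmetric, $K\times K$, and $\EE\Zb_i={\bf 0}$. First I would bound the uniform norm: by Condition~6 of Assumption~\ref{assnon}, $\|\bh(\bX_i)\bh(\bX_i)^\top\|_2=\|\bh(\bX_i)\|_2^2\leq CK$ and $\|\EE[\bh(\bX_i)\bh(\bX_i)^\top]\|_2\leq C$, so $\|\Zb_i\|_2\leq C'K=:R_n$ almost surely.

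Next I would compute the variance proxy. Since $\Zb_i$ is symmetric, $\sigma_n^2=\|\sum_{i=1}^n\EE(\Zb_i\Zb_i^\top)\|_2=n\|\EE(\Zb_1^2)\|_2$. Using $\EE(\Zb_1^2)=\EE[(\bh(\bX_1)\bh(\bX_1)^\top)^2]-(\EE[\bh(\bX_1)\bh(\bX_1)^\top])^2\preceq \EE[\|\bh(\bX_1)\|_2^2\,\bh(\bX_1)\bh(\bX_1)^\top]\preceq CK\,\EE[\bh(\bX_1)\bh(\bX_1)^\top]$, we get $\|\EE(\Zb_1^2)\|_2\leq CK\|\EE[\bh(\bX_1)\bh(\bX_1)^\top]\|_2\leq C'K$, hence $\sigma_n^2\leq C'nK$. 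This step is the crux of the argument: obtaining $\sigma_n^2$ of order $nK$ rather than the crude $nK^2$ is exactly what yields the sharp rate $\sqrt{K\log K/n}$, and it relies essentially on the bounded operator norm of $\EE[\bh\bh^\top]$; everything else is routine bookkeeping.

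Then I would invoke Lemma~\ref{lemtropp}: for $t\geq 0$, $\PP(\|\sum_{i=1}^n\Zb_i\|_2\geq t)\leq 2K\exp\{-(t^2/2)/(\sigma_n^2+R_nt/3)\}$. Taking $t=C\sqrt{nK\log K}$ with $C$ a large constant, one has $R_nt/\sigma_n^2\asymp \sqrt{K\log K/n}=o(1)$ under $\kappa=o(n^{1/3})$, so $\sigma_n^2+R_nt/3\leq 2\sigma_n^2$ for $n$ large, and the exponent is at least a constant multiple of $C^2\log K$; choosing $C$ large enough makes $2K\exp(-(C^2/(4C'))\log K)=2K^{1-C^2/(4C')}\to 0$. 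Dividing by $n$ gives $\|\Hb^\top\Hb/n-\EE[\bh(\bX_i)\bh(\bX_i)^\top]\|_2=O_p(\sqrt{K\log K/n})$, which is \eqref{eqlemnon21}. Finally, \eqref{eqlemnon22} follows verbatim with $\bh(\bX_i)$ replaced by $\bB(\bX_i)$, invoking the bounds $\sup_{\bx\in\mathcal{X}}\|\bB(\bx)\|_2\leq C\kappa^{1/2}$ and $\|\EE[\bB(\bX_i)\bB(\bX_i)^\top]\|_2\leq C$ from the same Condition~6. There is no serious obstacle beyond the variance-proxy estimate; the only point requiring care is confirming $R_nt=o(\sigma_n^2)$ so that the sub-Gaussian (rather than sub-exponential) regime of Bernstein's bound governs the tail.
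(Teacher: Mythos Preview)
Your proposal is correct and follows essentially the same approach as the paper: both apply the matrix Bernstein inequality (Lemma~\ref{lemtropp}) to the centered i.i.d.\ summands, bound the uniform norm by $O(K)$ via $\sup_{\bx}\|\bh(\bx)\|_2^2\leq CK$, and obtain the variance proxy of order $nK$ (rather than $nK^2$) using $\EE[\|\bh\|_2^2\,\bh\bh^\top]\preceq CK\,\EE[\bh\bh^\top]$ together with the bounded operator norm of $\EE[\bh\bh^\top]$. The only cosmetic difference is that the paper absorbs the factor $n^{-1}$ into $\Zb_i$ from the outset, whereas you divide by $n$ at the end; your explicit check that $R_nt=o(\sigma_n^2)$ under $\kappa=o(n^{1/3})$ makes transparent a step the paper leaves implicit.
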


\begin{proof}[Proof of Lemma \ref{lemnon2}]
We prove this result by applying Lemma \ref{lemtropp}. In particular, to prove (\ref{eqlemnon21}), we take $\Zb_i=n^{-1}[\bh(\bX_i)\bh(\bX_i)^\top-\EE(\bh(\bX_i)\bh(\bX_i)^\top)]$. It is easily seen that
$$
\|\Zb_i\|_2\leq n^{-1}[\textrm{tr}(\bh(\bX_i)\bh(\bX_i)^\top)+\|\EE(\bh(\bX_i)\bh(\bX_i)^\top)\|_2]\leq (CK+C)/n,
$$
where $C$ is some positive constant. Moreover,
\begin{align*}
\Big\|\sum_{i=1}^n\EE(\Zb_i\Zb_i^\top)\Big\|_2&\leq n^{-1}\Big(\|\EE\bh(\bX_i)\bh(\bX_i)^\top\bh(\bX_i)\bh(\bX_i)^\top\|_2+\|\EE(\bh(\bX_i)\bh(\bX_i)^\top)\|_2^2\Big)\\
&\leq n^{-1}(CK\cdot\|\EE(\bh(\bX_i)\bh(\bX_i)^\top)\|_2+C^{2})\leq n^{-1}(C^2K+C^{2}).
\end{align*}
Note that $\sqrt{K\log K/n}=o(1)$. Now, if we take $t=C\sqrt{K\log K/n}$ in Lemma \ref{lemtropp} for some constant $C$ sufficiently large, then we have $\PP(\|\sum_{k=1}^n\Zb_k\|_2\geq t)\leq 2K\exp(-C'\log K)$ for some $C'>1$. Then, the right hand side converges to $0$, as $K\rightarrow\infty$. This completes the proof of  (\ref{eqlemnon21}). The proof of (\ref{eqlemnon22}) follows from the same arguments and is omitted for simplicity.
\end{proof}

\begin{lemma}\label{lemnon1}
Under the conditions in Theorem \ref{thmnon}, the following results hold.
\begin{itemize}
\item[1] Let $\bar\bU=\frac{1}{n}\sum_{i=1}^n\bU_i$, $\bU_i=(\bU^\top_{i1},\bU^\top_{i2})^\top$, with
$$
\bU_{i1}=	\Big(\frac{T_i}{\pi^*_i}-\frac{1-T_i}{1-\pi^*_i}\Big)\bh_1(\bX_i),~~\bU_{i2}= \Big(\frac{T_i}{\pi^*_i}-1\Big)\bh_2(\bX_i).
$$
Then $\|\bar\bU\|_2=O_p(K^{1/2}/n^{1/2})$.
\item[2] Let $\mathbb{B}(r)=\{\bbeta\in\RR^{K}: \|\bbeta-\bbeta^*\|_2\leq r\}$, and $r=O(K^{1/2}/n^{1/2}+K^{-r_b})$. Then
$$
\sup_{\bbeta\in\mathbb{B}(r)}\Big\|\frac{\partial\bar\bg_{\bbeta}(\bT,\bX)}{\partial\bbeta}-\Gb^*\Big\|_2=O_p\Big(K^{1/2}r+\sqrt{\frac{K\log K}{n}}\Big).
$$
\item[3] Let $J_i=J(\bbeta^\top\bB(\bX_i))$, $\dot{J}_i=\partial J(v)/\partial v|_{v=\bbeta^\top\bB(\bX_i)}$, and
$$
\Tb^*=\EE\Big\{\Big[\frac{\EE(Y_i(1)\mid \bX_i)}{\pi^{*}_i}-\frac{\EE(Y_i(0)\mid \bX_i)}{1-\pi^*_i}\Big]\dot{J}^*_i\bB(\bX_i)\Big\}.
$$
Then
$$
\sup_{\bbeta\in\mathbb{B}(r)}\Big\|\frac{1}{n}\sum_{i=1}^n \Big[\frac{T_iY_i(1)}{J^{2}_i}+\frac{(1-T_i)Y_i(0)}{(1-J_i)^2}\Big]\dot{J}_i\bB(\bX_i)+\Gb^{*\top}\balpha^*\Big\|_2=O_p\Big(K^{1/2}r+K^{-r_h}\Big).
$$
\end{itemize}
\end{lemma}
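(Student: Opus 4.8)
\emph{Part 1.} Because $\pi^*(\bX_i)=J(m^*(\bX_i))$ is the \emph{true} propensity score, strong ignorability gives $\EE[T_i\mid\bX_i]=\pi_i^*$ and hence $\EE[\bU_i]=\mathbf 0$; by independence $\EE\|\bar\bU\|_2^2=n^{-1}\EE\|\bU_i\|_2^2$. Writing $\bigl(T_i/\pi_i^*-(1-T_i)/(1-\pi_i^*)\bigr)^2=T_i/\pi_i^{*2}+(1-T_i)/(1-\pi_i^*)^2$ and conditioning on $\bX_i$, the conditional second moment of the scalar weight equals $1/(\pi_i^*(1-\pi_i^*))$, which is bounded by Assumption~\ref{assnon}(3); the same bound applies to $\bU_{i2}$. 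Together with $\sup_{\bx}\|\bh(\bx)\|_2\le C\kappa^{1/2}$ (Assumption~\ref{assnon}(6)) this gives $\EE\|\bU_i\|_2^2\le C\kappa$, so $\EE\|\bar\bU\|_2^2\le C\kappa/n$ and Markov's inequality yields $\|\bar\bU\|_2=O_p(\kappa^{1/2}/n^{1/2})$.

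\emph{Parts 2 and 3.} Write $\mathcal R_n(\bbeta)$ for the random quantity on the left of either display and $\mathcal R$ for its target ($\Gb^*$ in Part~2, $-\Gb^{*\top}\balpha^*$ in Part~3), and bound
$$\sup_{\bbeta\in\mathbb B(r)}\|\mathcal R_n(\bbeta)-\mathcal R\|_2\ \le\ \underbrace{\sup_{\bbeta\in\mathbb B(r)}\|\mathcal R_n(\bbeta)-\mathcal R_n(\bbeta^*)\|_2}_{\textrm{oscillation}}\ +\ \underbrace{\|\mathcal R_n(\bbeta^*)-\EE\mathcal R_n(\bbeta^*)\|_2}_{\textrm{fluctuation}}\ +\ \underbrace{\|\EE\mathcal R_n(\bbeta^*)-\mathcal R\|_2}_{\textrm{bias}}.$$
Since $\bbeta$ enters only through $v=\bbeta^{\top}\bB(\bx)$, I would treat the oscillation term by a \emph{scalar} mean value theorem in $v$: the relevant derivatives of $J$ are controlled by Assumption~\ref{assnon}(3) on a neighbourhood of $\{\bbeta^{*\top}\bB(\bx):\bx\in\mathcal X\}$, and since $\sup_{\bx}\|\bB(\bx)\|_2\le C\kappa^{1/2}$ and $r\kappa^{1/2}=o(1)$ (using $\kappa=o(n^{1/3})$ and $r_b>1/2$), every intermediate point stays in that neighbourhood uniformly over $\mathbb B(r)$. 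The MVT then expresses $\mathcal R_n(\bbeta)-\mathcal R_n(\bbeta^*)$ as a random matrix contracted against $\bbeta-\bbeta^*$, and I would bound its operator norm by writing it as $n^{-1}\Hb^{\top}C\Bb$ (or $n^{-1}\Bb^{\top}C\Bb$) with $C$ diagonal and using $\|n^{-1}\Hb^{\top}C\Bb\|_2\le n^{-1}\|\Hb\|_{\mathrm{op}}\|C\|_{\mathrm{op}}\|\Bb\|_{\mathrm{op}}$ together with $\|\Hb\|_{\mathrm{op}},\|\Bb\|_{\mathrm{op}}=O_p(n^{1/2})$, which follows from Lemma~\ref{lemnon2}. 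For Part~3 there is one preliminary step: decompose $Y_i=\EE[Y_i\mid\bX_i]+\varepsilon_i$; by ignorability the $\varepsilon_i$-part of $\mathcal R_n$ is conditionally mean zero and contributes a negligible term (handled, given only two moments on $Y$, by truncation plus a covering argument over $\mathbb B(r)$), while in the remaining ``signal'' part the scalar coefficients multiplying $\bB(\bX_i)$ are \emph{bounded} (because $K(\cdot),L(\cdot)$ and their sieve approximations $\balpha_1^{*\top}\bh_1,\balpha_2^{*\top}\bh_2$ are uniformly bounded). Either way the oscillation term is $O_p(\kappa^{1/2}r)$.

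For the fluctuation term I would apply the matrix/vector Bernstein inequality of Lemma~\ref{lemtropp}: after recentring, the summands are rank-one matrices of spectral norm $O(\kappa)$ in Part~2 and vectors of norm $O(\kappa^{1/2})$ in Part~3 (signal part), and the Bernstein variance parameter is controlled through $\|\EE[\bh\bh^{\top}]\|_2\le C$ and $\|\EE[\bB\bB^{\top}]\|_2\le C$ from Assumption~\ref{assnon}(6); choosing the threshold of order $\sqrt{\kappa\log\kappa/n}$ drives the deviation probability to zero, so this term is $O_p(\sqrt{\kappa\log\kappa/n})$ (for Part~2, at most a further factor $\sqrt{\kappa}$ larger, still $o(1)$ under $\kappa=o(n^{1/3})$). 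For the bias term I would substitute $\EE[T_i\mid\bX_i]=\pi_i^*$ into $\EE\mathcal R_n(\bbeta^*)$, invoke ignorability, and compare with $\mathcal R$. The two residual discrepancies are: replacing $J(\bbeta^{*\top}\bB(\bx))$ by $\pi^*(\bx)=J(m^*(\bx))$, which by \eqref{eqnonpropen} and Lipschitzness of the relevant smooth functions of $J$ is $O(\kappa^{-r_b})$ uniformly; and, in Part~3 only, replacing $K,L$ by $\balpha_1^{*\top}\bh_1,\balpha_2^{*\top}\bh_2$, which by \eqref{eqnonoutcome} is $O(\kappa^{-r_h})$ uniformly. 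The key point is that an $O(\kappa^{-s})$-uniform error function $e(\cdot)$ integrated against $\bB(\bX_i)$ satisfies, by Cauchy--Schwarz against $\EE[\bB\bB^{\top}]$, $\|\EE[e(\bX_i)\bB(\bX_i)]\|_2\le(\EE e(\bX_i)^2)^{1/2}\|\EE[\bB\bB^{\top}]\|_2^{1/2}=O(\kappa^{-s})$ with \emph{no} extra power of $\kappa$. Hence the bias is $O(\kappa^{-r_b}+\kappa^{-r_h})$, and since $\kappa^{-r_b}\le\kappa^{1/2-r_b}\le\kappa^{1/2}r$ it collapses to the stated $O(\kappa^{-r_h})$.

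\emph{Main obstacle.} The delicate point in Parts~2 and 3 is to obtain the sharp power $\kappa^{1/2}$ (rather than $\kappa$) in the oscillation bound: a crude entrywise estimate $\|\sum_i c_i\bh_i\bB_i^{\top}\|_2\le\sum_i|c_i|\,\|\bh_i\|_2\|\bB_i\|_2$ with $\|\bh_i\|_2,\|\bB_i\|_2=O(\kappa^{1/2})$ loses a full factor of $\kappa$ and makes the bound too weak for Theorem~\ref{thmnon} when $1/2<r_b<1$. The remedy is to keep the design matrices in matrix-product form and exploit the spectral-norm control of $\Hb$ and $\Bb$ from Lemma~\ref{lemnon2}, and---in Part~3---to peel off the conditionally-mean-zero noise before differentiating in $\bbeta$, so that the coefficients entering the mean value theorem are bounded. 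A secondary technical nuisance is the uniform-in-$\bbeta$ control of the noise part of Part~3 under only $\EE|Y(j)|^2<\infty$, which requires truncation, together with the verification (via $\kappa=o(n^{1/3})$ and $r_b>1/2$) that $\mathbb B(r)$ is small enough for the mean value theorem's intermediate points to lie in the region of Assumption~\ref{assnon}(3).
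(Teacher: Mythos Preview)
Your proposal is correct and follows essentially the same route as the paper: an oscillation/fluctuation/bias split, with the oscillation controlled by Lipschitzness in $v=\bbeta^\top\bB(\bx)$ together with the spectral-norm bounds on $\Hb,\Bb$ from Lemma~\ref{lemnon2}, the fluctuation by matrix Bernstein (Lemma~\ref{lemtropp}) in Part~2, and the bias by the sieve approximation rates. The paper centres at $m^*$ rather than $\bbeta^*$, so its two-term split $I_\bbeta+II$ absorbs your separate ``bias'' into the oscillation, but this is cosmetic.

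Where you over-complicate is Part~3. The signal/noise split and truncation are unnecessary. For the oscillation $T_{1\bbeta}=n^{-1}\sum_i[\eta_i(\bbeta^\top\bB_i)-\eta_i(m^*_i)]\bB_i$, Lipschitzness of $\eta_i$ in $v$ gives $|\eta_i(\bbeta^\top\bB_i)-\eta_i(m^*_i)|\le C|Y_i|\cdot K^{1/2}r$, so writing $T_{1\bbeta}=n^{-1}\Bb^\top\bc$ one has $n^{-1}\|\bc\|_2^2\le C(K^{1/2}r)^2\cdot n^{-1}\sum_iY_i^2=O_p(Kr^2)$ by the law of large numbers, and then $\|T_{1\bbeta}\|_2\le n^{-1}\|\Bb\|_{\rm op}\|\bc\|_2=O_p(K^{1/2}r)$ uniformly in $\bbeta\in\mathbb B(r)$. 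For the fluctuation the paper uses \emph{Markov}, not Bernstein: $\EE\|T_2\|_2^2\le n^{-1}\EE[\eta_i^2\|\bB_i\|_2^2]=O(K/n)$ since $\EE Y_i^2<\infty$, whence $\|T_2\|_2=O_p(K^{1/2}/n^{1/2})$, which is already dominated by $K^{1/2}r$. This sidesteps the unboundedness of $Y$ entirely and explains why the stated bound in Part~3 carries no $\sqrt{\log K}$ factor. Likewise your hedge in Part~2 (``at most a further $\sqrt\kappa$ larger'') is not needed: once you bound the Bernstein variance parameter via $\|\EE\bB\bh^\top\bDelta\bDelta\bh\bB^\top\|_2\le CK\|\EE\bB\bB^\top\|_2=O(K)$, the sharp $O_p(\sqrt{K\log K/n})$ follows directly.
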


\begin{proof}[Proof of Lemma \ref{lemnon1}]
We start from the proof of the first result. Note that $\EE (\bU_i)=0$. Then $\EE\|\bar\bU\|_2^2=\EE(\bU_i^\top\bU_i)/n$ and then there exists some constant $C>0$,
\begin{align*}
\EE\|\bar\bU\|_2^2&=\EE\Big[n^{-1}\sum_{k=1}^K\Big(\frac{T_i}{\pi^*_i}-\frac{1-T_i}{1-\pi^*_i}\Big)^2 h_k(\bX_i)^2I(k\leq m_1)+\Big(\frac{T_i}{\pi^*_i}-1\Big)^2 h_k(\bX_i)^2I(k>m_1)\Big]\\
&\leq C \sum_{k=1}^K\EE\{h_{k}(\bX_i)^2\}/n=O(K/n).
\end{align*}
By the Markov inequality, this implies $\|\bar\bU\|_2=O_p(K^{1/2}/n^{1/2})$, which completes the proof of the first result. In the following, we prove the second result. Denote
\begin{align*}
\xi_i(m(\bX_i))&=-\Big(\frac{T_i}{J^2(m(\bX_i))}+\frac{1-T_i}{(1-J(m(\bX_i)))^2}\Big)\dot{J}(m(\bX_i))\\
\phi_i(m(\bX_i))&=-\frac{T_i}{J^2(m(\bX_i))}\dot{J}(m(\bX_i)),
\end{align*}
and $\bDelta_i(m(\bX_i))=\textrm{diag}(\xi_i(m(\bX_i)){\bf 1}_{m_1}, \phi_i(m(\bX_i)){\bf 1}_{m_2})$ is a $K\times K$ diagonal matrix, where ${\bf 1}_{m_1}$ is a vector of $1$ with length $m_1$. Then, note that
\begin{align*}
\frac{\partial\bar\bg_{\bbeta}(\bT,\bX)}{\partial\bbeta}-\Gb^*&=\frac{1}{n}\sum_{i=1}^n\bB(\bX_i)\bh(\bX_i)^\top\bDelta_i(\bbeta^\top\bB(\bX_i))-\EE[\bB(\bX_i)\bh(\bX_i)^\top\bDelta_i(m^*(\bX_i))],
\end{align*}
which can be decomposed into the two terms $I_{\bbeta}+II$, where
\begin{align*}
I_{\bbeta}&=\frac{1}{n}\sum_{i=1}^n\bB(\bX_i)\bh(\bX_i)^\top[\bDelta_i(\bbeta^\top\bB(\bX_i))-\bDelta_i(m^*(\bX_i))],~~II=\sum_{i=1}^n\Zb_i,\\
\Zb_i&=n^{-1}\Big\{\bB(\bX_i)\bh(\bX_i)^\top\bDelta_i(m^*(\bX_i))-\EE[\bB(\bX_i)\bh(\bX_i)^\top\bDelta_i(m^*(\bX_i))]\Big\}.
\end{align*}
We first consider the term II. It can be easily verified that $\|\bDelta_i(m^*(\bX_i))\|_2\leq C$ for some constant $C>0$. In addition,
$\|\bB(\bX_i)\bh(\bX_i)^\top\|_2\leq \|\bB(\bX_i)\|_2\cdot\|\bh(\bX_i)\|_2\leq CK$. Thus, $\|\Zb_i\|_2\leq CK/n$. Following the similar argument in the proof of Lemma \ref{lemnon2},
\begin{align*}
\Big\|\sum_{i=1}^n\EE(\Zb_i\Zb_i^\top)\Big\|_2&\leq n^{-1}\|\EE\bB(\bX_i)\bh(\bX_i)^\top\bDelta_i(m^*(\bX_i))\bDelta_i(m^*(\bX_i))\bh(\bX_i)\bB(\bX_i)^\top\|_2\\
&~~~~~~+n^{-1}\|\EE\bB(\bX_i)\bh(\bX_i)^\top\bDelta_i(m^*(\bX_i))\|_2^2.
\end{align*}
We now consider the last two terms separately. Note that
\begin{align}
&\|\EE\bB(\bX_i)\bh(\bX_i)^\top\bDelta_i(m^*(\bX_i))\|^2_2=\sup_{\|\ub\|_2=1,\|\vb\|_2=1} |\EE\ub^\top\bB(\bX_i)\bh(\bX_i)^\top\bDelta_i(m^*(\bX_i))\vb|^2\nonumber\\
&\leq \sup_{\|\ub\|_2=1} |\EE\ub^\top\bB(\bX_i)\bB(\bX_i)^\top\ub|\cdot \sup_{\|\vb\|_2=1} |\EE\vb^\top\bDelta_i(m^*(\bX_i))\bh(\bX_i)\bh(\bX_i)^\top\bDelta_i(m^*(\bX_i))\vb|\nonumber\\
&\leq \|\EE(\bB(\bX_i)\bB(\bX_i)^\top)\|_2\cdot C\|\EE(\bh(\bX_i)\bh(\bX_i)^\top)\|_2\leq C',\label{eqlemnon11}
\end{align}
where $C, C'$ are some positive constants. Following the similar arguments to (\ref{eqlemnon11}),
\begin{align*}
&\|\EE\bB(\bX_i)\bh(\bX_i)^\top\bDelta_i(m^*(\bX_i))\bDelta_i(m^*(\bX_i))\bh(\bX_i)\bB(\bX_i)^\top\|_2\\
&\leq CK\cdot \sup_{\|\ub\|_2=1} |\EE\ub^\top\bB(\bX_i)\bB(\bX_i)^\top\ub|\leq CK\cdot\|\EE\bB(\bX_i)\bB(\bX_i)^\top\|_2\leq C'K,
\end{align*}
for some constants $C,C'>0$. This implies $\|\sum_{i=1}^n\EE(\Zb_i\Zb_i^\top)\|_2\leq CK/n$.  Thus,  Lemma \ref{lemtropp} implies $\|II\|_2=O_p(\sqrt{K\log K/n})$. Next, we consider the term $I_{\bbeta}$. Following the similar arguments to (\ref{eqlemnon11}), we can show that
\begin{align*}
\sup_{\bbeta\in\mathbb{B}(r)}\|I_{\bbeta}\|_2&=\sup_{\bbeta\in\mathbb{B}(r)}\sup_{\|\ub\|_2=1,\|\vb\|_2=1} \Big|\frac{1}{n}\sum_{i=1}^n\ub^\top\bB(\bX_i)\bh(\bX_i)^\top[\bDelta_i(\bbeta^\top\bB(\bX_i))-\bDelta_i(m^*(\bX_i))]\vb\Big|\\
&\leq \Big\|\frac{1}{n}\sum_{i=1}^n\bB(\bX_i)\bB(\bX_i)^\top\Big\|^{1/2}_2\cdot \Big\|\frac{1}{n}\sum_{i=1}^n\bh(\bX_i)\bh(\bX_i)^\top\Big\|^{1/2}_2\\
&\quad\quad \cdot \sup_{\bbeta\in\mathbb{B}(r)}\max_{1\leq i\leq n}\|\bDelta_i(\bbeta^\top\bB(\bX_i))-\bDelta_i(m^*(\bX_i))\|_2\\
&\leq C\sup_{\bbeta\in\mathbb{B}(r)}\sup_{\bx\in\mathcal{X}}|(\bbeta^*-\bbeta)^\top\bB(\bx)|+ C\sup_{\bx\in\mathcal{X}}|m^*(\bx)-\bbeta^{*\top}\bB(\bx)|\\
&\leq C'(K^{1/2}r+K^{-r_b})\leq C''K^{1/2}r,
\end{align*}
for some $C, C', C''>0$, where the second inequality follows from Lemma \ref{lemnon2} and the Lipschitz property of  $\xi_i(\cdot)$ and $\phi_i(\cdot)$, and the third inequality is due to the Cauchy-Schwarz inequality and approximation assumption of the sieve estimator. This completes the proof of the second result. For the third result, let
\begin{align*}
\eta_i(m(\bX_i))&=\Big(\frac{T_iY_i(1)}{J^2(m(\bX_i))}+\frac{(1-T_i)Y_i(0)}{(1-J(m(\bX_i)))^2}\Big)\dot{J}(m(\bX_i)).
\end{align*}
Thus, the following decomposition holds,
\begin{align*}
\frac{1}{n}\sum_{i=1}^n\eta_i(\bbeta^\top\bB(\bX_i))\bB(\bX_i)+\Gb^{*\top}\balpha^*=T_{1\bbeta}+T_2+T_3,
\end{align*}
where
\begin{align*}
T_{1\bbeta}&=\frac{1}{n}\sum_{i=1}^n[\eta_i(\bbeta^\top\bB(\bX_i))-\eta_i(m^*(\bB(\bX_i)))]\bB(\bX_i)\\
T_2&=\frac{1}{n}\sum_{i=1}^n\Big[\eta_i(m^*(\bB(\bX_i)))\bB(\bX_i)-\EE\eta_i(m^*(\bB(\bX_i)))\bB(\bX_i)\Big]\\
T_3&=\EE\eta_i(m^*(\bB(\bX_i)))\bB(\bX_i)+\Gb^{*\top}\balpha^*.
\end{align*}
Similar to the proof for $\sup_{\bbeta\in\mathbb{B}(r)}\|I_{\bbeta}\|_2$ previously, we can easily show that
$\sup_{\bbeta\in\mathbb{B}(r)}\|T_{1\bbeta}\|_2=O_p(K^{1/2}r)$. Again, the key step is to use the results from Lemma \ref{lemnon2}.
For the second term $T_2$, we can use the similar arguments in the proof of the first result to show that  $\EE\|T_2\|^2_2\leq CK\cdot \EE[\eta_i(m^*(\bB(\bX_i))^2]/n=O(K/n)$. The Markov inequality implies $\|T_2\|_2=O_p(K^{1/2}/n^{1/2})$. For the third term $T_3$, after some algebra, we can show that
$$
\|T_3\|_2\leq C\Big(\sup_{\bx\in\mathcal{X}}|K(\bx)-\balpha_1^{*\top}\bh_1(\bx)|+\sup_{\bx\in\mathcal{X}}|L(\bx)-\balpha_2^{*\top}\bh_2(\bx)|\Big)=O_p(K^{-r_h}).
$$
Combining the $L_2$ error bound for $T_{1\bbeta}$, $T_2$ and $T_3$, we obtain the last result. This completes the whole proof.
\end{proof}

\begin{lemma}\label{lemnonconsistency}
Under the conditions in Theorem \ref{thmnon}, it holds that
$$
\|\tilde\bbeta-\bbeta^*\|_2=o_p(1).
$$
\end{lemma}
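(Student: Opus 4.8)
The estimator $\tilde\bbeta$ minimizes $Q_n(\bbeta)=\|\bar\bg_{\bbeta}(\bT,\bX)\|_2^2$ over the compact set $\Theta\subseteq\RR^{\kappa}$, so the plan is a sieve extremum-estimator consistency argument: first show that the population criterion $Q(\bbeta)=\|\EE\bg_{\bbeta}(\bT_i,\bX_i)\|_2^2$ is $o_p(1)$ at $\tilde\bbeta$, and then convert ``$Q(\tilde\bbeta)$ small'' into ``$\tilde\bbeta$ close to $\bbeta^*$''.

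\emph{Step 1 (value of $Q$ at $\bbeta^*$).} I would first check $Q(\bbeta^*)=O(\kappa^{-2r_b})$. Writing $\pi^*(\bX_i)=J(\psi^*(\bX_i))=\PP(T_i=1\mid\bX_i)$ and $J^*(\bX_i)=J(\bbeta^{*\top}\bB(\bX_i))$, the approximation bound \eqref{eqnonpropen} and Assumption~\ref{assnon}(3) (whence $J$ is Lipschitz with $c_0\le J^*\le 1-c_0$) give $|\pi^*(\bX_i)-J^*(\bX_i)|=O(\kappa^{-r_b})$ uniformly in $\bX_i$, so each coordinate of $\EE\bg_{\bbeta^*}$ is the expectation of a coordinate of $\bh(\bX_i)$ against a weight bounded in modulus by $O(\kappa^{-r_b})$. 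Since $\|\EE[\bh_1(\bX_i)\bh_1(\bX_i)^{\top}]\|_2\le C$ and $\|\EE[\bh_2(\bX_i)\bh_2(\bX_i)^{\top}]\|_2\le C$ (Assumption~\ref{assnon}(6)), Cauchy--Schwarz yields $\|\EE\bg_{\bbeta^*}\|_2=O(\kappa^{-r_b})$, i.e.\ $Q(\bbeta^*)=O(\kappa^{-2r_b})$, which is $o(1)$ because $\kappa\to\infty$.

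\emph{Step 2 ($Q(\tilde\bbeta)=o_p(1)$).} By definition of $\tilde\bbeta$ and the uniform law of Lemma~\ref{lemnon0},
\[
Q(\tilde\bbeta)\ \le\ Q_n(\tilde\bbeta)+\sup_{\bbeta\in\Theta}\bigl|Q_n(\bbeta)-Q(\bbeta)\bigr|\ \le\ Q_n(\bbeta^*)+\sup_{\bbeta\in\Theta}\bigl|Q_n(\bbeta)-Q(\bbeta)\bigr|\ \le\ Q(\bbeta^*)+2\sup_{\bbeta\in\Theta}\bigl|Q_n(\bbeta)-Q(\bbeta)\bigr|.
\]
By Lemma~\ref{lemnon0} the supremum is $O_p(\sqrt{\kappa^2\log\kappa/n})=o_p(1)$ under $\kappa=o(n^{1/3})$, and $Q(\bbeta^*)=o(1)$ by Step~1, so $Q(\tilde\bbeta)=o_p(1)$, equivalently $\|\EE\bg_{\tilde\bbeta}\|_2=o_p(1)$; since also $\|\EE\bg_{\bbeta^*}\|_2=o(1)$, we get $\|\EE\bg_{\tilde\bbeta}-\EE\bg_{\bbeta^*}\|_2=o_p(1)$.

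\emph{Step 3 (localization, the main obstacle).} It remains to pass from $\|\EE\bg_{\tilde\bbeta}-\EE\bg_{\bbeta^*}\|_2=o_p(1)$ to $\|\tilde\bbeta-\bbeta^*\|_2=o_p(1)$. A mean-value expansion of $\bbeta\mapsto\EE\bg_{\bbeta}$ along the segment joining $\bbeta^*$ to $\tilde\bbeta$ writes $\EE\bg_{\tilde\bbeta}-\EE\bg_{\bbeta^*}=\bar\Gb\,(\tilde\bbeta-\bbeta^*)$ for the corresponding averaged Jacobian $\bar\Gb$, so that $\|\tilde\bbeta-\bbeta^*\|_2\le\lambda_{\min}(\bar\Gb^{\top}\bar\Gb)^{-1/2}\,\|\EE\bg_{\tilde\bbeta}-\EE\bg_{\bbeta^*}\|_2$. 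At $\bbeta^*$ the Jacobian equals $\Gb^*$ with $\lambda_{\min}(\Gb^{*\top}\Gb^*)\ge C_1$ by Assumption~\ref{assnon}(5), and the Lipschitz continuity (near $v^*$) of the diagonal matrices $\bDelta_i(\cdot)$ together with the basis bound $\sup_{\bx}\|\bB(\bx)\|_2\le C\kappa^{1/2}$ show that the Jacobian at $\bbeta$ stays within $O(\kappa^{1/2}\|\bbeta-\bbeta^*\|_2)$ of $\Gb^*$, so $\lambda_{\min}(\bar\Gb^{\top}\bar\Gb)$ remains bounded below once $\tilde\bbeta$ lies in a sufficiently small ball around $\bbeta^*$. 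The crux --- and the place I expect the bulk of the work --- is to justify that $\tilde\bbeta$ does enter such a ball, since the quadratic-growth bound above is only local and, in the present regime, $\Theta\subseteq\RR^{\kappa}$, $\bbeta^*$ and $Q$ all depend on $n$, so no fixed limiting criterion is available. I would obtain this preliminary crude localization by combining compactness of $\Theta$ with Assumption~\ref{assnon}(1): since $\bbeta^o$ is the unique minimizer of $Q$ and $Q(\bbeta^o)\le Q(\bbeta^*)=o(1)$, an extremum-estimator argument (adapted to the triangular array) gives $\|\tilde\bbeta-\bbeta^o\|_2=o_p(1)$, while the local quadratic growth of $Q$ at $\bbeta^*$ forces $\|\bbeta^o-\bbeta^*\|_2=o(1)$; the triangle inequality then finishes. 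If this direct route proves too delicate to make uniform in $n$, the fallback is to invoke a sieve-consistency theorem in the style of \cite{newey1997convergence} or \cite{chen2007large}, whose hypotheses are precisely the uniform convergence of Lemma~\ref{lemnon0}, the near-optimality established in Step~1, and the identification conditions in Assumptions~\ref{assnon}(1) and~(5).
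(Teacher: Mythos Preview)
Your proposal is correct and follows essentially the same route as the paper: use Lemma~\ref{lemnon0} for uniform convergence of $Q_n$ to $Q$, show $Q(\bbeta^*)=o(1)$ from the sieve approximation rate, deduce $Q(\tilde\bbeta)=o_p(1)$, and then localize via the compactness/uniqueness assumption on $\bbeta^o$ together with $\|\bbeta^o-\bbeta^*\|_2=o(1)$. The only differences are that your Step~1 bound $Q(\bbeta^*)=O(\kappa^{-2r_b})$ is sharper than the paper's $O(\kappa^{1-2r_b})$, and that the paper handles $\|\bbeta^o-\bbeta^*\|_2=o(1)$ by re-running the same uniqueness/compactness argument (with $\bbeta^*$ playing the role of a near-minimizer) rather than the Jacobian/quadratic-growth route you sketch first; your fallback is exactly what the paper does.
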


\begin{proof}[Proof of Lemma \ref{lemnonconsistency}]
Recall that $\bbeta^o$ is the minimizer of $Q(\bbeta)$. We now decompose $Q(\tilde\bbeta)-Q(\bbeta^o)$ as
\begin{equation}\label{eqlemnonrate2}
Q(\tilde\bbeta)-Q(\bbeta^o)=\underbrace{[Q(\tilde\bbeta)-Q_n(\tilde\bbeta)]}_{I}+\underbrace{[Q_n(\tilde\bbeta)-Q_n(\bbeta^o)]}_{II}+\underbrace{[Q_n(\bbeta^o)-Q(\bbeta^o)]}_{III}.
\end{equation}
In the following, we study the terms I, II and III one by one. For the term I, Lemma \ref{lemnon0} implies
$|Q(\tilde\bbeta)-Q_n(\tilde\bbeta)|\leq \sup_{\bbeta\in\Theta}\Big|Q_n(\bbeta)-Q(\bbeta)\Big|=o_p(1)$.
This shows that $|I|=o_p(1)$ and the same argument yields $|III|=o_p(1)$. For the term II, by the definition of $\tilde\bbeta$, it is easy to see that $II\leq 0$. Thus, combining with (\ref{eqlemnonrate2}), we have for any constant $\eta>0$ to be chosen later,  $Q(\tilde\bbeta)-Q(\bbeta^o)<\eta$ with probability tending to one. For any $\epsilon>0$, define $E_\epsilon=\Theta\cap\{\|\bbeta-\bbeta^o\|_2\geq \epsilon\}$. By the uniqueness of $\bbeta^o$, for any $\bbeta\in E_\epsilon$, we have $Q(\bbeta)>Q(\bbeta^o)$. Since $E_\epsilon$ is a compact set, we have $\inf_{\bbeta\in E_\epsilon} Q(\bbeta)>Q(\bbeta^o)$. This implies that for any $\epsilon>0$, there exists $\eta'>0$ such that $Q(\bbeta)>Q(\bbeta^o)+\eta'$ for any $\bbeta\in E_\epsilon$. If $\tilde\bbeta\in E_\epsilon$, then $Q(\bbeta^o)+\eta>Q(\tilde\bbeta)>Q(\bbeta^o)+\eta'$ with probability tending to one. Apparently, this does not holds if we take $\eta<\eta'$. Thus, we have proved that  $\tilde\bbeta\notin E_\epsilon$, that is $\|\tilde\bbeta-\bbeta^o\|_2\leq \epsilon$ for any $\epsilon>0$. Thus, we have $\|\tilde\bbeta-\bbeta^o\|_2=o_p(1)$. %

Next, we shall show that $\|\bbeta^o-\bbeta^*\|_2=o_p(1)$. It is easily seen that these together lead to the desired consistency result
$$\|\tilde\bbeta-\bbeta^*\|_2\leq \|\bbeta^o-\bbeta^*\|_2+\|\tilde\bbeta-\bbeta^o\|_2=o_p(1).
$$
To show $\|\bbeta^o-\bbeta^*\|_2=o_p(1)$, we use the similar strategy. That is we want to show that for any constant $\eta>0$,  $Q(\bbeta^*)-Q(\bbeta^o)<\eta$. In the following, we prove that $Q(\bbeta^*)=O(K^{1-2r_b})$. Note that
$$
Q(\bbeta^*)\leq C^2K^{-2r_b}\sum_{j=1}^{K}\EE|\bh_j(\bX)|^2=O(K^{1-2r_b}),
$$
where the first inequality follows from the Cauchy-Schwarz inequality and the last step uses the assumption that $\sup_{\bx\in\mathcal{X}}\|\bh(\bx)\|_2=O(K^{1/2})$.
In addition, it holds that $Q(\bbeta^o)\leq Q(\bbeta^*)=O(K^{1-2r_b})$. As $K\rightarrow\infty$, it yields $Q(\bbeta^*)-Q(\bbeta^o)<\eta$, for any constant $\eta>0$. The same arguments yield $\|\bbeta^o-\bbeta^*\|_2=o_p(1)$. This completes the proof of the consistency result.
\end{proof}

\begin{lemma}\label{lemnonrate}
Under the conditions in Theorem \ref{thmnon}, there exists a global minimizer $\tilde\bbeta$ (if $Q_n(\bbeta)$ has multiple minimizers), such that
\begin{equation}\label{eqlemnonrate0}
\|\tilde\bbeta-\bbeta^*\|_2=O_p(K^{1/2}/n^{1/2}+K^{-r_b}).
\end{equation}
\end{lemma}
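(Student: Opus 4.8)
The plan is a standard $M$-estimation linearization, carried out carefully in the growing-$K$ regime. First I would control the ``noise'' vector $\bar\bg_{\bbeta^*}(\bT,\bX)$, by splitting it as $\{\bar\bg_{\bbeta^*}(\bT,\bX)-\EE\bg_{\bbeta^*}(\bT_i,\bX_i)\}+\EE\bg_{\bbeta^*}(\bT_i,\bX_i)$. For the stochastic part, since it is an average of i.i.d.\ mean-zero vectors, $\EE\|\bar\bg_{\bbeta^*}(\bT,\bX)-\EE\bg_{\bbeta^*}(\bT_i,\bX_i)\|_2^2=n^{-1}\EE\|\bg_{\bbeta^*}(\bT_i,\bX_i)-\EE\bg_{\bbeta^*}(\bT_i,\bX_i)\|_2^2\le Cn^{-1}\EE\|\bh(\bX_i)\|_2^2=O(K/n)$, using boundedness of the inverse-probability weights (Assumption~\ref{assnon}, part~3) and $\sup_{\bx}\|\bh(\bx)\|_2^2=O(K)$ (part~6), so it is $O_p(K^{1/2}/n^{1/2})$. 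For the deterministic part, with $J^*(\bx)=J(\bbeta^{*\top}\bB(\bx))$ and $\pi^*(\bx)=J(\psi^*(\bx))$, one checks $\EE\bg_{1\bbeta^*}(\bT_i,\bX_i)=\EE[e(\bX_i)\bh_1(\bX_i)]$ with $e(\bx)=(\pi^*(\bx)-J^*(\bx))/\{J^*(\bx)(1-J^*(\bx))\}$, and similarly for the second block; by \eqref{eqnonpropen} and part~3, $\|e\|_\infty=O(K^{-r_b})$, hence for any unit vector $\ub$, $\ub^\top\EE[e(\bX_i)\bh_1(\bX_i)]\le\|e\|_\infty(\ub^\top\EE[\bh_1(\bX_i)\bh_1(\bX_i)^\top]\ub)^{1/2}\le\|e\|_\infty\,\|\EE[\bh_1\bh_1^\top]\|_2^{1/2}=O(K^{-r_b})$ by part~6. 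Thus $\|\bar\bg_{\bbeta^*}(\bT,\bX)\|_2=O_p(K^{1/2}/n^{1/2}+K^{-r_b})=:O_p(r_n)$.

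Next, on the event $\{\tilde\bbeta\in\mathbb{B}(\rho)\}$ I would use the exact first-order expansion $\bar\bg_{\tilde\bbeta}(\bT,\bX)=\bar\bg_{\bbeta^*}(\bT,\bX)+\bar{D}_n(\tilde\bbeta-\bbeta^*)$ with $\bar{D}_n=\int_0^1\partial\bar\bg_{\bbeta^*+t(\tilde\bbeta-\bbeta^*)}(\bT,\bX)/\partial\bbeta\,dt$. The proof of Lemma~\ref{lemnon1}, part~2, in fact bounds $\sup_{\bbeta\in\mathbb{B}(\rho)}\|\partial\bar\bg_{\bbeta}(\bT,\bX)/\partial\bbeta-\Gb^*\|_2=O_p(K^{1/2}\rho+\sqrt{K\log K/n})$ for any $\rho=o(1)$, so if $\rho$ satisfies $K^{1/2}\rho\to0$ (and $\sqrt{K\log K/n}\to0$, which holds since $\kappa=o(n^{1/3})$), then $\|\bar{D}_n-\Gb^*\|_2=o_p(1)$ and, by Assumption~\ref{assnon}, part~5, $\sigma_{\min}(\bar{D}_n)\ge\sqrt{C_1}/2$ with probability tending to one. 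Since $\tilde\bbeta$ minimizes $\|\bar\bg_{\bbeta}(\bT,\bX)\|_2^2$ over $\Theta$, $\|\bar\bg_{\tilde\bbeta}(\bT,\bX)\|_2\le\|\bar\bg_{\bbeta^*}(\bT,\bX)\|_2$, hence $\|\bar{D}_n(\tilde\bbeta-\bbeta^*)\|_2=\|\bar\bg_{\tilde\bbeta}(\bT,\bX)-\bar\bg_{\bbeta^*}(\bT,\bX)\|_2\le 2\|\bar\bg_{\bbeta^*}(\bT,\bX)\|_2$; combined with $\|\bar{D}_n(\tilde\bbeta-\bbeta^*)\|_2\ge(\sqrt{C_1}/2)\|\tilde\bbeta-\bbeta^*\|_2$ this yields $\|\tilde\bbeta-\bbeta^*\|_2\le(4/\sqrt{C_1})\|\bar\bg_{\bbeta^*}(\bT,\bX)\|_2=O_p(r_n)$. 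Since $K^{1/2}r_n\to0$ (because $r_b>1/2$ and $\kappa=o(n^{1/3})$), the choice $\rho\asymp r_n$ is self-consistent, and the bound also gives $K^{1/2}\|\tilde\bbeta-\bbeta^*\|_2\to0$, which is used in subsequent proofs.

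The main obstacle is the localization step: upgrading the qualitative consistency $\|\tilde\bbeta-\bbeta^*\|_2=o_p(1)$ of Lemma~\ref{lemnonconsistency} to membership in a ball $\mathbb{B}(\rho)$ with $K^{1/2}\rho\to0$, which is genuinely stronger when $K\to\infty$ and cannot be read off from Lemma~\ref{lemnonconsistency} directly. I would handle this by a multi-stage peeling argument over dyadic shells $\{2^{j-1}Mr_n<\|\bbeta-\bbeta^*\|_2\le 2^jMr_n\}$: on the shells contained in $\mathbb{B}(cK^{-1/2})$ for a small constant $c$, the Jacobian control above already forces $\inf\|\bar\bg_{\bbeta}(\bT,\bX)\|_2>\|\bar\bg_{\bbeta^*}(\bT,\bX)\|_2$ for $M$ large, contradicting minimality of $\tilde\bbeta$; the shells at distance $\gtrsim K^{-1/2}$ are eliminated by iteratively shrinking the feasible region, using the uniform control $\sup_{\bbeta\in\Theta}|Q_n(\bbeta)-Q(\bbeta)|=O_p(\sqrt{K^2\log K/n})$ from Lemma~\ref{lemnon0}, a quadratic minorization $Q(\bbeta)\gtrsim\|\bbeta-\bbeta^o\|_2^2$ near $\bbeta^o$ (a consequence of Assumption~\ref{assnon}, parts~3 and~5, via a second-order expansion of $\EE\bg_{\bbeta}$), and the localized bounds of Lemma~\ref{lemnon1}, the rate restriction $\kappa=o(n^{1/3})$ being precisely what keeps the error terms in these steps negligible. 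Because $Q_n$ may have several global minimizers, the argument is run for the one that falls in the shrinking neighborhood of $\bbeta^*$, which exists since $\bbeta^*$ is itself a near-minimizer; this matches the existence phrasing in the statement.
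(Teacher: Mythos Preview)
Your control of $\|\bar\bg_{\bbeta^*}(\bT,\bX)\|_2=O_p(K^{1/2}/n^{1/2}+K^{-r_b})$ is correct and essentially the same computation the paper performs (combining Lemma~\ref{lemnon1}, part~1, with the deterministic-bias bound in the style of \eqref{lemnonexpansion3}). The conditional rate argument on the event $\{\tilde\bbeta\in\mathbb{B}(\rho)\}$ with $K^{1/2}\rho\to 0$ is also sound and parallels what the paper does via the Weyl inequality and Lemma~\ref{lemnon1}, part~2.

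Where your route diverges from the paper is in the localization step, which you correctly flag as the main obstacle. You propose a multi-scale peeling argument that bootstraps consistency $o_p(1)$ down to a ball of radius $o(K^{-1/2})$, combining Lemma~\ref{lemnon0} with a quadratic minorization of $Q(\bbeta)$ near $\bbeta^o$. This is plausible in outline, but the sketch is incomplete: for instance, the direct comparison between $Q(\tilde\bbeta)=O_p(K^{-2r_b}+\sqrt{K^2\log K/n})$ and the minorization $Q(\bbeta)\gtrsim\|\bbeta-\bbeta^o\|_2^2$ at the scale $K^{-1/2}$ requires $K^4\log K/n\to 0$, which is not implied by $\kappa=o(n^{1/3})$; the ``iterative shrinking'' is therefore doing real work that is not spelled out.

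The paper sidesteps this obstacle entirely by reversing the logic. Instead of starting from a global minimizer and trying to localize it, it first proves that a \emph{local} minimizer exists inside the ball $\mathcal{C}=\{\|\bDelta\|_2\le r\}$ with $r=C(K^{1/2}/n^{1/2}+K^{-r_b})$, via the boundary inequality
\[
Q_n(\bbeta^*+\bDelta)-Q_n(\bbeta^*)\ \ge\ -2\|\bar\bg_{\bbeta^*}(\bT,\bX)\|_2\,\|\widetilde\Gb\|_2\,r+\lambda_{\min}(\widetilde\Gb^\top\widetilde\Gb)\,r^2\ >\ 0
\]
for all $\bDelta\in\partial\mathcal{C}$ once $C$ is large enough. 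It then observes that, because $m=K$ and the Jacobian $\partial\bar\bg_{\tilde\bbeta}/\partial\bbeta$ is invertible on this ball (again by Weyl and Lemma~\ref{lemnon1}), the first-order condition forces $\bar\bg_{\tilde\bbeta}(\bT,\bX)=0$, i.e., $Q_n(\tilde\bbeta)=0$; hence this local minimizer is automatically a global one. This two-line promotion from local to global is the key simplification you are missing, and it explains why the lemma is phrased as an existence statement among global minimizers rather than as a property of an arbitrary one.
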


\begin{proof}[Proof of Lemma \ref{lemnonrate}]
We first prove that there exists a local minimizer $\tilde\bDelta$ of $Q_n(\bbeta^*+\bDelta)$, such that $\tilde\Delta\in\mathcal{C}$, where $\mathcal{C}=\{\bDelta\in\RR^K: \|\bDelta\|_2\leq r\}$,  and  $r=C(K^{1/2}/n^{1/2}+K^{-r_b})$ for some constant $C$ large enough. To this end, it suffices to show that
\begin{equation}\label{eqlemnonrate1}
\PP\Big\{\inf_{\bDelta\in\partial\mathcal{C}} Q_n(\bbeta^*+\bDelta)-Q_n(\bbeta^*)>0\Big\}\rightarrow 1, ~~\textrm{as}~n\rightarrow\infty,
\end{equation}
where $\partial\mathcal{C}=\{\bDelta\in\RR^K: \|\bDelta\|_2=r\}$. Applying the mean value theorem to each component of $\bar\bg_{\bbeta^*+\bDelta}(\bT, \bX)$,
$$
\bar\bg_{\bbeta^*+\bDelta}(\bT, \bX)=\bar\bg_{\bbeta^*}(\bT, \bX)+\widetilde\Gb\bDelta,
$$
where $\widetilde\Gb=\frac{\partial \bar\bg_{\bar\bbeta}(\bT, \bX)}{\partial\bbeta}$ and for notational simplicity we assume there exists a common $\bar\bbeta=v\bbeta^*+(1-v)\tilde\bbeta$ for some $0\leq v\leq 1$ lies between $\bbeta^*$ and $\bbeta^*+\bDelta$ (Rigorously speaking, we need different $\bar\bbeta$ for different component of $\bar\bg_{\bbeta^*+\bDelta}(\bT, \bX)$). Thus, for any $\bDelta\in\partial\mathcal{C}$,
\begin{align}
Q_n(\bbeta^*+\bDelta)-Q_n(\bbeta^*)&=2\bar\bg_{\bbeta^*}(\bT, \bX)\widetilde\Gb\bDelta+\bDelta^\top(\widetilde\Gb^\top\widetilde\Gb)\bDelta\nonumber\\
&\geq-2\|\bar\bg_{\bbeta^*}(\bT, \bX)\|_2\cdot\|\widetilde\Gb\|_2\cdot\|\bDelta\|_2+\|\bDelta\|_2^2\cdot\lambda_{\min}(\widetilde\Gb^\top\widetilde\Gb)\nonumber\\
&\geq-C(K^{1/2}/n^{1/2}+K^{-r_b})\cdot r+C\cdot r^2,\label{eqlemnonrate3}
\end{align}
for some constant $C>0$. In the last step, we first use the results that $\|\bar\bg_{\bbeta^*}(\bT, \bX)\|_2=O_p(K^{1/2}/n^{1/2}+K^{-r_b})$, which is derived by combining Lemma \ref{lemnon1} with the arguments similar to (\ref{lemnonexpansion3}) in the proof of Lemma \ref{lemnonexpansion}. In addition, $\|\widetilde\Gb\|_2\leq \|\widetilde\Gb-\Gb^*\|_2+\|\Gb^*\|_2\leq C$, since $\|\Gb^*\|_2$ is bounded by a constant and $\|\widetilde\Gb-\Gb^*\|_2=o_p(1)$ by Lemma  \ref{lemnon1}. By the Weyl inequality and Lemma  \ref{lemnon1},
\begin{align*}
\lambda_{\min}(\widetilde\Gb^\top\widetilde\Gb)&\geq\lambda_{\min}(\Gb^{*\top}\Gb^*)-\|\widetilde\Gb^\top\widetilde\Gb-\Gb^{*\top}\Gb^*\|_2\\
&\geq C-\|\widetilde\Gb-\Gb^*\|_2\cdot\|\widetilde\Gb\|_2-\|\widetilde\Gb-\Gb^*\|_2\cdot \|\Gb^*\|_2\geq C/2,
\end{align*}
for $n$ sufficiently large. By (\ref{eqlemnonrate3}), if $r=C(K^{1/2}/n^{1/2}+K^{-r_b})$ for some constant $C$ large enough, the right hand side is positive for $n$ large enough. This establishes (\ref{eqlemnonrate1}). Next, we show that $\tilde\bbeta=\bbeta^*+\tilde\bDelta$ is a global minimizer of $Q_n(\bbeta)$. This is true because the first order condition implies
$$
\Big(\frac{\partial \bar\bg_{\tilde\bbeta}(\bT, \bX)}{\partial\bbeta}\Big)\bar\bg_{\tilde\bbeta}(\bT, \bX)=0, ~~\Longrightarrow~~\bar\bg_{\tilde\bbeta}(\bT, \bX)=0,
$$
provided $\partial \bar\bg_{\tilde\bbeta}(\bT, \bX)/\partial\bbeta$ is invertible. Following the similar arguments by applying the Weyl inequality, $\partial \bar\bg_{\tilde\bbeta}(\bT, \bX)/\partial\bbeta$ is invertible with probability tending to one. Since $\bar\bg_{\tilde\bbeta}(\bT, \bX)=0$, it implies $Q_n(\tilde\bbeta)=0$. Noting that $Q_n(\bbeta)\geq 0$ for any $\bbeta$, we obtain that $\tilde\bbeta$ is indeed a global minimizer of $Q_n(\bbeta)$.
\end{proof}

\begin{lemma}\label{lemnonexpansion}
Under the conditions in Theorem \ref{thmnon}, $\tilde\bbeta$ satisfies the following asymptotic expansion
\begin{equation}\label{lemnonexpansion1}
\tilde\bbeta-\bbeta^*=-\Gb^{-1}\bar\bU+\bDelta_{n},
\end{equation}
where $\bar\bU=\frac{1}{n}\sum_{i=1}^n\bU_i$, $\bU_i=(\bU^\top_{i1},\bU^\top_{i2})^\top$, with
$$
\bU_{i1}=	\left(\frac{T_i}{\pi^*_i}-\frac{1-T_i}{1-\pi^*_i}\right)\bh_1(\bX_i),~~\bU_{i2}= \left(\frac{T_i}{\pi^*_i}-1\right)\bh_2(\bX_i),
$$
and
$$
\|\bDelta_n\|_2=O_p\Big(K^{1/2}\cdot\Big(\frac{K^{1/2}}{n^{1/2}}+\frac{1}{K^{r_b}}\Big)^2+\sqrt{\frac{K\log K}{n}}\cdot \Big(\frac{K^{1/2}}{n^{1/2}}+\frac{1}{K^{r_b}}\Big)\Big).
$$
\end{lemma}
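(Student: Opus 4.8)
The plan is to linearize the estimating equation at $\bbeta^*$ and then replace the empirical Jacobian by $\Gb^*$ and $\bar\bg_{\bbeta^*}(\bT,\bX)$ by $\bar\bU$, controlling each substitution error by the rates supplied by Lemmas~\ref{lemnon1} and \ref{lemnonrate}. Set $\rho:=K^{1/2}n^{-1/2}+K^{-r_b}$. By Lemma~\ref{lemnonrate}, with probability tending to one the global minimizer $\tilde\bbeta$ satisfies $\bar\bg_{\tilde\bbeta}(\bT,\bX)=\mathbf 0$ and $\|\tilde\bbeta-\bbeta^*\|_2=O_p(\rho)$, so $\tilde\bbeta$ lies in $\mathbb{B}(r)$ for a fixed multiple $r$ of $\rho$. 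Applying the mean value theorem coordinatewise to $\bar\bg_\bbeta(\bT,\bX)$ (smooth in $\bbeta$ since $J\in C^2$ on the relevant range, Assumption~\ref{assnon}(3)) gives $\mathbf 0=\bar\bg_{\bbeta^*}(\bT,\bX)+\widetilde\Gb(\tilde\bbeta-\bbeta^*)$, where each row of $\widetilde\Gb$ is a gradient of the corresponding coordinate of $\bar\bg$ evaluated at an intermediate point in $\mathbb{B}(r)$. Part~2 of Lemma~\ref{lemnon1} then yields $\|\widetilde\Gb-\Gb^*\|_2=O_p(K^{1/2}\rho+\sqrt{K\log K/n})$, which is $o_p(1)$ because $\kappa=o(n^{1/3})$ and $r_b>1/2$; combined with $\lambda_{\min}(\Gb^{*\top}\Gb^*)\ge C_1$ (Assumption~\ref{assnon}(5)) and Weyl's inequality, $\widetilde\Gb$ is invertible on this event with $\|\widetilde\Gb^{-1}\|_2=O_p(1)$, and the resolvent identity gives $\|\widetilde\Gb^{-1}-\Gb^{*-1}\|_2\le\|\widetilde\Gb^{-1}\|_2\|\Gb^{*-1}\|_2\|\widetilde\Gb-\Gb^*\|_2=O_p(K^{1/2}\rho+\sqrt{K\log K/n})$. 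Hence $\tilde\bbeta-\bbeta^*=-\widetilde\Gb^{-1}\bar\bg_{\bbeta^*}(\bT,\bX)$.

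Next I would write $\tilde\bbeta-\bbeta^*+\Gb^{*-1}\bar\bU=(\Gb^{*-1}-\widetilde\Gb^{-1})\bar\bg_{\bbeta^*}(\bT,\bX)-\Gb^{*-1}\bigl(\bar\bg_{\bbeta^*}(\bT,\bX)-\bar\bU\bigr)$, identifying $\bDelta_n$ with the right-hand side. For the first term, I would use $\|\bar\bg_{\bbeta^*}(\bT,\bX)\|_2=O_p(\rho)$ — the bound already invoked in the proof of Lemma~\ref{lemnonrate}, which follows by decomposing $\bar\bg_{\bbeta^*}=\bar\bU+(\bar\bg_{\bbeta^*}-\bar\bU)$, bounding $\|\bar\bU\|_2=O_p(K^{1/2}n^{-1/2})$ by part~1 of Lemma~\ref{lemnon1}, and bounding the sieve-approximation discrepancy $\bar\bg_{\bbeta^*}-\bar\bU$ through $\sup_{\bx}|\psi^*(\bx)-\bbeta^{*\top}\bB(\bx)|=O(K^{-r_b})$, Lipschitzness of $J$, and a Markov bound on the centered part (using $\|\EE\bh(\bX_i)\bh(\bX_i)^\top\|_2\le C$ and $\sup_{\bx}\|\bh(\bx)\|_2\le C\kappa^{1/2}$ from Assumption~\ref{assnon}(6)). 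This gives a contribution of order $O_p\bigl((K^{1/2}\rho+\sqrt{K\log K/n})\rho\bigr)=O_p\bigl(K^{1/2}\rho^2+\sqrt{K\log K/n}\cdot\rho\bigr)$, which is exactly the claimed rate. For the second term I would reuse the same discrepancy estimate for $\bar\bg_{\bbeta^*}-\bar\bU$ and check, using $\rho\ge K^{-r_b}$ together with the rate restrictions $\kappa=o(n^{1/3})$ and $n^{1/(2(r_b+r_h))}=o(\kappa)$, that $\|\Gb^{*-1}(\bar\bg_{\bbeta^*}-\bar\bU)\|_2$ is dominated by $K^{1/2}\rho^2+\sqrt{K\log K/n}\cdot\rho$. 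Summing the two contributions gives the stated bound on $\|\bDelta_n\|_2$.

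The main obstacle is the control of the sieve-approximation discrepancy $\bar\bg_{\bbeta^*}(\bT,\bX)-\bar\bU$ at precisely the right order, and the verification that its product with $\Gb^{*-1}$ (and its interaction with the Jacobian error $\Gb^{*-1}-\widetilde\Gb^{-1}$) is absorbed into $K^{1/2}\rho^2+\sqrt{K\log K/n}\cdot\rho$; this forces a careful split of the discrepancy into its deterministic mean and its mean-zero fluctuation, a matrix/vector moment bound for each using Assumption~\ref{assnon}(6), and an explicit comparison against the two-term target rate using all of the assumed relations among $\kappa$, $n$, $r_b$, and $r_h$. A secondary technical point is making the coordinatewise mean value theorem rigorous, so that the (coordinate-dependent) intermediate evaluation points at which Lemma~\ref{lemnon1}(2) is applied all lie in $\mathbb{B}(r)$ on the high-probability event where $\tilde\bbeta\in\mathbb{B}(r)$.
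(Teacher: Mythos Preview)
Your proposal is essentially the paper's own proof: the paper also applies the coordinatewise mean value theorem to $\bar\bg_{\tilde\bbeta}(\bT,\bX)=0$, then writes $\tilde\bbeta-\bbeta^*=-\Gb^{*-1}\bar\bU+\bDelta_{n1}+\bDelta_{n2}+\bDelta_{n3}$ with $\bDelta_{n1}=\Gb^{*-1}(\bar\bU-\bar\bg_{\bbeta^*})$, $\bDelta_{n2}=(\Gb^{*-1}-\widetilde\Gb^{-1})\bar\bU$, $\bDelta_{n3}=(\Gb^{*-1}-\widetilde\Gb^{-1})(\bar\bg_{\bbeta^*}-\bar\bU)$, and bounds each piece via Lemma~\ref{lemnon1} and the sieve approximation error exactly as you outline (your two-term split simply groups $\bDelta_{n2}+\bDelta_{n3}$ together). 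The ``main obstacle'' you flag---showing that $\|\Gb^{*-1}(\bar\bg_{\bbeta^*}-\bar\bU)\|_2=O_p(K^{-r_b})$ is absorbed into $K^{1/2}\rho^2+\sqrt{K\log K/n}\cdot\rho$---is precisely the $\bDelta_{n1}$ term, and the paper handles it by the same $\|\bxi\|_2^2/n$ argument via Lemma~\ref{lemnon2} that you sketch.
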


\begin{proof}[Proof of Lemma \ref{lemnonexpansion}]
Similar to the proof of Lemma \ref{lemnonrate}, we apply the mean value theorem to each component of $\bar\bg_{\tilde\bbeta}(\bT, \bX)$,
$$
\bar\bg_{\bbeta^*}(\bT, \bX)+\Big(\frac{\partial \bar\bg_{\bar\bbeta}(\bT, \bX)}{\partial\bbeta}\Big)(\tilde\bbeta-\bbeta^*)=0,
$$
where for notational simplicity we assume there exists a common $\bar\bbeta=v\bbeta^*+(1-v)\tilde\bbeta$ for some $0\leq v\leq 1$ lies between $\bbeta^*$ and $\tilde\bbeta$. After rearrangement, we derive
\begin{align}
\tilde\bbeta-\bbeta^*&=-\Gb^{*-1}\bar\bg_{\bbeta^*}(\bT, \bX)+\Big[\Gb^{*-1}-\Big(\frac{\partial \bar\bg_{\bar\bbeta}(\bT, \bX)}{\partial\bbeta}\Big)^{-1}\Big]\bar\bg_{\bbeta^*}(\bT, \bX)\nonumber\\
&=-\Gb^{*-1}\bar\bU+\bDelta_{n1}+\bDelta_{n2}+\bDelta_{n3},\label{lemnonexpansion2}
\end{align}
where
$$
\bDelta_{n1}=\Gb^{*-1}[\bar \bU-\bar\bg_{\bbeta^*}(\bT, \bX)], ~~~\bDelta_{n2}=\Big[\Gb^{*-1}-\Big(\frac{\partial \bar\bg_{\bar\bbeta}(\bT, \bX)}{\partial\bbeta}\Big)^{-1}\Big]\bar \bU
$$
and
$$
\bDelta_{n3}=\Big[\Gb^{*-1}-\Big(\frac{\partial \bar\bg_{\bar\bbeta}(\bT, \bX)}{\partial\bbeta}\Big)^{-1}\Big]\cdot[\bar\bg_{\bbeta^*}(\bT, \bX)-\bar \bU].
$$
We first consider $\bDelta_{n1}$ in (\ref{lemnonexpansion2}). Let $\bxi=(\xi_1,...,\xi_n)^\top$, where
$$\xi_i=T_i\Big(\frac{1}{\pi^*_i}-\frac{1}{J^*_i}\Big)-(1-T_i)\Big(\frac{1}{1-\pi^*_i}-\frac{1}{1-J^*_i}\Big), ~~\textrm{for}~ 1\leq i\leq m_1,
$$
and
$$\xi_i=T_i\Big(\frac{1}{\pi^*_i}-\frac{1}{J^*_i}\Big), ~~\textrm{for}~m_1+1\leq i\leq K.
$$
Let $\Hb=(\bh(X_1),...,\bh(X_n))^\top$ be a $n\times K$ matrix. Then, for some constants $C,C'>0$,
\begin{align}
\|\bDelta_{n1}\|^2_2&= n^{-2}\bxi^\top\Hb\Gb^{*-1}\Gb^{*-1}\Hb^\top\bxi\leq n^{-2}\|\bxi\|^2_2\cdot \|\Hb\Gb^{*-1}\Gb^{*-1}\Hb^\top\|_2\nonumber\\
&\leq Cn^{-1}\|\bxi\|^2_2\cdot \|\Hb^\top\Hb/n\|_2\leq C'n^{-1}\|\bxi\|^2_2,\label{lemnonexpansion3}
\end{align}
where the third step follows from the fact that $\|\Gb^{*-1}\|_2$ is bounded and the last step follows from Lemma \ref{lemnon2} and the maximum eigenvalue of $\EE[\bh(\bX_i)\bh(\bX_i)^\top]$ is bounded. Since $|\partial J(v)/\partial v|$ is upper bounded by a constant for any $v\leq \sup_{\bx\in\mathcal{X}}|m^*(\bx)|$, then there exist some constants $C,C'>0$, suc that for any $m_1+1\leq i\leq K$,
$$
|\xi_i|\leq C|\pi_i^*-J_i^*|\leq C'\sup_{\bx\in\mathcal{X}}|m^*(\bx)-\bbeta^{*\top}\bB(\bx)|\leq C'K^{-r_b}.
$$
Similarly, $|\xi_i|\leq 2C'K^{-r_b}$ for any $1\leq i\leq m_1$. Thus, it yields $n^{-1}\|\bxi\|^2_2=O_p(K^{-2r_b})$. Combining with (\ref{lemnonexpansion3}), we conclude that $\|\bDelta_{n1}\|_2=O_p(K^{-r_b})$.

Next, we consider $\bDelta_{n2}$. Since $\|\Gb^{*-1}\|_2$ is bounded, we have
\begin{align*}
\|\bDelta_{n2}\|_2&\leq \|\Gb^{*-1}\|_2 \cdot\Big\|\Big(\frac{\partial \bar\bg_{\bar\bbeta}(\bT, \bX)}{\partial\bbeta}\Big)^{-1}\Big\|_2\cdot \Big\|\Gb^*-\frac{\partial \bar\bg_{\bar\bbeta}(\bT, \bX)}{\partial\bbeta}\Big\|_2\cdot\|\bar \bU\|_2\\
&\leq C\Big(\|\tilde\bbeta-\bbeta^*\|_2K^{1/2}+\sqrt{\frac{K\log K}{n}}\Big)\cdot \sqrt{\frac{K}{n}},
\end{align*}
where the last step follows from Lemma \ref{lemnon1}.

Finally, we consider $\bDelta_{n3}$. By the same arguments in the control of terms $\bDelta_{n1}$ and $\bDelta_{n2}$, we can prove that
\begin{align*}
\|\bDelta_{n3}\|_2&\leq \Big\|\Gb^{*-1}-\Big(\frac{\partial \bar\bg_{\bar\bbeta}(\bT, \bX)}{\partial\bbeta}\Big)^{-1}\Big\|_2\cdot\|\bar\bg_{\bbeta^*}(\bT, \bX)-\bar \bU\|_2\\
&\leq C\Big(\|\tilde\bbeta-\bbeta^*\|_2K^{1/2}+\sqrt{\frac{K\log K}{n}}\Big)\cdot K^{-r_b}.
\end{align*}
Combining the rates of $\|\bDelta_{n1}\|_2$, $\|\bDelta_{n2}\|_2$ and $\|\bDelta_{n3}\|_2$ with (\ref{lemnonexpansion2}), by Lemma \ref{lemnon1}, we obtain
\begin{align*}
\|\tilde\bbeta-\bbeta^*\|_2&\leq \|\Gb^{*-1}\bar\bg_{\bbeta^*}(\bT, \bX)\|_2+\|\bDelta_{n1}\|_2+\|\bDelta_{n2}\|_2+\|\bDelta_{n3}\|_2\\
&\leq C\Big(\frac{K^{1/2}}{n^{1/2}}+\frac{1}{K^{r_b}}\Big)+C'\Big(\|\tilde\bbeta-\bbeta^*\|_2K^{1/2}+\sqrt{\frac{K\log K}{n}}\Big)\cdot \Big(\frac{K^{1/2}}{n^{1/2}}+\frac{1}{K^{r_b}}\Big),
\end{align*}
for some constants $C,C'>0$. 
Therefore, (\ref{lemnonexpansion1}) holds with $\bDelta_n=\bDelta_{n1}+\bDelta_{n2}+\bDelta_{n3}$, where
$$
\|\bDelta_n\|_2=O_p\Big(K^{1/2}\cdot\Big(\frac{K^{1/2}}{n^{1/2}}+\frac{1}{K^{r_b}}\Big)^2+\sqrt{\frac{K\log K}{n}}\cdot \Big(\frac{K^{1/2}}{n^{1/2}}+\frac{1}{K^{r_b}}\Big)\Big).
$$
This completes the proof.
\end{proof}

\begin{proof}[Proof of Theorem \ref{thmnon}]
We now consider the following decomposition of $\tilde\mu_{\tilde\bbeta}-\mu$,
\begin{align*}
\tilde\mu_{\tilde\bbeta}-\mu&=\frac{1}{n}\sum_{i=1}^n\Big[\frac{T_i(Y_i(1)-K(\bX_i)-L(\bX_i))}{\tilde J_i}-\frac{(1-T_i)(Y_i(0)-K(\bX_i))}{1-\tilde J_i}\Big]\\
&~~~~~~~~+\frac{1}{n}\sum_{i=1}^n\Big(\frac{T_i}{\tilde J_i}-\frac{1-T_i}{1-\tilde J_i}\Big)K(\bX_i)+\frac{1}{n}\sum_{i=1}^n\Big(\frac{T_i}{\tilde J_i}-1\Big)L(\bX_i)+\frac{1}{n}\sum_{i=1}^n L(\bX_i)-\mu\\
&=\frac{1}{n}\sum_{i=1}^n\Big[\frac{T_i(Y_i(1)-K(\bX_i)-L(\bX_i))}{\tilde J_i}-\frac{(1-T_i)(Y_i(0)-K(\bX_i))}{1-\tilde J_i}\Big]\\
&~~~~~~~~+\frac{1}{n}\sum_{i=1}^n\Big(\frac{T_i}{\tilde J_i}-\frac{1-T_i}{1-\tilde J_i}\Big)\Delta_K(\bX_i)+\frac{1}{n}\sum_{i=1}^n\Big(\frac{T_i}{\tilde J_i}-1\Big)\Delta_L(\bX_i)+\frac{1}{n}\sum_{i=1}^n L(\bX_i)-\mu,
\end{align*}
where $\tilde J_i=J(\tilde\bbeta^\top\bB(X_i))$, $\Delta_K(\bX_i)=K(\bX_i)-\balpha_1^{*\top}\bh_1(\bX_i)$ and $\Delta_L(\bX_i)=L(\bX_i)-\balpha_2^{*\top}\bh_2(\bX_i)$. Here, the second equality holds by the definition of $\tilde\bbeta$. Thus, we have
\begin{align*}
\tilde\mu_{\tilde\bbeta}-\mu&=\frac{1}{n}\sum_{i=1}^n S_i+R_0+R_1+R_2+R_3
\end{align*}
where
$$
S_i=\frac{T_i}{\pi^*_i}\big[Y_i(1)-K(\bX_i)-L(\bX_i)\big]-\frac{1-T_i}{1-\pi^*_i}\big[Y_i(0)-K(\bX_i)\big]+L(\bX_i)-\mu,
$$
$$
R_0=\frac{1}{n}\sum_{i=1}^n \frac{T_i(Y_i(1)-K(\bX_i)-L(\bX_i))}{\tilde J_i\pi^*_i}(\pi^*_i-\tilde J_i),
$$
$$
R_1=\frac{1}{n}\sum_{i=1}^n \frac{(1-T_i)(Y_i(0)-K(\bX_i))}{(1-\tilde J_i)(1-\pi^*_i)}(\pi^*_i-\tilde J_i),
$$
$$
R_2=\frac{1}{n}\sum_{i=1}^n\Big(\frac{T_i}{\tilde J_i}-\frac{1-T_i}{1-\tilde J_i}\Big)\Delta_K(\bX_i),
~~R_3=\frac{1}{n}\sum_{i=1}^n\Big(\frac{T_i}{\tilde J_i}-1\Big)\Delta_L(\bX_i).
$$
In the following, we will show that $R_j=o_p(n^{-1/2})$ for $0\leq j\leq 3$. Thus, the asymptotic normality of $n^{1/2}(\tilde\mu_{\tilde\bbeta}-\mu)$ follows from the previous decomposition. In addition, $S_i$ agrees with the efficient score function for estimating $\mu$ \citep{hahn1998role}. Thus, the proposed estimator $\tilde\mu_{\tilde\bbeta}$ is also semiparametrically efficient.

Now, we first focus on $R_0$. Consider the following empirical process $\GG_n(f_0)=n^{1/2}(\PP_n-\PP)f_0(T, Y(1), \bX)$, where $\PP_n$ stands for the empirical measure and $\PP$ stands for the expectation, and
$$
f_0(T, Y(1), \bX)=\frac{T(Y(1)-K(\bX)-L(\bX))}{J(m(\bX))\pi^*(\bX)}[\pi^*(\bX)-J(m(\bX))].
$$
By Lemma \ref{lemnonrate}, we can easily show that
\begin{align*}
\sup_{\bx\in\mathcal{X}}&|J(\tilde\bbeta^\top\bB(\bx))-\pi^*(\bx)|\lesssim \sup_{\bx\in\mathcal{X}}|\tilde\bbeta^\top\bB(\bx)-\bbeta^{*\top}\bB(\bx)|\\
&+\sup_{\bx\in\mathcal{X}}|m^*(\bx)-\bbeta^{*\top}\bB(\bx)|=O_p(K/n^{1/2}+K^{1/2-r_b})=o_p(1).
\end{align*}
For notational simplicity, we denote $\|f\|_\infty=\sup_{\bx\in\mathcal{X}}|f(\bx)|$. Define the set of functions $\mathcal{F}=\{f_0: \|m-m^*\|_\infty\leq \delta\}$, where $\delta=C(K/n^{1/2}+K^{1/2-r_b})$ for some constant $C>0$.
By the strong ignorability of the treatment assignment, we have that $\PP f_0(T, Y(1), \bX)=0$. By the Markov inequality and the maximal inequality in Corollary 19.35 of \cite{van2000asymptotic},
$$
n^{1/2}R_0\leq \sup_{f_0\in\mathcal{F}}\GG_n(f_0)\lesssim \EE \sup_{f_0\in\mathcal{F}}\GG_n(f_0)\lesssim J_{[~]}(\|F_0\|_{P,2}, \mathcal{F}, L_2(P)),
$$
where $J_{[~]}(\|F_0\|_{P,2}, \mathcal{F}, L_2(P))$ is the bracketing integral, and $F_0$ is the envelop function. Since $J$ is bounded away from 0, we have $|f_0(T, Y(1), \bX)|\lesssim \delta |Y(1)-K(\bX)-L(\bX)|:=F_0$. Then $\|F_0\|_{P,2}\leq \delta \{\EE|Y(1)|^2\}^{1/2}\lesssim \delta$. Next, we consider $N_{[~]}(\epsilon, \mathcal{F}, L_2(P))$. Define $\mathcal{F}_0=\{f_0: \|m-m^*\|_\infty\leq C\}$ for some constant $C>0$. Thus, it is easily seen that $\log N_{[~]}(\epsilon, \mathcal{F}, L_2(P))\lesssim \log N_{[~]}(\epsilon, \mathcal{F}_0\delta, L_2(P))=\log N_{[~]}(\epsilon/\delta, \mathcal{F}_0, L_2(P))\lesssim \log N_{[~]}(\epsilon/\delta, \mathcal{M}, L_2(P))\lesssim (\delta/\epsilon)^{1/k_1}$, where we use the fact that $J$ is bounded away from 0 and $J$ is Lipschitz. The last step follows from the assumption on the bracketing number of $\mathcal{M}$. Then
\begin{align*}
J_{[~]}(\|F_0\|_{P,2}, \mathcal{F}, L_2(P))\lesssim \int_0^\delta \sqrt{\log N_{[~]}(\epsilon, \mathcal{F}, L_2(P))}d\epsilon \lesssim \int_0^\delta (\delta/\epsilon)^{1/(2k_1)}d \epsilon,
\end{align*}
which goes to 0, as $\delta\rightarrow 0$, because $2k_1>1$ by assumption and thus the integral converges. Thus, this shows that $n^{1/2}R_0=o_p(1)$. By the similar argument, we can show that $n^{1/2}R_1=o_p(1)$.

Next, we consider $R_2$. Define the following empirical process $\GG_n(f_2)=n^{1/2}(\PP_n-\PP)f_2(T, \bX)$, where
$$
f_2(T, \bX)=\frac{T-J(m(\bX))}{J(m(\bX))(1-J(m(\bX)))}\Delta_K(\bX).
$$
By the assumption on the approximation property of the basis functions, we have $\|\Delta_K\|_\infty\lesssim K^{-r_h}$. In addition,
\begin{align*}
\|J(\tilde\bbeta^\top\bB(\bX))-\pi^*(\bX)\|_{P,2}&\leq \|J(\tilde\bbeta^\top\bB(\bX))-J(\bbeta^{*\top}\bB(\bX))\|_{P,2}+\|J(\bbeta^{*\top}\bB(\bX))-\pi^*(\bX)\|_{P,2}\\
&\lesssim \|\tilde\bbeta^\top\bB(\bX)-\bbeta^{*\top}\bB(\bX)\|_{P,2}+\sup_{\bx\in\mathcal{X}}|m^*(\bx)-\bbeta^{*\top}\bB(\bx)|\\
&=O_p(K^{1/2}/n^{1/2}+K^{-r_b}),
\end{align*}
where the last step follows from Lemma \ref{lemnonrate}.

Define the set of functions $\mathcal{F}=\{f_2: \|m-m^*\|_{P,2}\leq \delta_1, \|\Delta\|_\infty\leq \delta_2\}$, where $\delta_1=C(K^{1/2}/n^{1/2}+K^{-r_b})$ and $\delta_2=CK^{-r_h}$ for some constant $C>0$. Thus,
$$
n^{1/2}R_2\leq \sup_{f_2\in\mathcal{F}}\GG_n(f_2)+n^{1/2}\sup_{f_2\in\mathcal{F}}\PP f_2.
$$
We first consider the second term $n^{1/2}\sup_{f_2\in\mathcal{F}}\PP f_2$. Let $\mathcal{G}_1=\{m\in\mathcal{M}:  \|m-m^*\|_{P,2}\leq \delta_1\}$ and $\mathcal{G}_2=\{\Delta\in\mathcal{H}-\balpha_1^{*\top}\bh_1:  \|\Delta\|_{\infty}\leq \delta_2\}$.
By the definition of the propensity score and Cauchy inequality,
\begin{align*}
n^{1/2}\sup_{f_2\in\mathcal{F}}\PP f_2&=n^{1/2}\sup_{m\in\mathcal{G}_1, \Delta\in\mathcal{G}_2}\EE \frac{\pi^*(\bX)-J(m(\bX))}{J(m(\bX))(1-J(m(\bX)))}\Delta(\bX)\\
&\lesssim n^{1/2}\sup_{m\in\mathcal{G}_1}\|\pi^*-J(m)\|_{P,2}\sup_{\Delta\in\mathcal{G}_2}\|\Delta\|_{P,2}\\
&\lesssim n^{1/2}\delta_1\delta_2\lesssim n^{1/2}(K^{1/2}/n^{1/2}+K^{-r_b})K^{-r_h}=o(1),
\end{align*}
where the last step follows from $r_h>1/2$ and the scaling assumption $n^{1/2}\lesssim K^{r_b+r_h}$ in this theorem. Next, we need to control the maximum of the empirical process $\sup_{f_2\in\mathcal{F}}\GG_n(f_2)$. Following the similar argument to that for $R_0$, we only need to upper bound the bracketing integral $J_{[~]}(\|F_2\|_{P,2}, \mathcal{F}, L_2(P))$. Since $J$ is bounded away from 0 and 1, we can set the envelop function to be $F_2:=C\delta_2$ for some constant $C>0$ and thus $\|F_2\|_{P,2}\lesssim \delta_2$. Define $\mathcal{F}_0=\{f_2: \|m-m^*\|_{P,2}\leq C, \|\Delta\|_{P,2}\leq 1\}$ for some constant $C>0$, $\mathcal{G}_{10}=\{m\in\mathcal{M}+m^*:  \|m\|_{P,2}\leq C\}$ and $\mathcal{G}_{20}=\{\Delta\in\mathcal{H}-\balpha_1^{*\top}\bh_1:  \|\Delta\|_{P,2}\leq 1\}$. Similarly, we have
\begin{align*}
\log N_{[~]}(\epsilon, \mathcal{F}, L_2(P))&\lesssim \log N_{[~]}(\epsilon/\delta_2, \mathcal{F}_0, L_2(P))\\
&\lesssim \log N_{[~]}(\epsilon/\delta_2, \mathcal{G}_{10}, L_2(P))+\log N_{[~]}(\epsilon/\delta_2, \mathcal{G}_{20}, L_2(P))\\
&\lesssim \log N_{[~]}(\epsilon/\delta_2, \mathcal{M}, L_2(P))+\log N_{[~]}(\epsilon/\delta_2, \mathcal{H}, L_2(P))\\
&\lesssim (\delta_2/\epsilon)^{1/k_1}+(\delta_2/\epsilon)^{1/k_2},
\end{align*}
where the second step follows from the boundness assumption on $J$ and its Lipschitz property, the third step is due to $\mathcal{G}_{10}-m^*\subset \mathcal{M}$ and $\mathcal{G}_{20}+\balpha_1^{*\top}\bh_1\subset \mathcal{H}$ and the last step is by the bracketing number condition in our assumption.  Since $2k_1>1$ and $2k_2>1$, it is easily seen that the bracketing integral $J_{[~]}(\|F_2\|_{P,2}, \mathcal{F}, L_2(P))=o(1)$. This shows that $\sup_{f_2\in\mathcal{F}}\GG_n(f_2)=o_p(1)$.  Thus, we conclude that $n^{1/2}R_2=o_p(1)$. By the similar argument, we can show that $n^{1/2}R_3=o_p(1)$. This completes the whole proof.
\end{proof}

\section{Discussion on the Results in Section \ref{secnon}}

Under the conditions in Theorem \ref{thmnon}, it is well known that the convergence rate for estimating $K(\bx)$ (and also $L(\bx)$, $\psi^*(\bx)$) in the $L_2(P)$ norm (i.e, $\int (\hat K(\bx)-K(\bx))^2P(d\bx)$) is $O_p(\kappa^{-2r_h}+\kappa/n)$; see \cite{newey1997convergence}. Thus, the optimal choice of $\kappa$ that minimizes the rate is $\kappa\asymp n^{1/(2r_h+1)}$. Assume that $r_b=r_h$. With $\kappa\asymp n^{1/(2r_h+1)}$, the conditions $\kappa=o(n^{1/3})$ and $n^{\frac{1}{2(r_b+r_h)}}=o(\kappa)$ always hold as long as $r_h>1$. Recall that from the previous discussion $r_h=s/d$, where $s$ is the smoothness parameter and $d$ is the dimension of $\bX$. Thus under very mild conditions $s>d$, we do not need to under-smooth the estimator.

\begin{remark}
  By the proof of Theorem \ref{thmnon}, we find that when
  $\kappa=o(n^{1/(2r_b+1)})$ and $\kappa=o(n^{1/(2r_h+1)})$ hold, the asymptotic
  bias of the estimator $\tilde{\mu}_{\tilde\bbeta}$ is of order
  $O_p(K^{-(r_b+r_h)})$, which is the product of the approximation
  errors for $\psi^*(\bx)$ and $K(\bx)$ (also $L(\bx)$). Thus, to make
  the bias of the estimator $\tilde{\mu}_{\tilde\bbeta}$
  asymptotically ignorable, we can require either $r_b$ or $r_h$
  sufficiently large (not necessarily both). This phenomenon can be
  viewed as the double robustness property in the nonparametric
  context, which holds for the kernel based doubly robust estimator
  \citep{rothe2013semiparametric} and the targeted maximum likelihood
  estimator \citep{benkeser2016doubly}.  In addition, our estimator
  has smaller asymptotic bias than the usual nonparametric method. For
  simplicity, assume $r_b=r_h=r$. The asymptotic bias of the IPTW
  estimator in \cite{hira:imbe:ridd:03} is generally of order
  $O_p(\kappa^{-r})$, whereas our estimator has a smaller bias of order
  $O_p(\kappa^{-2r})$. 
\end{remark}

\section{Estimation of ATT}\label{app_att}
We consider the estimation of the average treatment effect for the treated (ATT)
$$
\tau^*=\EE(Y_i(1)-Y_i(0)|T_i=1).
$$
Let
$\tau^*_1=\EE(Y_i(1)\mid T_i=1)$ and $\tau^*_0=\EE(Y_i(0)\mid
T_i=1)$. By the law of total probability,
$$
\tau^*_1=\EE(T_iY_i(1)\mid T_i=1)=\EE(T_iY_i(1))/\PP(T_i=1).
$$
Thus, a simple estimator of $\tau^*_1$ is
$$
\hat\tau_1=\frac{\sum_{i=1}^n T_iY_i}{\sum_{i=1}^n T_i}.
$$
To estimate $\tau_0^*$, we notice that
\begin{align*}
\tau_0^*&=\EE[\EE(Y_i(0)\mid T_i=1,\bX_i)\mid T_i=1]=\EE[\EE(Y_i(0)\mid \bX_i)\mid T_i=1]\\
&= \frac{\EE[T_i  \EE(Y_i(0)\mid \bX_i)]} {\PP(T_i=1)}
=\frac{\EE(\pi(\bbeta^{*\top}\bX_i)\EE(Y_i(0)\mid \bX_i))}{\PP(T_i=1)}\\
&=\frac{1}{\PP(T_i=1)}\EE\left\{\frac{\pi(\bbeta^{*\top}\bX_i)(1-T_i)Y_i(0)}{1-\pi(\bbeta^{*\top}\bX_i)}\right\}.
\end{align*}
Similar to the bias and variance calculation for the ATE, we can estimate $\bbeta$ by the solving the following estimating equations
$$
{n}^{-1}\sum_{i=1}^n \left(T_i-\frac{(1-T_i)\pi(\bbeta^\top\bX)}{1-\pi(\bbeta^\top\bX)}\right)\fb(\bX)=0.
$$
Then, we set $\hat\pi_i=\pi(\hat\bbeta^\top\bX_i)$ and estimate $\tau_0$ by
$$
\hat\tau_0=\frac{\sum_{i=1}^n(1-T_i)\hat r_iY_i}{\sum_{i=1}^n (1-T_i) \hat r_i},
$$
where $\hat r_i=\hat\pi_i/(1-\hat\pi_i)$. The final estimator of
the ATT is $\hat\tau=\hat\tau_1-\hat\tau_0$. Similar to the proof of the main results on ATE, we can show that when both models are correct, $n^{1/2}(\hat{\tau}-\tau^*)\rightarrow_d N(0,W)$, where
$$
W \ = \ p^{-2}\EE\left[\pi^*\EE(\epsilon_1^2\mid \bX)+\frac{\pi^{*2}}{1-\pi_i^*}\EE(\epsilon_0^2\mid \bX)+\pi^*(L(\bX_i)-\tau^*)^2\right].
$$
Here, $\epsilon_0=Y(0)-K(\bX)$, $\epsilon_1=Y(1)-K(\bX)-L(\bX)$ and $p=\PP(Y=1)$

\section{Derivation of (\ref{eqmis_cbps}) and  (\ref{eqmis_aipw})}\label{sec_mis_cbps_aipw}

In this appendix, we only provide a sketch of the proof of (\ref{eqmis_cbps}) and  (\ref{eqmis_aipw}), because the detail is very similar to the proof of Theorem~\ref{th:asymnormal_mis}. Recall that as in Section \ref{sec:model.misspecification}, $\bbeta^o$ which satisfies $\EE(\bar\bg_{\bbeta^o}(\bT,\bX))=0$ is the limiting value of $\hat\bbeta$ as in Lemma \ref{lemconsistency}.  In addition, denote $K^o(\bX_i)=\balpha^{*T}\bh_1(\bX_i)+\delta \bA_1\bh_1(\bX_i)$ and $L^o(\bX_i)=\bgamma^{*T}\bh_2(\bX_i)+\delta \bA_2\bh_2(\bX_i)$, where the vectors $\bA_1$ and $\bA_2$ are to be determined. We have the following decomposition 
\begin{align*}
\hat\mu_{\hat\bbeta}-\mu&=\frac{1}{n}\sum_{i=1}^n\Big[\frac{T_i}{\pi_{\beta^o}(\bX_i)}\{Y_i(1)-K^o(\bX_i)-L^o(\bX_i)\}-\frac{1-T_i}{1-\pi_{\beta^o}(\bX_i)}\{Y_i(0)-K^o(\bX_i)\}+L^o(\bX_i)-\mu\Big]\\
&+\frac{1}{n}\sum_{i=1}^n \Big\{\frac{T_i}{\pi_{\hat\beta}(\bX_i)}-\frac{T_i}{\pi_{\beta^o}(\bX_i)}\Big\}\{Y_i(1)-K^o(\bX_i)-L^o(\bX_i)\}\\
&-\frac{1}{n}\sum_{i=1}^n\Big\{\frac{1-T_i}{1-\pi_{\hat\beta}(\bX_i)}-\frac{1-T_i}{1-\pi_{\beta^o}(\bX_i)}\Big\}\{Y_i(0)-K^o(\bX_i)\}:=I_1+I_2+I_3.
\end{align*}
We first consider $I_2$. The mean value theorem implies
$$
I_2=-\frac{1}{n}\sum_{i=1}^n \frac{T_i}{\pi^2_{\tilde\beta}(\bX_i)}\frac{\partial\pi_{\tilde\beta}(\bX_i)}{\partial\beta}\{Y_i(1)-K^o(\bX_i)-L^o(\bX_i)\} (\hat\beta-\beta^o),
$$
where $\tilde\beta$ is an intermediate value between $\hat\beta$ and $\beta^o$. Under Assumptions similar to  \ref{reg_mis}, the dominated convergence theorem implies 
$$
-\frac{1}{n}\sum_{i=1}^n \frac{T_i}{\pi^2_{\tilde\beta}(\bX_i)}\frac{\partial\pi_{\tilde\beta}(\bX_i)}{\partial\beta}\{Y_i(1)-K^o(\bX_i)-L^o(\bX_i)\}=O_p(\delta).
$$
Similar to Lemma \ref{lemAN}, we can show that $\hat\beta-\beta^o=O_p(n^{-1/2})$. The Slutsky theorem yields $I_2=O_p(\delta n^{-1/2})$. The same argument implies that $I_3=O_p(\delta n^{-1/2})$.  Finally, we focus on $I_1$. Note that
$$
I_1-\frac{1}{n}\sum_{i=1}^n\Big[\frac{T_i}{\pi(\bX_i)}\{Y_i(1)-K(\bX_i)-L(\bX_i)\}-\frac{1-T_i}{1-\pi(\bX_i)}\{Y_i(0)-K(\bX_i)\}+L(\bX_i)-\mu\Big]=\frac{1}{n}\sum_{i=1}^n\Delta_i,
$$
where
\begin{align*}
\Delta_i&=\Big\{\frac{T_i}{\pi_{\beta^o}(\bX_i)}-\frac{T_i}{\pi(\bX_i)}\Big\}\{Y_i(1)-K^o(\bX_i)-L^o(\bX_i)\}\\
&-\Big\{\frac{1-T_i}{1-\pi_{\beta^o}(\bX_i)}-\frac{1-T_i}{1-\pi(\bX_i)}\Big\}\{Y_i(0)-K^o(\bX_i)\}\\
&-\frac{T_i}{\pi(\bX_i)}\{K^o(\bX_i)+L^o(\bX_i)-K(\bX_i)-L(\bX_i)\}\\
&+\frac{1-T_i}{1-\pi(\bX_i)}\{K^o(\bX_i)-K(\bX_i)\}+L^o(\bX_i)-L(\bX_i).
\end{align*}
The central limit theorem implies $n^{1/2}(\frac{1}{n}\sum_{i=1}^n\Delta_i-\EE\Delta_i)/sd(\Delta_i)\rightarrow N(0,1)$. In order to derive the order of $\frac{1}{n}\sum_{i=1}^n\Delta_i$, it suffices to compute the $\EE(\Delta_i)$ and $sd(\Delta_i)$. As in the derivation of (\ref{eqbetao}), after some algebra, we similarly obtain
$$
\beta^o-\beta^*=\xi \Tb^{-1}\Mb+O(\xi^2),
$$
where 
$$
\Mb=\left( \begin{array}{c}
	\EE(\frac{1}{1-\pi_{\beta^*}(\bX_i)}u_i^*\bh_1(\bX_i)) \\
	\EE(u_i^*\bh_2(\bX_i) \end{array} \right)
$$
and $\Tb=[\EE(\frac{1}{\pi_{\beta^*}(\bX_i)(1-\pi_{\beta^*}(\bX_i))}\frac{\partial\pi_{\beta^*}(\bX_i)}{\partial\beta}\bh_1^T(\bX_i)),\EE(\frac{1}{\pi_{\beta^*}(\bX_i)}\frac{\partial\pi_{\beta^*}(\bX_i)}{\partial\beta}\bh_2^T(\bX_i))]^T$. 

Denote $\tilde r_1(\bX_i)=r_1(\bX_i)-\bA_1\bh_1(\bX_i)$ and $\tilde r_2(\bX_i)=r_2(\bX_i)-\bA_2\bh_2(\bX_i)$. Note that
\begin{align*}
\EE(\Delta_i)&=\EE\Big\{\frac{\pi(\bX_i)}{\pi_{\beta^o}(\bX_i)}\delta(\tilde r_1(\bX_i)+\tilde r_2(\bX_i))-\frac{1-\pi(\bX_i)}{1-\pi_{\beta^o}(\bX_i)}\delta\tilde r_1(\bX_i)-\delta\tilde r_2(\bX_i)\Big\}\\
&=\EE\Big\{\{1+\xi u_i^*-\frac{1}{\pi_{\beta^*}(\bX_i)}\frac{\partial\pi_{\beta^*}(\bX_i)}{\partial\beta}(\beta^o-\beta^*)\}\delta(\tilde r_1(\bX_i)+\tilde r_2(\bX_i))\\
&-\{1-\frac{\pi_{\beta^*}(\bX_i)}{1-\pi_{\beta^*}(\bX_i)}\xi u_i^*+\frac{1}{1-\pi_{\beta^*}(\bX_i)}\frac{\partial\pi_{\beta^*}(\bX_i)}{\partial\beta}(\beta^o-\beta^*)\}\delta\tilde r_1(\bX_i)-\delta\tilde r_2(\bX_i))\Big\}+O(\xi^2\delta)\\
&=\xi\delta\EE\Big[\{\frac{1}{1-\pi_{\beta^*}(\bX_i)} u_i^*-\frac{1}{(1-\pi_{\beta^*}(\bX_i))\pi_{\beta^*}(\bX_i)}\frac{\partial\pi_{\beta^*}(\bX_i)}{\partial\beta}\Tb^{-1}\Mb\}\tilde r_1(\bX_i)\Big]\\
&+\xi\delta\EE\Big[\{ u_i^*-\frac{1}{\pi_{\beta^*}(\bX_i)}\frac{\partial\pi_{\beta^*}(\bX_i)}{\partial\beta}\Tb^{-1}\Mb\}\tilde r_2(\bX_i)\Big]+O(\xi^2\delta).
\end{align*}
Assume that at least one entry of $\EE\Big[\{\frac{1}{1-\pi_{\beta^*}(\bX_i)} u_i^*-\frac{1}{(1-\pi_{\beta^*}(\bX_i))\pi_{\beta^*}(\bX_i)}\frac{\partial\pi_{\beta^*}(\bX_i)}{\partial\beta}\Tb^{-1}\Mb\}\bh_1(\bX_i)\Big]$ is nonzero. Then, there exists $\bA_1$ such that 
\begin{align*}
&\bA_1\EE\Big[\{\frac{1}{1-\pi_{\beta^*}(\bX_i)} u_i^*-\frac{1}{(1-\pi_{\beta^*}(\bX_i))\pi_{\beta^*}(\bX_i)}\frac{\partial\pi_{\beta^*}(\bX_i)}{\partial\beta}\Tb^{-1}\Mb\}\bh_1(\bX_i)\Big]\\
&=\EE\Big[\{\frac{1}{1-\pi_{\beta^*}(\bX_i)} u_i^*-\frac{1}{(1-\pi_{\beta^*}(\bX_i))\pi_{\beta^*}(\bX_i)}\frac{\partial\pi_{\beta^*}(\bX_i)}{\partial\beta}\Tb^{-1}\Mb\} r_1(\bX_i)\Big],
\end{align*}
which implies 
$$
\EE\Big[\{\frac{1}{1-\pi_{\beta^*}(\bX_i)} u_i^*-\frac{1}{(1-\pi_{\beta^*}(\bX_i))\pi_{\beta^*}(\bX_i)}\frac{\partial\pi_{\beta^*}(\bX_i)}{\partial\beta}\Tb^{-1}\Mb\}\tilde r_1(\bX_i)\Big]=0.
$$
Similarly, by choosing a proper $\bA_2$, we have
$$
\EE\Big[\{ u_i^*-\frac{1}{\pi_{\beta^*}(\bX_i)}\frac{\partial\pi_{\beta^*}(\bX_i)}{\partial\beta}\Tb^{-1}\Mb\}\tilde r_2(\bX_i)\Big]=0.
$$
As a result, we obtain $\EE(\Delta_i)=O(\xi^2\delta)$. Finally, after some tedious calculation, we can show that $sd(\Delta_i)=O(\xi+\delta)$. This implies $\frac{1}{n}\sum_{i=1}^n\Delta_i=O_p(\xi^2\delta+\xi n^{-1/2}+\delta n^{-1/2})$. This completes the proof of (\ref{eqmis_cbps}). The proof of (\ref{eqmis_aipw}) follows from the similar argument and we omit the details. 

\section{Asymptotic Variance Formulas Used for Simulations}\label{asy_var_formulas}
In this appendix, we present the asymptotic variance formulas used for constructing the 95\% confidence intervals for calculating the coverage probabilities in the simulations in Section~\ref{sec:simulations}. In particular, for a generic estimator $\hat\mu$, the 95\% confidence interval is $(\hat\mu-1.96*\hat \sigma, \hat\mu+1.96*\hat \sigma)$, where $\hat\sigma^2$ is the estimate of the asymptotic variance of $\sqrt{n}(\hat\mu-\mu)$. 

For the {\sf True} estimator, the asymptotic variance formula is similar to the one given in Section~\ref{sec:model.misspecification} and is as follows:
\begin{eqnarray*}
& \Sigma_{\mu_0}  =   \Var\bigl(\mu_{\bbeta_0}(T_i, Y_i, \bX_i) \bigr) \ = \ \mathbb{E}\biggl(\frac{Y_i(1)^2}{\pi_{\bbeta_0}(\bX_i)} + \frac{Y_i(0)^2}{1-\pi_{\bbeta_0}(\bX_i)}-(\mathbb{E}(Y_i(1))-\mathbb{E}(Y_i(0)))^2\biggr).
\end{eqnarray*}

For the {\sf GLM} estimator, the asymptotic variance formula is as follows:
\begin{eqnarray*}
& \Sigma_{\textrm{GLM}}  =   \Sigma_{\mu_0} - \bH_y^\top \bI^{-1} \bH_y
\end{eqnarray*}
where $\Sigma_{\mu_0}$ is defined like before, $\bI$ is the Fisher Information Matrix, and 
\begin{eqnarray*}
\bH_y  & = &  -\EE\left(\frac{K(\bX_i)+(1-\pi_{\bbeta_0}(\bX_i))L(\bX_i)}{\pi_{\bbeta_0}(\bX_i) (1-\pi_{\bbeta_0}(\bX_i))}\cdot \frac{\partial \pi_{\bbeta_0}(\bX_i)}{\partial\bbeta}\right).
\end{eqnarray*}
Since the second term is positive definite, $ \Sigma_{GLM} < \Sigma_{\mu} $ and thus the variance decreases. 

The {\sf GAM} estimator achieves the semiparametric efficiency bound \citep{hira:imbe:ridd:03} and so we can use $V_{\textrm{opt}}$ given in (\ref{eqinforbound1}) as the asymptotic variance formula. The {\sf CBPS} estimator has the following asymptotic variance formula:
\begin{align*}
\Sigma_{\textrm{CBPS}} ~~ = ~~ \Sigma_{\mu_0} ~~ + ~~ & \bH_y^\top(\bH_{\fb}^\top\bOmega^{-1}\bH_{\fb})^{-1}\bH_y \\
- ~~ &2\bH_y^\top(\bH_{\fb}^\top\bOmega^{-1}\bH_{\fb})^{-1}\bH_{\fb}^\top\bOmega^{-1}\Cov(\mu_{\bbeta_0}(T_i, Y_i, \bX_i),\bg_{\bbeta_0}(T_i, \bX_i))
\end{align*}
where $\Sigma_{\mu_0}$ and $\bH_y$ are defined like before, and we have: 
\begin{eqnarray*}
\bH_{\fb}  & = &  -\EE \left(\frac{\fb(\bX_i)}{\pi_{\bbeta_0}(\bX_i)(1-\pi_{\bbeta_0}(\bX_i))}\left(\frac{\partial \pi_{\bbeta_0}(\bX_i)}{\partial \bbeta}\right)^{\top}\right)\\
\bOmega & = & \Var(\bg_{\bbeta_0}(T_i, \bX_i))\\
\bg_{\bbeta_0}(T_i,\bX_i) & =  & \ \left(\frac{T_i}{\pi_{\bbeta_0}(\bX_i)}-\frac{1-T_i}{1-\pi_{\bbeta_0}(\bX_i)} \right)\fb(\bX_i)\\
\mu_{\bbeta_0}(T_i, Y_i, \bX_i) & =  & \frac{T_iY_i}{\pi_{\bbeta_0}(\bX_i)}-\frac{(1-T_i)Y_i}{1-\pi_{\bbeta_0}(\bX_i)}.
\end{eqnarray*}

The asymptotic variance for the {\sf DR} estimator is automatically computed in the R package {\sf drtmle}
and the confidence interval was constructed accordingly. 

Finally, we note that when we estimate the asymptotic variances, we simply replace the quantities $\pi_{\beta_0}$ and $K(\bX)$ and $L(\bX)$ with their estimates and replace the expectation with the sample average. To save space, we do not repeat the formulas of the estimated variances.

\end{document}